\DeclareMathAlphabet{\mathpzc}{OT1}{pzc}{m}{it}
\begin{document}

\theoremstyle{plain}
\newtheorem{theorem}{Theorem}[section]
\newtheorem{lemma}[theorem]{Lemma}
\newtheorem{proposition}[theorem]{Proposition}
\newtheorem{claim}[theorem]{Claim}
\newtheorem{corollary}[theorem]{Corollary}
\newtheorem{axiom}{Axiom}

\theoremstyle{definition}
\newtheorem{remark}[theorem]{Remark}
\newtheorem{note}{Note}[section]
\newtheorem{definition}[theorem]{Definition}
\newtheorem{example}[theorem]{Example}
\newtheorem*{ackn}{Acknowledgements}
\newtheorem{assumption}{Assumption}
\newtheorem{approach}{Approach}
\newtheorem{critique}{Critique}
\newtheorem{question}{Question}
\newtheorem{aim}{Aim}
\newtheorem*{assucd}{Assumption ($\mathbb{CD}$)}
\newtheorem*{asa}{Assumption ($\mathbb{A}$)}
\newtheorem*{appS}{Approximation ($\mathbb{S}$)}
\newtheorem*{appBS}{Approximation ($\mathbb{BS}$)}
\renewcommand{\theequation}{\thesection.\arabic{equation}}
\numberwithin{equation}{section}

\renewcommand{\thefigure}{\thesection.\arabic{figure}}
\numberwithin{equation}{section}

\newcommand{\Law}{\ensuremath{\mathop{\mathrm{Law}}}}
\newcommand{\loc}{{\mathrm{loc}}}

\let\SETMINUS\setminus
\renewcommand{\setminus}{\backslash}

\def\stackrelboth#1#2#3{\mathrel{\mathop{#2}\limits^{#1}_{#3}}}

\makeatletter
\def\Ddots{\mathinner{\mkern1mu\raise\p@
\vbox{\kern7\p@\hbox{.}}\mkern2mu
\raise4\p@\hbox{.}\mkern2mu\raise7\p@\hbox{.}\mkern1mu}}
\makeatother

\newcommand\llambda{{\mathchoice
     {\lambda\mkern-4.5mu{\raisebox{.4ex}{\scriptsize$\backslash$}}}
     {\lambda\mkern-4.83mu{\raisebox{.4ex}{\scriptsize$\backslash$}}}
     {\lambda\mkern-4.5mu{\raisebox{.2ex}
{\footnotesize$\scriptscriptstyle\backslash$}}}
     {\lambda\mkern-5.0mu{\raisebox{.2ex}
{\tiny$\scriptscriptstyle\backslash$}}}}}

\newcommand{\prozess}[1][L]{{\ensuremath{#1=(#1_t)_{t\in[0,T]}}}\xspace}
\newcommand{\prazess}[1][L]{{\ensuremath{#1=(#1_t)_{t\ge0}}}\xspace}
\newcommand{\pt}[1][N]{\ensuremath{\P_{#1}}\xspace}
\newcommand{\tk}[1][N]{\ensuremath{T_{#1}}\xspace}
\newcommand{\dd}[1][]{\ensuremath{\ud{#1}}\xspace}

\newcommand{\scal}[2]{\ensuremath{\langle #1, #2 \rangle}}
\newcommand{\bscal}[2]{\ensuremath{\big\langle #1, #2 \big\rangle}}
\newcommand{\set}[1]{\ensuremath{\left\{#1\right\}}}

\def\lev{L\'{e}vy\xspace}
\def\lk{L\'{e}vy--Khintchine\xspace}
\def\lib{LIBOR\xspace}
\def\mg{martingale\xspace}
\def\smmg{semimartingale\xspace}
\def\alm{affine LIBOR model\xspace}
\def\alms{affine LIBOR models\xspace}
\def\dalms{defaultable \alms}
\def\ap{affine process\xspace}
\def\aps{affine processes\xspace}

\def\half{\frac{1}{2}}

\def\F{\ensuremath{\mathcal{F}}}
\def\bD{\mathbf{D}}
\def\bF{\mathbf{F}}
\def\bG{\mathbf{G}}
\def\bH{\mathbf{H}}
\def\R{\ensuremath{\mathbb{R}}}
\def\Rp{\mathbb{R}_{\geqslant0}}
\def\Rm{\mathbb{R}_{\leqslant 0}}
\def\C{\ensuremath{\mathbb{C}}}
\def\U{\ensuremath{\mathcal{U}}}
\def\I{\mathcal{I}}
\def\N{\mathbb{N}}

\def\P{\ensuremath{\mathrm{I\kern-.2em P}}}
\def\Q{\mathbb{Q}}
\def\E{\ensuremath{\mathrm{I\kern-.2em E}}}

\def\hP{\ensuremath{\widehat{\mathrm{I\kern-.2em P}}}}
\def\hE{\ensuremath{\widehat{\mathrm{I\kern-.2em E}}}}

\def\bP{\ensuremath{\overline{\mathrm{I\kern-.2em P}}}}
\def\bE{\ensuremath{\overline{\mathrm{I\kern-.2em E}}}}

\def\bphi{\overline{\phi}}
\def\bpsi{\overline{\psi}}

\def\ott{{0\leq t\leq T}}
\def\idd{{1\le i\le d}}

\def\icc{\mathpzc{i}}
\def\ecc{\mathbf{e}_\mathpzc{i}}

\def\uk{u_{k+1}}
\def\vk{v_{k+1}}

\def\e{\mathrm{e}}
\def\ud{\ensuremath{\mathrm{d}}}
\def\dt{\ud t}
\def\ds{\ud s}
\def\dx{\ud x}
\def\dy{\ud y}
\def\dv{\ud v}
\def\dw{\ud w}
\def\dz{\ud z}
\def\dsdx{\ensuremath{(\ud s, \ud x)}}
\def\dtdx{\ensuremath{(\ud t, \ud x)}}

\def\lsnc{\ensuremath{\mathrm{LSNC-}\chi^2}}
\def\nc{\ensuremath{\mathrm{NC-}\chi^2}}

\def\red{\color{red}}
\def\blue{\color{blue}}
\newcommand{\cD}{{\mathcal{D}}}
\newcommand{\cF}{{\mathcal{F}}}
\newcommand{\cG}{{\mathcal{G}}}
\newcommand{\cH}{{\mathcal{H}}}
\newcommand{\cK}{{\mathcal{K}}}
\newcommand{\cM}{{\mathcal{M}}}
\newcommand{\cT}{{\mathcal{T}}}
\newcommand{\ha}{{\mathbb{H}}}
\newcommand{\indik}{{\mathbf{1}}}
\newcommand{\ifdefault}[1]{\ensuremath{\mathbf{1}_{\{\tau \leq #1\}}}}
\newcommand{\ifnodefault}[1]{\ensuremath{\mathbf{1}_{\{\tau > #1\}}}}

\newcommand\bovermat[2]{%
  \makebox[0pt][l]{$\smash{\overbrace{\phantom{%
    \begin{matrix}#2\end{matrix}}}^{#1}}$}#2}
\newcommand\bundermat[2]{%
  \makebox[0pt][l]{$\smash{\underbrace{\phantom{%
    \begin{matrix}#2\end{matrix}}}_{#1}}$}#2}
\makeatletter


\title[Affine LIBOR models with multiple curves]
    {Affine LIBOR models with multiple curves: theory, examples and calibration}

\author[Z. Grbac]{Zorana Grbac}
\author[A. Papapantoleon]{Antonis Papapantoleon}
\author[J. Schoenmakers]{John Schoenmakers}
\author[D. Skovmand]{David Skovmand}

\address{Laboratoire de Probabilit{\'e}s et Mod\`eles Al{\'e}atoires, 
	 Universit{\'e} Paris Diderot, 75205 Paris Cedex 13, France}
\email{grbac@math.univ-paris-diderot.fr}

\address{Institute of Mathematics, TU Berlin, Stra\ss e des 17. Juni 136,
         10623 Berlin, Germany}
\email{papapan@math.tu-berlin.de}

\address{Weierstrass Institute for Applied Analysis and Stochastics,
	 Mohrenstrasse 39, 10117 Berlin, Germany}
\email{schoenma@wias-berlin.de}

\address{Department of Finance, Copenhagen Business School, Solbjerg Plads 3, 
	 2000 Frederiksberg, Denmark}
\email{dgs.fi@cbs.dk}

\thanks{We are grateful to Fabio Mercurio and Steven Shreve for valuable 
discussions and suggestions. We also thank seminar participants at Cass 
Business School, Imperial College London, the London Mathematical Finance 
Seminar, Carnegie Mellon University and the University of Padova for their 
comments. All authors gratefully acknowledge the financial support from the DFG 
Research Center {\sc Matheon}, Projects E5 and E13.}

\keywords{Multiple curve models, LIBOR, OIS, basis spread, affine LIBOR models, 
caps, swaptions, basis swaptions, calibration}
\subjclass[2010]{91G30, 91G20, 60G44}

\begin{abstract}
We introduce a multiple curve framework that combines tract\-able dynamics and 
semi-analytic pricing formulas with positive interest rates and basis spreads. 
Negatives rates and positive spreads can also be accommodated in this framework.
The dynamics of OIS and LIBOR rates are specified following the methodology of 
the affine LIBOR models and are driven by the wide and flexible class of affine 
processes. The affine property is preserved under forward measures, which allows 
us to derive Fourier pricing formulas for caps, swaptions and basis swaptions. A 
model specification with dependent LIBOR rates is developed, that allows for an 
efficient and accurate calibration to a system of caplet prices.
\end{abstract}

\date{}\maketitle\pagestyle{myheadings}\frenchspacing

\section{Introduction}

The recent financial crisis has led to paradigm shifting events in interest
rate markets because substantial spreads have appeared between rates that used
to be closely matched; see Figure \ref{figure:SpreadHistorical} for an
illustration. We can observe, for example, that before the credit crunch the
spread between the three month LIBOR and the corresponding Overnight Indexed 
Swap (OIS) rate was non-zero, however it could be safely disregarded as 
negligible. The same is true for the three month vs six month basis swap spread.
However, since August 2007 these spreads have been evolving randomly over time,
are substantially too large to be neglected, and also depend on the tenor
length. Therefore, the assumption of a single interest rate curve that could be
used both for discounting and for generating future cash flows was seriously
challenged, which led to the introduction of the so-called \textit{multiple
curve} interest rate models.	

In the multiple curve framework, one curve is used for discounting purposes,
where the usual choice is the OIS curve, and then as many LIBOR curves as market
tenors (e.g. 1m, 3m, 6m and 1y) are built for generating future cash flows. The
difference between the OIS and each LIBOR rate is usually called basis spread or
simply basis. There are several ways of modeling the curves and different
definitions of the spread. One approach is to model the OIS and LIBOR rates
directly which leads to tractable pricing formulas, but the sign of the spread
is more difficult to control and may become negative. Another approach is to
model the OIS and the spread directly and infer the dynamics of the LIBOR; this
grants the positivity of the spread, but pricing formulas are generally less
tractable. We refer to \citet[pp.~11-12]{Mercurio_2010a} for a detailed
discussion of the advantages and disadvantages of each approach. Moreover, there
exist various definitions of the spread: an additive spread is used e.g. by
\cite{Mercurio_2010}, a multiplicative spread was proposed by
\citet{Henrard_2010}, while an instantaneous spread was used by
\cite{Andersen_Piterbarg_2010}; we refer to \cite{Mercurio_Xie_2012} for a
discussion of the merits of each definition.

The literature on multiple curve models is growing rapidly and the different
models proposed can be classified in one of the categories described above.
Moreover, depending on the modeling approach, one can also distinguish between
short rate models, Heath--Jarrow--Morton (HJM) models and LIBOR market models
(LMM) with multiple curves. The spreads appearing as modeling quantities in the
short rate and the HJM models are, by the very nature of these models,
instantaneous and given in additive form. We refer to 
\citet{Bianchetti_Morini_2013} for a detailed overview of several multiple curve
models. In the short rate framework, we mention \citet{Kenyon_2010},
\cite*{Kijima_Tanaka_Wong_2009} and \citet{Morino_Runggaldier_2014}, where the
additive short rate spread is modeled, which leads to multiplicative adjustments
for interest rate derivative prices. HJM-type models have been proposed e.g. by
\cite*{Fujii_Shimada_Takahashi_2011}, \cite*{Crepey_Grbac_Nguyen_2011}, 
\citet{Moreni_Pallavicini_2014}, \cite*{Crepey_Grbac_Ngor_Skovmand_2014} and 
\cite*{Cuchiero_Fontana_Gnoatto_2014}. The models by \citet{Mercurio_2009a}, 
\citet{Bianchetti_2010} (where an analogy with the cross-currency market has 
been exploited) and \citet{Henrard_2010} are developed in the LMM setup, while 
multiple curve extensions of the \citet{Flesaker_Hughston_1996} framework have 
been proposed in \citet{Ngyen_Seifried_2015} and 
\cite*{Crepey_Macrina_Ngyen_Skovmand_2015}. Typically, multiple curve models 
address the issue of different interest rate curves under the same currency, 
however, the paper by \cite{Fujii_Shimada_Takahashi_2011} studies a multiple 
curve model in a cross-currency setup. \citet{Filipovic_Trolle_2011} offer a 
thorough econometric analysis of the multiple curve phenomena and decompose the 
spread into a credit risk and a liquidity risk component. In recent work, 
\cite*{Gallitschke_Mueller_Seifried_2014} construct a structural model for
interbank rates, which provides an endogenous explanation for the emergence of
basis spreads.

Another important change due to the crisis is the emergence of significant 
counterparty risk in financial markets. In this paper, we consider the 
\textit{clean valuation} of interest rate derivatives meaning that we do not 
take into account the default risk of the counterparties involved in a contract. 
As explained in \cite{Crepey_Grbac_Ngor_Skovmand_2014} and 
\citet{Morino_Runggaldier_2014}, this is sufficient for calibration to market 
data which correspond to fully collateralized contracts. The price adjustments 
due to counterparty and funding risk for two particular counterparties can then 
be obtained on top of the clean prices, cf. 
\citet{Crepey_Grbac_Ngor_Skovmand_2014}, 
\cite{Crepey_Macrina_Ngyen_Skovmand_2015}, and in particular
\cite{Papapantoleon_Wardenga_2015} for computations in affine LIBOR models.

Let us also mention that there exist various other frameworks in the literature
where different curves have been modeled simultaneously, for example when
dealing with cross-currency markets (cf. e.g. \citealt{Amin_Jarrow_1991}) or
when considering credit risk (cf. e.g. the book by
\citealt{BieleckiRutkowski02}). The models in the multiple curve world often
draw inspiration from these frameworks.

\begin{figure}
 \centering
  \includegraphics[width=12.5cm]{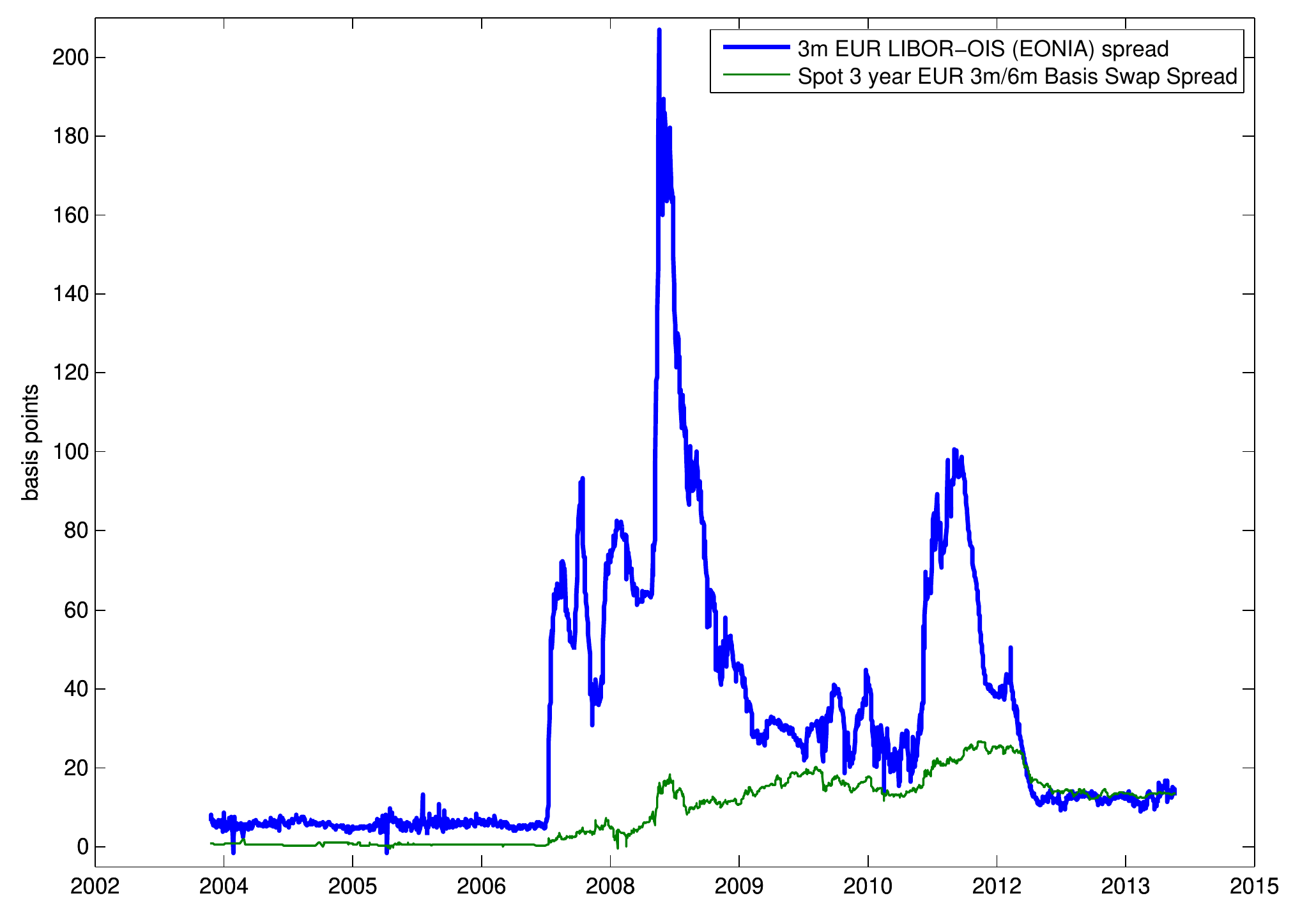}
 \vspace{-1.em}
 \caption{Spread development from January 2004 to April 2014}
 \label{figure:SpreadHistorical}
\end{figure}

The aim of this paper is to develop  a multiple curve LIBOR model that combines
tractable model dynamics and semi-analytic pricing formulas with positive
interest rates and basis spreads. The framework of the affine LIBOR models
proposed by \citet*{KellerResselPapapantoleonTeichmann09} turns out to be
tailor-made for this task, since it allows us to model directly LIBOR rates that
are greater than their OIS counterparts. In other words, the non-negativity of
spreads is automatically ensured. Simultaneously, the dynamics are driven by the
wide and flexible class of affine processes. Similar to the single curve case,
the affine property is preserved under all forward measures, which leads to
semi-analytical pricing formulas for liquid interest rate derivatives. In
particular, the pricing of caplets is as easy as in the single curve setup,
while the model structure allows to derive efficient and accurate approximations
for the pricing of swaptions and basis swaptions using a linearization of the
exercise boundary. In addition, the model offers adequate calibration results to
a system of caplet prices for various strikes and maturities.

The paper is organized as follows: in Section \ref{alm} we review the main
properties of affine processes and the construction of ordered martingales
greater than one. Section \ref{multiple-model} introduces the multiple curve
interest rate setting. The multiple curve affine LIBOR model is presented in
Section \ref{mcalm} and its main properties are discussed, in particular the
ability to produce positive rates and spreads and the analytical tractability
(i.e. the preservation of the affine property). A model that allows for negative 
interest rates and positive spreads is also presented. In Section \ref{alm-lmm} 
we study the connection between the class of affine LIBOR models and the class 
of LIBOR market models (driven by semimartingales).  Sections \ref{caps} and
\ref{val:swaptions} are devoted to the valuation of the most liquid interest
rate derivatives such as swaps, caps, swaptions and basis swaptions. In Section
\ref{calibration} we construct a multiple curve affine LIBOR model where rates
are driven by common and idiosyncratic factors and calibrate this to market
data. Moreover, we test numerically the accuracy of the swaption and basis 
swaption approximation formulas. Section \ref{epilogue} contains some concluding 
remarks and comments on future research. Finally, Appendix \ref{app-corr} 
provides an explicit formula for the terminal correlation between LIBOR rates.
\section{Affine processes}
\label{alm}

This section provides a brief review of the main properties of affine processes 
and the construction of ordered martingales greater than one. More details and
proofs can be found in \citet{KellerResselPapapantoleonTeichmann09} and the
references therein.

Let $(\Omega,\F,\bF,\P)$ denote a complete stochastic basis, where
$\bF=(\F_t)_{t\in[0,T]}$ and $T$ denotes some finite time horizon.  Consider
a stochastic process $X$ satisfying:
\begin{asa}\label{assumption-affine}
Let \prozess[X] be a conservative, time-homogeneous, stochastically
continuous Markov process with values in $D=\Rp^d$, and
$(\P_\mathrm{x})_{\mathrm{x}\in D}$ a family of probability measures on
$(\Omega,\F)$, such that $X_0=\mathrm{x}$, $\P_\mathrm{x}$-almost surely for
every $\mathrm{x}\in D$. Setting
\begin{align}
\label{eq:I_T}
\mathcal{I}_T
:= \set{u\in\R^d: \E_\mathrm{x}\big[\e^{\scal{u}{X_T}}\big] < \infty,
        \,\,\text{for all}\; \textrm{x} \in D},
\end{align}
we assume that
\begin{itemize}
\item[(i)] $0 \in \I_T^\circ$, where $\I_T^\circ$ denotes the interior of
           $\I_T$ (with respect to the topology induced by the Euclidean norm on $\R^d$);
\item[(ii)] the conditional moment generating function of $X_t$ under
	    $\P_\mathrm{x}$ has expo\-nentially-affine dependence on 
	    $\mathrm{x}$;
	    that is, there exist deterministic functions $\phi_t(u):[0,T]\times\I_T\to\R$ and
            $\psi_t(u):[0,T]\times\I_T\to\R^d$ such that
\begin{align}\label{affine-def}
\E_\mathrm{x}\big[\exp\langle u,X_t\rangle\big]
 = \exp\big( \phi_t(u) + \langle\psi_t(u),\mathrm{x}\rangle \big),
\end{align}
for all $(t,u,\mathrm{x}) \in [0,T] \times \I_T \times D$.
\end{itemize}
\end{asa}
\noindent Here $\langle\cdot,\cdot\rangle$ denotes the inner product on $\R^d$
and $\E_\mathrm{x}$ the expectation with respect to $\P_\mathrm{x}$. Moreover, 
it holds that $\I_T\subseteq \I_t$ for $t\leq T$, cf. 
\citet[Theorem 2.14]{KellerResselMayerhofer15}. In other words, if $u \in \R^d$ 
is such that $ \E_\mathrm{x}\big[\e^{\scal{u}{X_T}}\big] < \infty$, then $ 
\E_\mathrm{x}\big[\e^{\scal{u}{X_t}}\big] < \infty$ for every $t\leq T$.

The functions $\phi$ and $\psi$ satisfy the following system of ODEs, known as
\textit{generalized Riccati equations}
\begin{subequations}\label{Riccati}
\begin{align}
\frac{\partial}{\partial t}\phi_t(u)
 &= F(\psi_t(u)),  \qquad \phi_0(u)=0, \label{Ric-1}\\
\frac{\partial}{\partial t}\psi_t(u)
 &= R(\psi_t(u)),  \qquad \psi_0(u)=u, \label{Ric-2}
\end{align}
\end{subequations}
for $(t,u)\in [0,T] \times \I_T$. The functions $F$ and $R$ are of \lk form:
\begin{subequations}\label{F-R-def}
\begin{align}
F(u) &= \langle b,u\rangle +
     \int_D\big(\e^{\langle\xi,u\rangle}-1\big)m(\ud \xi),\\
R_i(u) &= \langle \beta_i,u\rangle
       + \Big\langle\frac{\alpha_i}2u,u\Big\rangle
       + \int_D\big(\e^{\langle\xi,u\rangle}-1-\langle
          u,h_i(\xi)\rangle\big)\mu_i(\ud \xi),
\end{align}
\end{subequations}
where $(b,m,\alpha_i,\beta_i,\mu_i)_{1\le i\le d}$ are \textit{admissible
parameters} and $h_i:\Rp^d\to\R^d$ are suitable truncation functions.
The functions $\phi$ and $\psi$ also satisfy the semi-flow equations
\begin{subequations}\label{flow}
\begin{align}
\phi_{t+s}(u) &= \phi_{t}(u)+\phi_{s}(\psi_t(u))\\
\psi_{t+s}(u) &= \psi_{s}(\psi_{t}(u))
\end{align}
\end{subequations}
for all $0 \le t+s \le T$ and $u \in \I_T$, with initial condition
\begin{align}\label{phi-psi-0}
 \phi_0(u)=0
  \quad\text{ and }\quad
 \psi_0(u)=u.
\end{align}
We refer to \citet*{DuffieFilipovicSchachermayer03} for all the details.

The following definition will be used in the sequel, where 
$\mathbf{1}:=(1,1,\dots,1)$. 

\begin{definition}\label{def-gamma}
Let $X$ be a process satisfying Assumption $(\mathbb{A})$. Define
\begin{align}
\gamma_X \,
 := \sup_{u\in\I_T\cap\R^d_{>0}} \E_{\mathbf{1}}\big[\e^{\scal{u}{X_T}}\big].
\end{align}
\end{definition}

\noindent The quantity $\gamma_X$ measures the ability of an affine process to 
fit the initial term structure of interest rates and equals infinity for several 
models used in mathematical finance, such as the CIR process and OU models 
driven by subordinators; cf.
\citet[\S8]{KellerResselPapapantoleonTeichmann09}.

An essential ingredient in affine LIBOR models is the construction of 
parame\-trized martingales which are greater than or equal to one and increasing 
in this parameter (see also \citealt{Papapantoleon10b}).

\begin{lemma}\label{ord-mart}
Consider an affine process $X$ satisfying Assumption ($\mathbb{A}$) and let
$u\in\mathcal{I}_T\cap\Rp^d$. Then the process \prozess[M^u] with
\begin{align}
M^u_t &= \exp\big( \phi_{T-t}(u) + \scal{\psi_{T-t}(u)}{X_t} \big),
\end{align}
is a martingale, greater than or equal to one, and the mapping $u\mapsto M^u_t$ 
is increasing, for every $t\in[0,T]$.
\end{lemma}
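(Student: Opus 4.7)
The plan is to derive the three properties in order, building on the affine formula \eqref{affine-def}, the initial conditions \eqref{phi-psi-0}, and the semi-flow identities \eqref{flow}.

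\textbf{Martingale property.} Fix $0\le s\le t\le T$. By the time-homogeneous Markov property of $X$ and the affine formula \eqref{affine-def} applied at time $t-s$ with argument $\psi_{T-t}(u)$,
\begin{align*}
\E\big[M^u_t\,\big|\,\F_s\big]
 &= \exp(\phi_{T-t}(u))\cdot\E_{X_s}\!\big[\exp\scal{\psi_{T-t}(u)}{X_{t-s}}\big] \\
 &= \exp\!\big(\phi_{T-t}(u)+\phi_{t-s}(\psi_{T-t}(u))+\scal{\psi_{t-s}(\psi_{T-t}(u))}{X_s}\big).
\end{align*}
The semi-flow relations \eqref{flow}, read with time increments $T-t$ and $t-s$, collapse the composition: $\phi_{T-t}(u)+\phi_{t-s}(\psi_{T-t}(u))=\phi_{T-s}(u)$ and $\psi_{t-s}(\psi_{T-t}(u))=\psi_{T-s}(u)$, so the right-hand side equals $M^u_s$.

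\textbf{Lower bound and monotonicity.} By \eqref{phi-psi-0}, the terminal value collapses to $M^u_T=\exp\scal{u}{X_T}$. Since $u\in\Rp^d$ and $X_T$ takes values in $D=\Rp^d$, the exponent is nonnegative and $M^u_T\ge 1$. The martingale property just established then propagates this bound backwards: $M^u_t=\E[M^u_T\mid\F_t]\ge 1$ for every $t\in[0,T]$. For monotonicity in $u$, if $u\le v$ coordinatewise with both vectors in $\I_T\cap\Rp^d$, then $M^u_T\le M^v_T$ pointwise on $\Omega$, and monotonicity of conditional expectation yields $M^u_t\le M^v_t$ for every $t$.

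\textbf{Main obstacle.} The delicate step is the integrability needed to invoke \eqref{affine-def} at time $t-s$ with argument $\psi_{T-t}(u)$, i.e.\ the inclusion $\psi_{T-t}(u)\in\I_{t-s}$. I expect to deduce this from the inclusion $\I_T\subseteq\I_{T-s}$ noted after Assumption $(\mathbb{A})$ combined with the tower property: conditioning $\exp\scal{u}{X_T}$ first on $\F_t$ and then on $\F_s$ and using \eqref{affine-def} at time $T-t$ shows that $\E_{X_s}[\exp\scal{\psi_{T-t}(u)}{X_{t-s}}]$ is $\P_{\mathrm{x}}$-a.s.\ finite, which is precisely the required statement and simultaneously justifies the finiteness of $\phi_{t-s}(\psi_{T-t}(u))$ used in the semi-flow collapse. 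Once this integrability is in place, the rest is routine and the three claims follow as above.
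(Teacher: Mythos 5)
Your proposal is correct, and in substance it ends up resting on the same key identity as the paper, but you package the martingale step differently. The paper's entire proof is the single observation that, by the Markov property, the affine formula \eqref{affine-def} applied over the interval $[t,T]$ with argument $u$, and the tower property, one has $M^u_t=\E\big[\e^{\scal{u}{X_T}}\,\big|\,\F_t\big]$; since $\e^{\scal{u}{X_T}}$ is integrable (because $u\in\I_T$) and $\geq 1$ (because $u\in\Rp^d$ and $X_T\in\Rp^d$), the process is automatically a closed martingale, bounded below by one, and ordered in $u$ by monotonicity of conditional expectation. You instead verify $\E[M^u_t\mid\F_s]=M^u_s$ by a direct computation with the semi-flow equations \eqref{flow}, which forces you to confront the integrability of $\exp\scal{\psi_{T-t}(u)}{X_{t-s}}$ (equivalently, $\psi_{T-t}(u)\in\I_{t-s}$) as a separate obstacle. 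Your resolution of that obstacle --- conditioning $\e^{\scal{u}{X_T}}$ first on $\F_t$ and then on $\F_s$ --- is precisely the paper's argument, and once it is in place it already delivers the martingale property via the tower property, so the semi-flow computation becomes redundant (it also only gives finiteness $\P_{\mathrm{x}}$-a.s.\ along the law of $X_s$ rather than for every starting point, a gap the closed-martingale formulation never opens). Your treatment of the lower bound and of the ordering is identical to the paper's. Both routes are valid; the paper's is shorter and sidesteps the integrability subtlety entirely, while yours has the minor virtue of exhibiting the semi-flow identities at work.
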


\begin{proof}
Consider the random variable $Y^u_T:=\e^{\scal{u}{X_T}}$. Since
$u\in\mathcal{I}_T\cap\Rp^d$ we have that $Y^u_T$ is greater than one and
integrable. Then, from the Markov property of $X$, \eqref{affine-def} and the
tower property of conditional expectations we deduce that
\begin{align}\label{Pn-martingales}
M^u_t = \E\big[\e^{\scal{u}{X_T}}|\F_t\big]
           = \exp\big( \phi_{T-t}(u) + \scal{\psi_{T-t}(u)}{X_t} \big),
\end{align}
is a martingale. Moreover, it is obvious that $M^u_t\ge1$ for all $t\in[0,T]$,
while the ordering
\begin{align}\label{M-order}
 u\le v \,\,\Longrightarrow\,\, M_t^u\le M_t^v
\qquad \forall t\in[0,T],
\end{align}
follows from the ordering of $Y_T^u$ and the representation
$M^u_t=\E[Y_T^u|\F_t]$.
\end{proof}
\section{A multiple curve LIBOR setting}
\label{multiple-model}

We begin by introducing the notation and the main concepts of multiple curve 
\lib models. We will follow the approach introduced in \citet{Mercurio_2010}, 
which has become the industry standard in the meantime.

The fact that LIBOR-OIS spreads are now tenor-dependent means that we cannot 
work with a single tenor structure any longer. Hence, we start with a discrete, 
equidistant time structure $\cT=\{0=T_0<T_1<\cdots<T_N\}$, where $T_k$, 
$k\in\mathcal{K}:=\{1,\ldots,N\}$, denote the maturities of the assets traded 
in the market. Next, we consider different subsets of $\cT$ with equidistant 
time points, i.e. different tenor structures 
$\cT^x=\{0=T_0^x<T_1^x<\cdots<T_{N^x}^x\}$, where 
$x\in\mathcal{X}:=\{x_1,x_2,\ldots,x_n\}$ is a label that indicates the tenor 
structure. Typically, we have $\mathcal{X}=\{1,3,6,12\}$ months. We denote the 
tenor length by $\delta_x=T_k^x-T_{k-1}^x$, for every $x\in\mathcal{X}$. Let 
$\mathcal{K}^x:=\{1,2,\dots,N^x\}$ denote the collection of all subscripts 
related to the tenor structure $\cT^{x}$. We assume that $\cT^x\subseteq\cT$ 
and $T_{N^{x}}^{x}=T_N$ for all $x\in\mathcal{X}$. A graphical illustration of 
a possible relation between the different tenor structures appears in Figure 
\ref{Fig:tenors}.

\begin{example}
\label{ex:tenors-1}
A natural construction of tenor structures is the following: Let 
$\cT=\{0=T_0<T_1<\cdots<T_N\}$ denote a discrete time structure, where 
$T_k=k\delta$ for $k=1,\ldots,N$ and $\delta>0$. Let 
$\mathcal{X}=\{1=x_1,x_2,\ldots,x_n\}\subset\N$, where we assume that $x_k$ is a 
divisor of $N$ for all $k=1,\dots,n$. Next, set for every $x\in\mathcal{X}$
\begin{align*}
T_k^x = k\cdot\delta\cdot x=:k\delta_x,
 \quad\text{ for }\, k=1,\dots,N^x:=N/x,
\end{align*}
where obviously $T_k^x=T_{kx}$. Then, we can consider different subsets of 
$\cT$, i.e. different tenor structures 
$\cT^x=\{0=T_0^x<T_1^x<\cdots<T_{N^x}^x\}$, which satisfy by construction 
$\cT^x\subset\cT^{x_1}=\cT$ and also $T_{N^{x}}^{x}=N^x\cdot\delta\cdot x=T_N$, 
for all $x\in\mathcal{X}$.
\end{example}

\begin{figure}
\begin{center}
\setlength{\unitlength}{0.625cm}
\begin{picture}(18.0,7.)(0,-5.75)
 \thicklines
\put(0, 0){\line(1, 0){18}}
\put(0, -0.2){\line(0, 1){0.4}}
\put(-0.1,-0.8){$T_0$}
\put(1.5, -0.2){\line(0, 1){0.4}}
\put(1.4,-0.8){$T_1$}
\put(3.0, -0.2){\line(0, 1){0.4}}
\put(2.9,-0.8){$T_2$}
\put(4.5, -0.2){\line(0, 1){0.4}}
\put(4.4,-0.8){$T_3$}
\put(6.0, -0.2){\line(0, 1){0.4}}
\put(5.9,-0.8){$T_4$}
\put(7.5, -0.2){\line(0, 1){0.4}}
\put(7.4,-0.8){$T_5$}
\put(9.0, -0.2){\line(0, 1){0.4}}
\put(8.9,-0.8){$T_6$}
\put(12.5,-0.8){$\cdots$}
\put(16.5, -0.2){\line(0, 1){0.4}}
\put(16.4,-0.8){$T_{n-1}$}
\put(18, -0.2){\line(0, 1){0.4}}
\put(17.9,-0.8){$T_N$}
\put(0, -2.25){\line(1, 0){18}}
\put(0, -2.45){\line(0, 1){0.4}}
\put(-0.1,-3.05){$T_0$}
\put(4.5, -2.45){\line(0, 1){0.4}}
\put(4.4,-3.05){$T_1^{x_1}$}
\put(9.0, -2.45){\line(0, 1){0.4}}
\put(8.9,-3.05){$T_2^{x_1}$}
\put(12.5,-3.05){$\cdots$}
\put(18, -2.45){\line(0, 1){0.4}}
\put(17.9,-3.05){$T_{N^{x_1}}^{x_1}$}
\put(0, -4.5){\line(1, 0){18}}
\put(0, -4.7){\line(0, 1){0.4}}
\put(-0.1,-5.3){$T_0$}
\put(9.0, -4.7){\line(0, 1){0.4}}
\put(8.9,-5.3){$T_1^{x_2}$}
\put(12.5,-5.3){$\cdots$}
\put(18, -4.7){\line(0, 1){0.4}}
\put(17.9,-5.3){$T_{N^{x_2}}^{x_2}$}
\end{picture}
\end{center}
\vspace{-1.em}\caption{Illustration of different tenor structures.}
\label{Fig:tenors}
\end{figure}
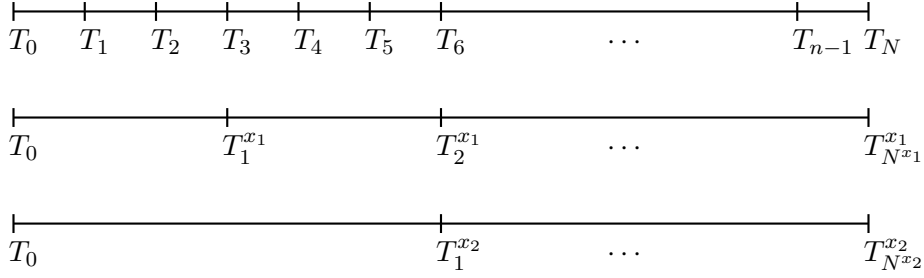

We consider the OIS curve as discount curve, following the standard market 
practice of fully collateralized contracts. The market prices for caps and 
swaptions considered in the sequel for model calibration are indeed quoted 
under the assumption of full collateralization. A detailed discussion on the 
choice of the discount curve in the multiple curve setting can be found e.g. in 
\citet{Mercurio_2010} and in \citet{Hull_White_2013}. The discount factors 
$B(0,T)$ are stripped from market OIS rates and defined for every possible 
maturity $T\in\cT$ via
\begin{align*}
T \mapsto B(0,T) = B^{OIS}(0,T).
\end{align*}
We denote by $B(t,T)$ the discount factor, i.e. the price of a zero coupon bond, 
at time $t$ for maturity $T$, which is assumed to coincide with the 
corresponding OIS-based zero coupon bond for the same maturity.

We also assume that all our modeling objects live on a complete stochastic 
basis $(\Omega,\F,\bF,\P_N)$, where $\P_N$ denotes the terminal forward measure, 
i.e. the martingale measure associated with the numeraire $B(\cdot,T_N)$. The 
corresponding expectation is denoted by $\E_N$. Then, we introduce forward 
measures $\P_k^x$ associated to the numeraire $B(\cdot,T_k^x)$ 
for every pair $(x,k)$ with $x\in\mathcal{X}$ and $k\in\mathcal{K}^x$. The 
corresponding expectation is denoted by $\E_k^x$. The forward measures $\P_k^x$ 
are absolutely continuous with respect to $\P_N$, and defined in the usual way, 
i.e. via the Radon--Nikodym density
\begin{align}
\frac{\ud \P_k^x}{\ud \P_N}=
 \frac{B(0,T_N)}{B(0,T_k^x)} \frac{1}{B(T_k^x,T_N)}.
\end{align}

\begin{remark}\label{r:tenors}
Since $\cT^x\subseteq\cT$ there exists an $l\in\mathcal{K}$ and a 
$k\in\mathcal{K}^x$ such that $T_l=T_k^x$. Therefore, the corresponding 
numeraires and forward measures coincide, i.e. $B(\cdot,T_l)=B(\cdot,T_k^x)$ and 
$\P_l=\P_k^x$. See again Figure \ref{Fig:tenors}.
\end{remark}

Next, we define the two rates that are the main modeling objects in the multiple 
curve LIBOR setting: the forward OIS rate and the forward LIBOR rate. We also 
define the additive and the multiplicative spread between these two rates. Let 
us denote by $L(T_{k-1}^x,T_k^x)$ the \textit{spot LIBOR rate} at time 
$T_{k-1}^x$ for the time interval $[T_{k-1}^x,T_k^x]$, which is an 
$\F_{T^x_{k-1}}$-measurable random variable on the given stochastic basis.

\begin{definition}
The time-$t$ \textit{forward OIS rate} for the time interval $[T_{k-1}^x,T_k^x]$
is defined by
\begin{align}\label{OIS-defin}
F_k^x(t)
 := \frac{1}{\delta_x}
    \left( \frac{B(t,T_{k-1}^x)}{B(t, T_k^x)} -1 \right).
\end{align}
\end{definition}

\begin{definition}
The time-$t$ \textit{forward LIBOR rate} for the time interval 
$[T_{k-1}^x,T_k^x]$ is defined by
\begin{align}\label{LIBOR-defin}
L_k^x(t)
  = \E_k^{x} \big[L(T_{k-1}^x, T_k^x) | \cF_t\big].
\end{align}
\end{definition}

The forward LIBOR rate is the fixed rate that should be exchanged for the 
future spot LIBOR rate so that the forward rate agreement has zero initial 
value. Hence, this rate reflects the market expectations about the value of 
the future spot LIBOR rate. Notice that at time $t=T_{k-1}^x$ we have that
\begin{align}\label{eq:LIBOR-LIB-connection}
L_k^x(T_{k-1}^x)
 = \E_k^{x} \big[L(T_{k-1}^x, T_k^x) | \cF_{T_{k-1}^x}\big]
 = L(T_{k-1}^x, T_k^x),
\end{align}
i.e. this rate coincides with the spot LIBOR rate at the corresponding tenor 
dates.

\begin{remark}\label{LIBOR-neq-ZCB}
In the single curve setup, \eqref{OIS-defin} is the definition of the forward 
LIBOR rate. However, in the multiple curve setup we have that
$$
L(T_{k-1}^x, T_k^x) \neq \frac{1}{\delta_x}
 \left(\frac{1}{B(T_{k-1}^x, T_k^x)} -1\right),
$$
hence the OIS and the LIBOR rates are no longer equal.
\end{remark}

\begin{definition}
The \textit{spread} between the LIBOR and the OIS  rate is defined by
\begin{align}\label{spread}
S_k^x(t):= L_k^x(t) - F_k^x(t).
\end{align}
\end{definition}

Let us also provide an alternative definition of the spread based on a 
multiplicative, instead of an additive, decomposition.

\begin{definition}
The \textit{multiplicative spread} between the LIBOR and the OIS 
rate is defined by
\begin{align}\label{mlpl-spread}
1 + \delta_x R_k^x(t) 
:= \frac{1 + \delta_x L_k^x(t)}{1 + \delta_x F_k^x(t)}.
\end{align}
\end{definition}

A model for the dynamic evolution of the OIS and LIBOR rates, and thus also of 
their spread, should satisfy certain conditions which stem from economic 
reasoning, arbitrage requirements and their respective definitions. These are, 
in general, consistent with market observations. We formulate them below as 
model requirements:
\begin{list}{\textbf{(M1)}}{}
\item $F_k^x(t)\ge0$ and $F_k^x\in\mathcal{M}(\P_k^x)$, for all
      $x\in\mathcal{X}$, $k\in\mathcal{K}^x$, $t\in[0,T_{k-1}^x]$.
\end{list}
\begin{list}{\textbf{(M2)}}{}
\item $L_k^x(t)\ge0$ and $L_k^x\in\mathcal{M}(\P_k^x)$, for all
      $x\in\mathcal{X}$, $k\in\mathcal{K}^x$, $t\in[0,T_{k-1}^x]$.
\end{list}
\begin{list}{\textbf{(M3)}}{}
\item $S_k^x(t)\ge0$ and $S_k^x\in\mathcal{M}(\P_k^x)$, for all
      $x\in\mathcal{X}$, $k\in\mathcal{K}^x$, $t\in[0,T_{k-1}^x]$.
\end{list}
Here $\mathcal{M}(\P_k^x)$ denotes the set of $\P_k^x$-martingales.

\begin{remark}
If the additive spread is positive the multiplicative spread is also positive 
and vice versa.
\end{remark}
\section{The multiple curve affine \lib model}
\label{mcalm}

We describe next the affine \lib model for the multiple curve interest rate 
setting and analyze its main properties. In particular, we show that this model 
produces positive rates and spreads, i.e. it satisfies the modeling 
requirements (M1)--(M3) and is analytically tractable. In this framework, OIS 
and LIBOR rates are modeled in the spirit of the \alm introduced by 
\citet{KellerResselPapapantoleonTeichmann09}.

Let $X$ be an affine process defined on $(\Omega,\F,\bF,\P_N)$, satisfying
Assumption $(\mathbb{A})$ and starting at the canonical value $\mathbf{1}$.
Consider a fixed $x\in\mathcal{X}$ and the associated tenor structure
$\mathcal{T}^x$. We construct two families of parametrized martingales following
Lemma \ref{ord-mart}: take two sequences of vectors
$(u_k^x)_{k\in\mathcal{K}^x}$ and $(v_k^x)_{k\in\mathcal{K}^x}$, and define the
$\P_N$-martingales $M^{u^x_k}$ and $M^{v^x_k}$ via
\begin{equation}\label{eq:Mu}
M^{u^x_k}_t
 = \exp\big( \phi_{T_N-t}(u^x_k) + \scal{\psi_{T_N-t}(u^x_k)}{X_t} \big),
\end{equation}
and
\begin{equation}\label{eq:Mv}
M^{v^x_k}_t
 = \exp\big( \phi_{T_N-t}(v^x_k) + \scal{\psi_{T_N-t}(v^x_k)}{X_t} \big).
\end{equation}
The \textit{multiple curve affine \lib model} postulates that the OIS and the
LIBOR rates associated with the $x$-tenor evolve according to
\begin{align}
1 + \delta_x F_k^x(t) & =  \frac{M_t^{u_{k-1}^x}}{M_t^{u_k^x}} 
\quad\text{ and }\quad
\label{eq:LIBOR-rate}
1 + \delta_x L_k^x(t)  = \frac{M^{v_{k-1}^x}_t}{M^{u_{k}^x}_t},
\end{align}
for every $k=2,\ldots,N_x$ and $t\in[0,T_{k-1}^x]$.

In the following three propositions, we show how to construct a multiple curve
affine \lib model from any given initial term structure of OIS and LIBOR rates.

\begin{proposition}\label{initial-fit-multiple-curve-OIS}
Consider the time structure $\mathcal{T}$, let $B(0,T_l)$, $l \in \mathcal{K}$,
be the initial term structure of non-negative OIS discount factors and assume
that
\begin{equation*}
B(0,T_1) \geq \cdots \geq B(0,T_{N}).
\end{equation*}
Then the following statements hold:
\begin{enumerate}
\item If $\gamma_X>B(0,T_1)/B(0,T_{N})$, then there exists a decreasing sequence
 $u_1\ge u_2\ge\dots\ge u_{N} =0$ in $\I_T\cap\Rp^d$, such that
 \begin{equation}\label{eq:initial-OIS-fit}
  M_0^{u_l} = \frac{B(0,T_l)}{B(0,T_N)}
   \qquad \text{for all}\;\, l \in \mathcal{K}.
\end{equation}
In particular, if $\gamma_X=\infty$, the multiple curve affine \lib model can
fit any initial term structure of OIS  rates.
\item If $X$ is one-dimensional, the sequence
 $(u_l)_{l\in\mathcal{K}}$  is unique.
\item If all initial OIS rates are positive, the sequence
 $(u_l)_{l\in\mathcal{K}}$ is strictly decreasing.
\end{enumerate}
\end{proposition}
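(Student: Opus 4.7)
The starting observation is that, since $X_0=\mathbf{1}$, the value
\[
M_0^u=\exp\!\big(\phi_{T_N}(u)+\langle\psi_{T_N}(u),\mathbf{1}\rangle\big)=\E_{\mathbf{1}}\!\big[\e^{\langle u,X_{T_N}\rangle}\big]
\]
is simply the moment generating function of $X_{T_N}$ evaluated at $u$. By Lemma~\ref{ord-mart}, $u\mapsto M_0^u$ is non-decreasing on $\I_T\cap\Rp^d$ with respect to the componentwise order, $M_0^0=1$, and $\sup_{u\in\I_T\cap\R^d_{>0}}M_0^u=\gamma_X$ by Definition~\ref{def-gamma}. Writing $C_l:=B(0,T_l)/B(0,T_N)$, the assumption on the discount factors reads $1=C_N\le C_{N-1}\le\cdots\le C_1<\gamma_X$, where the last strict inequality is the hypothesis of part~(1).

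For part~(1), my plan is to reduce the multivariate existence problem to a one-dimensional intermediate value argument along a ray. By the definition of $\gamma_X$, I can pick a single reference point $u^\ast\in\I_T\cap\R^d_{>0}$ with $M_0^{u^\ast}>C_1$. Since $\I_T$ is convex (as the domain of convergence of a moment generating function) and contains $0$, the segment $\{\lambda u^\ast:\lambda\in[0,1]\}$ lies inside $\I_T\cap\Rp^d$. Along this segment the scalar function $g(\lambda):=M_0^{\lambda u^\ast}=\E_{\mathbf{1}}[\e^{\lambda\langle u^\ast,X_{T_N}\rangle}]$ is the MGF of the non-negative random variable $\langle u^\ast,X_{T_N}\rangle$, hence convex, continuous and non-decreasing on $[0,1]$, with $g(0)=1$ and $g(1)>C_1$. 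Applying the intermediate value theorem to each target $C_l\in[1,C_1]$ produces $\lambda_l\in[0,1]$ with $g(\lambda_l)=C_l$, and the ordering $C_N\le\cdots\le C_1$ together with the monotonicity of $g$ forces $0=\lambda_N\le\lambda_{N-1}\le\cdots\le\lambda_1$. Setting $u_l:=\lambda_l u^\ast$ then yields the required decreasing sequence in $\I_T\cap\Rp^d$ with $u_N=0$ and $M_0^{u_l}=C_l$. The remark about $\gamma_X=\infty$ is immediate, since in that case $C_1<\gamma_X$ holds automatically.

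For parts~(2) and~(3): in dimension one, the map $u\mapsto\E_1[\e^{uX_{T_N}}]$ is strictly increasing on $\I_T$ (differentiate under the expectation and use that $X_{T_N}\ge 0$ is non-degenerate in the non-trivial case), so each $u_l$ is pinned down uniquely by $M_0^{u_l}=C_l$. Strict positivity of all initial OIS rates translates into strict inequalities $C_{l-1}>C_l$, which on the chosen ray forces $\lambda_{l-1}>\lambda_l$; since $u^\ast\in\R^d_{>0}$, this gives $u_{l-1}>u_l$ componentwise. The main obstacle I anticipate is justifying the strict monotonicity cleanly in the one-dimensional setting — specifically, excluding the degenerate case $X_{T_N}\equiv 0$ so that $g$ is strictly (rather than merely weakly) increasing; this should follow from the admissibility of the characteristics $(b,m,\alpha,\beta,\mu)$ together with $X_0=1$, but it may deserve a brief separate verification.
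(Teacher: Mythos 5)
Your proof is correct and is essentially the argument the paper relies on: the paper itself only cites Proposition~6.1 of \citet{KellerResselPapapantoleonTeichmann09}, whose proof is exactly your reduction to a ray $\lambda\mapsto \lambda u^\ast$ through a point realizing the supremum defining $\gamma_X$, followed by an intermediate value argument for the scalar moment generating function $g$. The loose ends you flag resolve as you expect: the hypothesis $\gamma_X>C_1\ge 1$ already rules out $X_{T_N}\equiv 0$, giving strict monotonicity of the one-dimensional MGF for parts~(2) and~(3), and choosing consistent (e.g.\ minimal) preimages under $g$ handles the possible non-uniqueness of the $\lambda_l$ when $g$ is not strictly increasing.
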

\begin{proof}
See Proposition 6.1 in \citet{KellerResselPapapantoleonTeichmann09}.
\end{proof}

After fitting the initial term structure of OIS discount factors, we want to
fit the initial term structure of LIBOR rates, which is now tenor-dependent.
Thus, for each $k\in\mathcal{K}^x$, we set
\begin{align}
\label{eq:u}
u_k^x & := u_l,
\end{align}
where $l\in\mathcal{K}$ is such that $T_l=T_k^x$; see Remark \ref{r:tenors}. In
general, we have that $l=kT_1^x/T_1$, while in the setting of Example
\ref{ex:tenors-1} we simply have $l= kx$, i.e. $u_k^x=u_{kx}$.

\begin{proposition}\label{initial-fit-multiple-curve}
Consider the setting of Proposition \ref{initial-fit-multiple-curve-OIS}, the
fixed $x\in\mathcal{X}$ and the corresponding tenor structure $\mathcal{T}^x$.
Let $L^x_k(0)$, $k\in\mathcal{K}^x$, be the initial term structure of
non-negative LIBOR rates and assume that for every $k\in\mathcal{K}^x$
\begin{equation}\label{eq:initial-LIBOR}
 L^x_{k}(0) \geq
 \frac{1}{\delta_x}\left(\frac{B(0,T_{k-1}^x)}{B(0,T_{k}^x)}-1\right)
  = F_k^x(0).
\end{equation}
The following statements hold:
\begin{enumerate}
\item If
$\gamma_X>\max_{k\in\mathcal{K}^x}(1+\delta_xL_k^x(0))B(0,T_k^x)/B(0,T_N^x)$,
then there exists a sequence $v_1^x,v_2^x,\dots,v_{N^x}^x=0$ in
$\I_T\cap\Rp^d$, such that $v_k^x
\geq u_k^x$ and
 \begin{equation}\label{eq:initial-LIBOR-fit}
  M_0^{v_{k}^x} = (1 + \delta_x L_{k+1}^x(0)) M_0^{u_{k+1}^x},
  \quad \text{for all}\;\, k \in \mathcal{K}^x\setminus\{N^x\}.
\end{equation}
 In particular, if $\gamma_X=\infty$, then the multiple curve affine \lib model
 can fit any initial term structure of LIBOR rates.
\item If $X$ is one-dimensional, the sequence
 $(v_k^x)_{k\in\mathcal{K}^x}$ is unique.
\item  If all initial spreads are positive, then $v_k^x > u_k^x$, for all
$k\in\mathcal{K}^x\setminus\{N^x\}$.
\end{enumerate}
\end{proposition}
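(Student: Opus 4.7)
The proof reduces to the inversion problem already handled in Proposition \ref{initial-fit-multiple-curve-OIS}: given a target $c$ with $M_0^u \leq c < \gamma_X$ and a reference point $u \in \I_T \cap \Rp^d$, produce $v \in \I_T \cap \Rp^d$ with $v \geq u$ and $M_0^v = c$. By Lemma \ref{ord-mart} and the assumption $X_0 = \mathbf{1}$, we have $M_0^u = \E_{\mathbf{1}}[\e^{\scal{u}{X_T}}]$, so $u \mapsto M_0^u$ is an ordinary moment generating function, hence continuous (in fact analytic) on $\I_T^\circ$ and monotone along the partial order on $\Rp^d$.

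Defining
\begin{equation*}
c_k := (1+\delta_x L_{k+1}^x(0)) M_0^{u_{k+1}^x},
 \qquad k \in \mathcal{K}^x \setminus \{N^x\},
\end{equation*}
I would first verify the sandwich $M_0^{u_k^x} \leq c_k < \gamma_X$. The lower bound follows by writing $M_0^{u_k^x}/M_0^{u_{k+1}^x} = 1 + \delta_x F_{k+1}^x(0)$, which comes from \eqref{eq:initial-OIS-fit} together with the definition \eqref{OIS-defin} of the forward OIS rate, and combining with the standing hypothesis \eqref{eq:initial-LIBOR}. The strict upper bound is precisely what the assumption $\gamma_X > \max_{k\in\mathcal{K}^x}(1+\delta_xL_k^x(0))B(0,T_k^x)/B(0,T_N^x)$ encodes, since $M_0^{u_{k+1}^x} = B(0, T_{k+1}^x)/B(0, T_N^x)$ by Proposition \ref{initial-fit-multiple-curve-OIS}.

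With this sandwich in hand, existence of $v_k^x \geq u_k^x$ in $\I_T \cap \Rp^d$ solving $M_0^{v_k^x} = c_k$ follows from an intermediate-value argument along a ray $\lambda \mapsto u_k^x + \lambda w$ for a suitable direction $w \in \R^d_{>0}$: by Lemma \ref{ord-mart} this map is continuous and non-decreasing in $\lambda$ on its domain of finiteness, it starts at $M_0^{u_k^x} \leq c_k$, and since $c_k < \gamma_X$ one can pick $w$ by reference to a witness point $u^* \in \I_T \cap \R^d_{>0}$ with $M_0^{u^*} > c_k$ (which exists by the very definition of $\gamma_X$) so that the map eventually exceeds $c_k$. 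This is exactly the construction invoked in the proof of Proposition \ref{initial-fit-multiple-curve-OIS} (cf. Proposition 6.1 of \citet{KellerResselPapapantoleonTeichmann09}), so it can be cited directly rather than re-derived.

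Parts (2) and (3) follow quickly from the same IVT picture. In dimension one, $u\mapsto M_0^u$ is strictly increasing on $\I_T\cap\Rp$, which immediately gives uniqueness of each $v_k^x$. If all initial spreads are positive then $L_{k+1}^x(0) > F_{k+1}^x(0)$ yields $c_k > M_0^{u_k^x}$ strictly, forcing the IVT parameter to be strictly positive and hence $v_k^x > u_k^x$ componentwise. The only real obstacle is the existence step in higher dimensions, where one must tune the direction $w$ so that the ray attains MGF values up to $c_k$ before leaving $\I_T$; this is identical to what is done in the OIS fitting and requires no new idea.
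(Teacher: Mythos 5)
Your proposal is correct and follows essentially the same route as the paper, which simply invokes the argument of Proposition \ref{initial-fit-multiple-curve-OIS} (i.e.\ Proposition 6.1 of \citet{KellerResselPapapantoleonTeichmann09}) applied to the targets $c_k=(1+\delta_x L_{k+1}^x(0))M_0^{u_{k+1}^x}$ and notes that $v_k^x\ge u_k^x$ follows from \eqref{eq:initial-LIBOR}. You merely make explicit the sandwich $M_0^{u_k^x}\le c_k<\gamma_X$ and the intermediate-value step that the paper leaves implicit, so the two arguments coincide in substance.
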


\begin{proof}
Similarly to the previous proposition, by fitting the initial LIBOR rates we
obtain a sequence $(v_k^x)_{k \in \mathcal{K}^x}$ which satisfies
(1)--(3). The inequality $v_k^x\ge u_k^x$ follows directly from
\eqref{eq:initial-LIBOR}.
\end{proof}

\begin{proposition}
Consider the setting of the previous propositions. Then we have:
\begin{enumerate}
\item $F_k^x$ and $L_k^x$ are $\P_k^x$-martingales, for every
$k\in\mathcal{K}^x$.
\item $L_k^x(t) \geq F_k^x(t) \geq 0$, for every $k\in\mathcal{K}^x$,
$t\in[0,T_{k-1}^x]$.
\end{enumerate}
\end{proposition}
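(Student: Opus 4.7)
The strategy is to apply Lemma 2.2 in two distinct roles. First, to identify the Radon--Nikodym density of the forward measure $\P_k^x$ against $\P_N$ with a scalar multiple of the fundamental martingale $M^{u_k^x}$, which then delivers the martingale property of $F_k^x$ and $L_k^x$ via Bayes' rule. Second, to invoke the monotonicity $u\mapsto M_t^u$ to read off the sign inequalities of part (2) directly from the parameter orderings produced by Propositions 4.1 and 4.2.

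For part (1), I would begin with the telescoping identity
\begin{align*}
\prod_{j=k+1}^{N^x}\bigl(1+\delta_x F_j^x(t)\bigr)
 = \prod_{j=k+1}^{N^x}\frac{M_t^{u_{j-1}^x}}{M_t^{u_j^x}}
 = \frac{M_t^{u_k^x}}{M_t^{u^x_{N^x}}}.
\end{align*}
Because the labelling (4.6) together with Proposition 4.1 forces $u^x_{N^x}=u_N=0$ (hence $M_t^{u^x_{N^x}}\equiv 1$, using $\phi_t(0)=0$ and $\psi_t(0)=0$), while the standard bond-ratio identity reads $B(t,T_k^x)/B(t,T_N)=\prod_{j=k+1}^{N^x}(1+\delta_x F_j^x(t))$, this yields $B(t,T_k^x)/B(t,T_N)=M_t^{u_k^x}$. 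Consequently the density process of $\P_k^x$ against $\P_N$ equals
\begin{align*}
Z_t^k=\frac{B(0,T_N)}{B(0,T_k^x)}\cdot\frac{B(t,T_k^x)}{B(t,T_N)}
     =c_k\, M_t^{u_k^x},\qquad c_k:=\frac{B(0,T_N)}{B(0,T_k^x)}.
\end{align*}
A direct calculation gives $Z_t^k F_k^x(t)=(c_k/\delta_x)\bigl(M_t^{u_{k-1}^x}-M_t^{u_k^x}\bigr)$, which is a difference of $\P_N$-martingales by Lemma 2.2; Bayes' rule then yields $F_k^x\in\cM(\P_k^x)$. Replacing $u_{k-1}^x$ with $v_{k-1}^x$ handles $L_k^x$ identically.

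For part (2), since $M_t^{u_k^x}>0$ the two inequalities reduce to
\begin{align*}
F_k^x(t)\ge 0 &\iff M_t^{u_{k-1}^x}\ge M_t^{u_k^x},\\
L_k^x(t)-F_k^x(t)\ge 0 &\iff M_t^{v_{k-1}^x}\ge M_t^{u_{k-1}^x}.
\end{align*}
Both are immediate from the monotonicity of $u\mapsto M_t^u$ stated in Lemma 2.2, combined with the parameter orderings $u_{k-1}^x\ge u_k^x$ (Proposition 4.1) and $v_{k-1}^x\ge u_{k-1}^x$ (Proposition 4.2); the inequality $F_k^x(t)\ge 0$ itself ensures the ``$\ge 0$'' part of the chain in (2).

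I do not anticipate a substantial obstacle; the computation is mechanical once the density process has been pinned down. The only point requiring care is the cross-tenor bookkeeping: one must use the identification (4.6) to see that $u^x_{N^x}$ really coincides with $u_N=0$, which is what makes the telescoping collapse cleanly and the Radon--Nikodym density reduce to $c_k M_t^{u_k^x}$.
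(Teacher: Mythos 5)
Your proposal is correct and follows essentially the same route as the paper: the martingale property is obtained by multiplying by the density process $M^{u_k^x}_t/M^{u_k^x}_0$ and observing that the product is a $\P_N$-martingale ($M^{u_{k-1}^x}$ resp.\ $M^{v_{k-1}^x}$, up to constants), and the inequalities in (2) follow from the orderings $u_{k-1}^x\ge u_k^x$ and $v_{k-1}^x\ge u_{k-1}^x$ combined with the monotonicity of $u\mapsto M_t^u$ from Lemma \ref{ord-mart}. The only difference is that you explicitly re-derive the identity $B(t,T_k^x)/B(t,T_N)=M_t^{u_k^x}$ by telescoping, a step the paper simply records in \eqref{Pkx-densities}.
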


\begin{proof}
Since $M^{u_k^x}$ and $M^{v_k^x}$ are $\P_N$-martingales and the density process
relating the measures $\P_N$ and $\P_k^x$ is provided by
\begin{align}\label{Pkx-densities}
\frac{\ud \P_k^x}{\ud \P_N}\Big|_{\cF_t}
 = \frac{B(0, T_N)}{B(0, T_k^x)} \frac{B(t,T_k^x)}{B(t,T_N)}
 = \frac{M^{u_k^x}_t}{M^{u_k^x}_0},
\end{align}
we get from \eqref{eq:LIBOR-rate} that
\begin{align}
1 + \delta_x F_k^x \in \cM(\P_k^{x})
\,\text{ because }\,
(1 + \delta_xF_k^x) M^{u_{k}^x} = M^{u_{k-1}^x}  \in \cM(\P_N).
\end{align}
Similarly,
\begin{align}
1 + \delta_x L_k^x \in \cM(\P_k^{x})
\,\text{ because }\,
(1 + \delta_xL_k^x) M^{u_{k}^x} = M^{v_{k-1}^x}  \in \cM(\P_N).
\end{align}

The monotonicity of the sequence $(u_k^x)$ together with \eqref{M-order} yields
that $M^{u^x_{k-1}}\geq M^{u^x_{k}}$. Moreover, from the inequality $v_k^x\geq
u_k^x$ together with \eqref{M-order} again, it follows that $M^{v^x_{k}}\geq
M^{u^x_{k}} $, for all $k\in\mathcal{K}^x$. Hence,
$$
1 + \delta_x L_k^x \geq 1 + \delta_x F_k^x \geq 1.
$$
Therefore, the OIS rates, the LIBOR rates and the corresponding spreads
are non-negative $\P_k^x$-martingales.
\end{proof}

\begin{remark}
\label{r:choice-of-u-v}
The above propositions provide the theoretical construction of affine LIBOR 
models with multiple curves, given initial term structures of OIS bond prices 
$B(0, T_k^x)$ and LIBOR rates $L_k^x(0)$, for any $x \in \mathcal{X}$ and $k\in 
\mathcal{K}^x$. The initial term structures determine the sequences $(u_k^x)$ 
and $(v_k^x)$, but not in a unique way, as soon as the dimension of the driving 
process is strictly greater than one, which will typically be the case in 
applications. This provides plenty of freedom in the implementation of the 
model. For example, setting some components of the vectors  $(u_k^x)$ and 
$(v_k^x)$ equal to zero allows to exclude the corresponding components of the 
driving processes and thus decide which components of the driving process $X$ 
will affect the OIS rates, respectively the LIBOR rates. Moreover, if the 
components of $X$ are assumed mutually independent, one can create a factor 
model with common and idiosyncratic components for each OIS and LIBOR rate, as 
well as various other specific structures. In Section 
\ref{subsection:SingleTenor} we present more details on this issue, see Remark 
\ref{obsLFS} in particular.
\end{remark}


\begin{remark}\label{rem:relation-u-v}
Let us now look more closely at the relationship between the sequences $(v_k^x)$ 
and $(u_k^x)$. Propositions \ref{initial-fit-multiple-curve-OIS} and 
\ref{initial-fit-multiple-curve} imply that $u_{k-1}^x\geq u_{k}^x$ and $v_k^x
\geq u_k^x$ for all $k\in\mathcal{K}^x$. However, we do not know the ordering
of $v_k^x$ and $u_{k-1}^x$, or whether the sequence $(v_k^x)$ is monotone or
not. The market data for LIBOR spreads indicate that in a `normal' market 
situation $v_k^x\in[u_{k}^x,u_{k-1}^x]$. More precisely, on the one hand, we 
have $v_k^x\geq u_k^x$ because the LIBOR spreads are nonnegative. On the other 
hand, if $v_k^x>u_{k-1}^x$, then the LIBOR rate would be more than two times 
higher than the OIS rate spanning an interval twice as long, starting at the 
same date. This contradicts normal market behavior, hence 
$v_k^x\in[u_{k}^x,u_{k-1}^x]$ and consequently the sequence $(v_k^x)$ will also 
be decreasing. This ordering of the parameters $(v_k^x)$ and $(u_k^x)$ is 
illustrated in Figure \ref{Fig:u-and-v} (top graph). However, the `normal' 
market situation alternates with an `extreme' situation, where the spread is 
higher than the OIS rate. In the bottom graph of Figure \ref{Fig:u-and-v} we 
plot another possible ordering of the parameters $(v_k^x)$ and $(u_k^x)$ 
corresponding to such a case of very high spreads. Intuitively speaking, the 
value of the corresponding model spread depends on the distance between the 
parameters $(v_k^x)$ and $(u_k^x)$, although in a non-linear fashion. 
\end{remark}

\begin{figure}
\begin{center}
\setlength{\unitlength}{0.625cm}
\begin{picture}(19.0,2.)(0,-1.75)
 \thicklines
\put(0, 0){\vector(1, 0){19}}
\put(0, -0.2){\line(0, 1){0.4}}
\put(-0.1,-0.8){$0=u^x_{N^x}$}
\put(3.0, -0.2){\line(0, 1){0.4}}
\put(2.9,-0.8){$u_{{N^x}-1}^x$}
\put(4.7, -0.2){\line(0, 1){0.4}}
\put(4.6,-0.8){$v_{{N^x}-1}^x$}
\put(6.7, -0.8){\dots}
\put(8.5, -0.2){\line(0, 1){0.4}}
\put(8.4,-0.8){$u_{k}^x$}
\put(10.8, -0.2){\line(0, 1){0.4}}
\put(10.7,-0.8){$v_{k}^x$}
\put(12.0, -0.2){\line(0, 1){0.4}}
\put(11.9,-0.8){$u_{k-1}^x$}
\put(13.9, -0.8){\dots}
\put(16.2, -0.2){\line(0, 1){0.4}}
\put(16.1,-0.8){$u_1^x$}
\put(18, -0.2){\line(0, 1){0.4}}
\put(17.9,-0.8){$v_1^x$}
\put(0, -2){\vector(1, 0){19}}
\put(0, -2.2){\line(0, 1){0.4}}
\put(-0.1,-2.8){$0=u^x_{N^x}$}
\put(2.5, -2.2){\line(0, 1){0.4}}
\put(2.4,-2.8){$u_{{N^x}-1}^x$}
\put(4.3, -2.8){\dots}
\put(5.7, -2.2){\line(0, 1){0.4}}
\put(5.6,-2.8){$u_k^x$}
\put(7.0, -2.8){\dots}
\put(8.5, -2.2){\line(0, 1){0.4}}
\put(8.4,-2.8){$u_1^x$}
\put(11.8, -2.2){\line(0, 1){0.4}}
\put(11.7,-2.8){$v_1^x$}
\put(13.3,-2.8){\dots}
\put(15.0, -2.2){\line(0, 1){0.4}}
\put(14.9, -2.8){$v_{{N^x}-1}^x$}
\put(16.7,-2.8){\dots}
\put(18, -2.2){\line(0, 1){0.4}}
\put(17.9,-2.8){$v_k^x$}
\end{picture}
\end{center}
\vspace{1.75em}
\caption{Two possible orderings of $(u_k^x)$ and $(v_k^x)$.}
\label{Fig:u-and-v}
\end{figure}
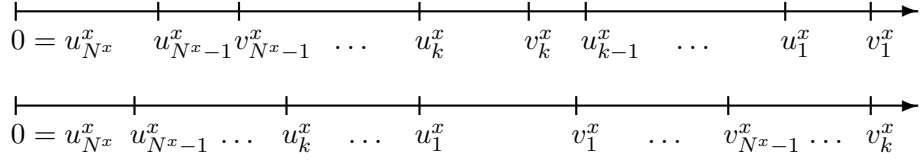

The next result concerns an important property of the multiple curve \alm, 
namely its analytical tractability in the sense that the model structure is 
preserved under different forward measures. More precisely, the process $X$ 
remains affine under any forward measure, although its `characteristics' become 
time-dependent. We refer to \citet{Filipovic05} for time-inhomogeneous affine 
processes. This property plays a crucial role in the derivation of tractable 
pricing formulas for interest rate derivatives in the forthcoming sections, 
since it entails that the law of any collection of LIBOR rates is known under 
any forward measure. The result below is presented in 
\citet[cf. eq. (6.14) and its proof]{KellerResselPapapantoleonTeichmann09}, 
nevertheless we include a short proof here for completeness. In Section 
\ref{alm-lmm} we also provide an alternative proof for the case when $X$ is an 
affine diffusion. 

\begin{proposition}\label{X-Pkx-characteristics}
The process $X$ is a time-inhomogeneous affine process under the measure
$\P_k^x$, for every $x\in\mathcal{X}$ and $k\in\mathcal{K}^x$. In particular
\begin{align}
\label{X-Pxk-characteristics-1}
\E_{k}^x \big[ \e^{\scal{w}{X_t}} \big]
 &= \exp\left( \phi_t^{k,x}(w) + \scal{\psi_t^{k,x}(w)}{X_0} \right),
\end{align}
where
\begin{subequations}
\begin{align}
\phi_t^{k,x}(w) &:= \phi_t\big(\psi_{T_N-t}(u_k^x)+w\big)
                  - \phi_t\big(\psi_{T_N-t}(u_k^x)\big),\\
\psi_t^{k,x}(w) &:= \psi_t\big(\psi_{T_N-t}(u_k^x)+w\big)
                  - \psi_t\big(\psi_{T_N-t}(u_k^x)\big),
\end{align}
\end{subequations}
for every $w \in \mathcal{I}^{k,x}$ with
\begin{align}\label{eq:I-kx}
\mathcal{I}^{k,x} := \set{w\in\R^d: \psi_{T_N-t}(u_k^x) + w \in \mathcal{I}_T}.
\end{align}
\end{proposition}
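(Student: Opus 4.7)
The plan is to exploit the explicit form of the Radon--Nikodym density \eqref{Pkx-densities}, which is itself an exponentially affine functional of $X$, and then to apply the affine moment generating function identity \eqref{affine-def} together with the semi-flow equations \eqref{flow}. The only real input is the $\P_N$-affine property of $X$; everything else is algebra.

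Concretely, for $w\in\mathcal{I}^{k,x}$ I would write
\begin{align*}
\E_k^x\big[\e^{\scal{w}{X_t}}\big]
 = \E_N\Big[\e^{\scal{w}{X_t}}\,\tfrac{M_t^{u_k^x}}{M_0^{u_k^x}}\Big]
 = \frac{\e^{\phi_{T_N-t}(u_k^x)}}{M_0^{u_k^x}}\,
   \E_N\big[\exp\scal{\psi_{T_N-t}(u_k^x)+w}{X_t}\big].
\end{align*}
The definition \eqref{eq:I-kx} of $\mathcal{I}^{k,x}$ is exactly what is needed so that the inner expectation can be evaluated by \eqref{affine-def} (note that $u_k^x\in\I_T$ and $t\le T_N$, so $\psi_{T_N-t}(u_k^x)$ is well-defined). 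This yields
\begin{align*}
\E_k^x\big[\e^{\scal{w}{X_t}}\big]
 = \frac{\exp\!\big(\phi_{T_N-t}(u_k^x)+\phi_t(\psi_{T_N-t}(u_k^x)+w)
       + \scal{\psi_t(\psi_{T_N-t}(u_k^x)+w)}{X_0}\big)}
        {\exp\!\big(\phi_{T_N}(u_k^x)+\scal{\psi_{T_N}(u_k^x)}{X_0}\big)}.
\end{align*}

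Applying the semi-flow equations \eqref{flow} with $s=T_N-t$ gives
$\phi_{T_N}(u_k^x)=\phi_{T_N-t}(u_k^x)+\phi_t(\psi_{T_N-t}(u_k^x))$ and
$\psi_{T_N}(u_k^x)=\psi_t(\psi_{T_N-t}(u_k^x))$,
so the $\phi_{T_N-t}(u_k^x)$ terms cancel and the remaining pieces collapse into precisely the differences that define $\phi_t^{k,x}(w)$ and $\psi_t^{k,x}(w)$. This establishes \eqref{X-Pxk-characteristics-1}.

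I do not foresee a substantive obstacle: the key point is simply the felicitous fact that the density $M^{u_k^x}/M_0^{u_k^x}$ is itself of exponential-affine form in $X_t$, so multiplying by $\e^{\scal{w}{X_t}}$ merely shifts the argument in the $\P_N$-mgf. The only thing to check with any care is the domain condition, which is precisely what \eqref{eq:I-kx} encodes. The same argument applied to the conditional expectation $\E_k^x[\e^{\scal{w}{X_s}}\mid\F_t]$ (using the $\F_t$-conditional form of \eqref{affine-def} together with the Markov property under $\P_N$) yields the full time-inhomogeneous affine property of $X$ under $\P_k^x$, of which \eqref{X-Pxk-characteristics-1} is the special case $t\mapsto (0,t)$.
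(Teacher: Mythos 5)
Your proposal is correct and follows essentially the same route as the paper: change of measure via the exponentially affine density \eqref{Pkx-densities}, absorption of the density's exponent into the argument of the $\P_N$-moment generating function \eqref{affine-def}, and simplification via the semi-flow equations \eqref{flow}. The paper merely performs the computation directly for the conditional expectation $\E_k^x[\e^{\scal{w}{X_t}}\,|\,\F_s]$ and then sets $s=0$, whereas you do the unconditional case first and note that the conditional case (which gives the time-inhomogeneous affine property) follows by the identical argument — a difference of presentation only.
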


\begin{proof}
Using the density process between the forward measures, see
\eqref{Pkx-densities}, we have that
\begin{align}\label{X-Pkx-computation}
\E_k^x \big[ \e^{\scal{w}{X_t}} \big|\F_s \big]
 &= \E_N \Big[ \e^{\scal{w}{X_t}} M_t^{u_k^x}/M_s^{u_k^x} \big|\F_s \Big]
\nonumber\\
 &= \E_N \Big[ \exp\big( \phi_{T_N-t}(u_k^x)
         + \scal{\psi_{T_N-t}(u_k^x)+w}{X_t} \big)
                                \big|\F_s \Big] /M_s^{u_k^x} \nonumber\\
 &= \exp\Big( \phi_{T_N-t}(u_k^x) - \phi_{T_N-s}(u_k^x)
                     + \phi_{t-s}(\psi_{T_N-t}(u_k^x)+w)\Big) \nonumber\\
 &\quad\times
       \exp\Big\langle \psi_{t-s}(\psi_{T_N-t}(u_k^x)+w) - \psi_{T_N-s}(u_k^x),
X_s\Big\rangle,
\end{align}
where the above expectation is finite for every $w\in\mathcal{I}^{k,x}$; recall
\eqref{eq:I_T}. This shows that $X$ is a time-inhomogeneous affine process
under $\P_k^x$, while \eqref{X-Pxk-characteristics-1} follows by substituting
$s=0$ in \eqref{X-Pkx-computation} and using the flow equations \eqref{flow}.
\end{proof}

\begin{remark}\label{r:forward-price-models}
The preservation of the affine property of the driving process under all 
forward measures is a stability property shared by all forward price models in  
which the process $1+\delta_x F_k^x = \frac{B(\cdot, T_{k-1}^x)}{B(\cdot, 
T_{k}^x)}$ is modeled as a deterministic exponential transformation of the 
driving process. This is related to the density process of the measure change 
between subsequent forward measures given exactly as $\frac{\ud \P_{k-1}^x}{\ud 
\P_{k}^x}\Big|_{\cF_t} =  \frac{B(0, T_{k}^x)}{B(0, T_{k-1}^x)} \frac{B(t, 
T_{k-1}^x)}{B(t, T_{k}^x)}= \frac{1+\delta_x F_k^x(t)}{1+\delta_x F_k^x(0)}$, 
see \eqref{Pkx-densities}, which is of the same exponential form, and 
guarantees that when performing a measure change the driving process remains in 
the same class. We refer to \citet{EberleinKluge06} for an example of a forward 
price model driven by a time-inhomogeneous L\'evy process under all forward 
measures. The models in the spirit of the LIBOR market model (LMM), where it is 
rather the forward rate $F_k^x$ which is modeled as an exponential, do not 
possess this property. The measure change in these models yields the stochastic 
terms $\frac{\delta_x F_k^x}{1+ \delta_x F_k^x} $ appearing in the 
characteristics of the driving process (more precisely, in the drift and in the 
compensator of the random measure of jumps) under any forward measure different 
from the terminal one, which destroys the analytical tractability of the model. 
The tractability is often re-established by freezing the value of these terms at 
their initial value $\frac{\delta_x F_k^x(0)}{1+ \delta_x F_k^x(0)} $  --- an 
approximation referred to as \textit{freezing the drift}. This approximation is 
widely known to be unreliable in many realistic settings; cf. 
\cite*{PapapantoleonSchoenmakersSkovmand10} and the references therein. 

On the other hand, in LMMs the positivity of the rate $F_k^x$ is ensured, which 
in general may not be the case in the forward price models. Due to their 
specific construction, the affine LIBOR models are able to reconcile both of 
these properties: the positivity of the rate $F_k^x$ and the structure 
preservation for the driving process under all forward measures. We refer the 
interested reader to a detailed discussion on this issue in Section 3 of 
\citet{KellerResselPapapantoleonTeichmann09}. Finally, it should be emphasized 
that in the current market situation the observed OIS rates have also negative 
values, but this situation can easily be included in the affine \lib models; 
cf. Section \ref{subs:nrps} below. 
\end{remark}

\begin{remark}[Single curve and deterministic spread]
The multiple curve \alm easily reduces to its single curve counterpart
(cf. \citealt{KellerResselPapapantoleonTeichmann09}) by setting $v_k^x=u_k^x$
for all $x\in\mathcal{X}$ and $k\in\mathcal{K}^x$. Another interesting question
is whether the spread can be deterministic or, similarly, whether the LIBOR rate
can be a deterministic transformation of the OIS rate. \par
Consider, for example, a 2-dimensional driving process $X=(X^1,X^2)$ where
$X^1$ is an arbitrary affine process and $X^2$ the constant process (i.e.
$X_t^2\equiv X_0^2)$. Then, by setting
\begin{align*}
u_{k-1}^x=(u_{1,k-1}^{x},0)
 \quad\text{ and }\quad
v_k^x=(u_{1,k-1}^{x},v_{2,k-1}^{x})
\end{align*}
where $u_{1,k-1}^{x},v_{2,k-1}^{x}>0$ we arrive at
\begin{align*}
 1+\delta_xL_k^x(t) = (1+\delta_xF_k^x(t)) \, \e^{v_{2,k-1}^{x}\cdot X_0^2}.
\end{align*}
Therefore, the LIBOR rate is a deterministic transformation of the OIS rate,
although the spread $S_k^x$ as defined in \eqref{spread} is not deterministic.
In that case, the multiplicative spread $R_k^x$ defined in \eqref{mlpl-spread}
is obviously deterministic.
\end{remark}

\subsection{A model with negative rates and positive spreads}\label{subs:nrps}

The multiple curve \alm produces positive rates and spreads, which is 
consistent with the typical market observations. However, in the current market 
environment negative rates have been observed, while the spreads still remain 
positive. Negative interest rates (as well as spreads, if needed) can be easily 
accommodated in this setup by considering, for example, affine processes on 
$\R^d$ instead of $\Rp^d$ or `shifted' positive affine processes where 
$\text{supp}(X)\in[a,\infty)^d$ with $a<0$.

In order to illustrate the flexibility of the affine LIBOR models, we provide 
below an explicit specification which allows for negative OIS rates, while still 
preserving the positive spreads. It is based on a particular choice of the 
driving process and suitable assumptions on the vectors $u_k^x$ and $v_k^x$. 
Recall from Remark \ref{r:choice-of-u-v} that if the driving process is 
multidimensional, we have a certain freedom in the choice of the parameters 
$u_k^x$ and $v_k^x$ when fitting the initial term structure, that we shall 
exploit here. 

Starting from the \alm in \eqref{eq:LIBOR-rate}, we have an expression for the 
OIS rates $F_k^x$ and we will derive an expression for the multiplicative 
spreads $R_k^x$ as defined in \eqref{mlpl-spread}. We choose the multiplicative 
spreads  as a more convenient quantity instead of the additive spreads $S_k^x$ 
in \eqref{spread}, but obviously the additive spreads can easily be recovered 
from the multiplicative spreads and the OIS rates, and vice versa, by combining 
\eqref{spread} and \eqref{mlpl-spread}. Moreover, the two spreads always have 
the same sign, i.e. $R_k^x \geq 0$ if and only if  $S_k^x \geq 0$. 

The model specification below allows in addition to ensure the monotonicity of 
the spreads with respect to the tenor length, which is also a feature typically 
observed in the markets.  More precisely, this means that for any two tenors 
$\mathcal{T}^{x_1}$ and $\mathcal{T}^{x_2}$ such that $\mathcal{T}^{x_2} \subset 
\mathcal{T}^{x_1}$, i.e. such that $\delta_{x_1}\leq \delta_{x_2}$,  the spreads 
have  the following property:  for all $k \in \mathcal{K}^{x_1}$ and $j \in 
\mathcal{K}^{x_2}$ such that $[T_{k-1}^{x_1}, T_{k}^{x_1}) \subset 
[T_{j-1}^{x_2}, T_{j}^{x_2})$ with $T_{k-1}^{x_1} = T_{j-1}^{x_2}$, we have  
$$
R_k^{x_1}(t) \leq R_j^{x_2}(t),
$$
for all $t \leq T_{k-1}^{x_1}$. That is, the spreads are lower for shorter tenor 
lengths. For example, a 3-month spread is lower than a 6-month spread for a 
6-month period starting at the same time as the 3-month period.

According to \eqref{eq:LIBOR-rate}, the OIS rate $F_k^x$, for every $x \in 
\mathcal{X}$ and every $k \in \mathcal{K}^x$, is provided by 
\begin{equation}\label{eq:neg-example-1}
1 +\delta_x  F_k^x(t) = \frac{M_t^{u_{k-1}^x}}{M_t^{u_{k}^x}},
\end{equation}
where we note that $u_k^x =u_l$, for $l$ such that $T_k^x = T_l$. This process 
is a $\P^x_k$-martingale by construction. The multiplicative spreads $R_k^x(t)$
take now the form
\begin{equation}\label{eq:neg-example-2}
1 +\delta_x  R_k^x(t) 
 = \frac{1 +\delta_x  L_k^x(t)}{1 +\delta_x  F_k^x(t)} 
 = \frac{M_t^{v_{k-1}^x}}{M_t^{u_{k-1}^x}},
\end{equation}
which is a $\P^x_{k-1}$-martingale by construction. 

Let us now present a possible choice of the driving process which allows to 
accommodate $F_k^x(t) \in \R$, while keeping $R_k^x(t) \in \R_{\geqslant0}$, as 
well as ensuring the monotonicity of the spreads with respect to the tenor 
length. We assume that the initial term structure of forward OIS rates $F_k^x(0) 
\in \R$ and of multiplicative spreads $R_k^x(0) \in \R_{\geqslant0}$ are given, 
for every fixed $x$ and all $k \in \mathcal{K}^x$.  Moreover, we assume that the 
initial spreads are monotone with respect to the tenor, i.e. for every two 
tenors $x_1$ and $x_2$ such that $\delta_{x_1}\leq \delta_{x_2}$, we have 
$R_k^{x_1}(0) \leq R_j^{x_2}(0)$, for all $k \in \mathcal{K}^{x_1}$ and $j \in 
\mathcal{K}^{x_2}$ with $T_{k-1}^{x_1} = T_{j-1}^{x_2}$.

In order to fix ideas, we shall consider only a 2-dimensional affine process 
$X=(X^1, X^2)$ on the state space $\R \times  \R_{\geqslant0}$ such that $X^1$ 
and $X^2$ are independent. The construction can easily be extended to 
$d$-dimensional affine processes on $\R^n \times  \R_{\geqslant0}^m$, with 
$n+m=d$, such that the first $n$ components are independent of the last $m$ 
components. The forward OIS rates will be driven by both components of the 
driving process $X$ and for the spreads we shall use only the second component 
$X^2$, which takes values in $ \R_{\geqslant0}$, to ensure the nonnegativity.  
This can be achieved by imposing appropriate assumptions on the parameters 
$v_k^x$. We split the construction in two steps.

\textit{Step 1.} Given the initial term structure of forward OIS rates $F_k^x(0) 
\in \R$, for every fixed $x$ and all $k \in \mathcal{K}^x$, we apply Proposition 
\ref{initial-fit-multiple-curve-OIS} and find a sequence $(u_k^x)\subset \R 
\times \R_{\geqslant0}$ such that the model \eqref{eq:neg-example-1} 
fits the initial term structure. Note that $u_k^x$, $k \in \mathcal K^x$, do 
not have to be ordered and $F_k^x(t) \in \R$, for any $t$. 

\textit{Step 2.} Next, given the initial term structure of multiplicative 
spreads $R_k^x(0) \in \R_{\geqslant0}$, for every fixed $x$ and all $k \in 
\mathcal{K}^x$, we calculate the initial \lib rates $L_k^x(0) $ using  
\eqref{mlpl-spread}. Applying Proposition \ref{initial-fit-multiple-curve} we 
can find a sequence  $(v_k^x) \subset \R \times  \R_{\geqslant0} $ such that 
for each $k \in \mathcal{K}^x$, $v_{k-1}^x = (v_{1, k-1}^{x}, v_{2, k-1}^{x})$ 
satisfies $v_{1, k-1}^{x} = u_{1, k-1}^{x}$ and the model 
\eqref{eq:neg-example-2} fits the initial term structure. Note that even though 
we fixed here the first component $v_{1, k-1}^{x}$ of each of the vectors 
$v_{k-1}^x$, Proposition \ref{initial-fit-multiple-curve} ensures that the 
initial term structure can be fitted using only the second components 
$v_{2, k-1}^{x} $. This yields 
\begin{multline}\label{eq:neg-example-3}
1 +\delta_x  R_k^x(t)   = \frac{M_t^{v_{k-1}^x}}{M_t^{u_{k-1}^x}}  \\
  = \frac{ \exp \left(  
\phi^1_{T_N-t}(u_{1, k-1}^{x}) \!+\! \phi^2_{T_N-t}(v_{2, k-1}^{x}) \!+\! 
\psi^1_{T_N-t}(u_{1, k-1}^{x}) X_t^1 \!+\! \psi^2_{T_N-t}(v_{2, k-1}^{x}) 
X_t^2\right) }{ \exp \left(  \phi^1_{T_N-t}(u_{1, k-1}^{x}) \!+\! 
\phi^2_{T_N-t}(u_{2, k-1}^{x}) \!+\! \psi^1_{T_N-t}(u_{1, k-1}^{x}) X_t^1 \!+\! 
\psi^2_{T_N-t}(u_{2, k-1}^{x}) X_t^2\right)} \\
  = \frac{\exp \left( 
\phi^2_{T_N-t}(v_{2, k-1}^{x}) + \psi^2_{T_N-t}(v_{2, k-1}^{x}) X_t^2\right) 
}{\exp \left( \phi^2_{T_N-t}(u_{2, k-1}^{x}) + \psi^2_{T_N-t}(u_{2, k-1}^{x}) 
X_t^2\right)},
\end{multline}
due to the independence of $X^1$ and $X^2$, see 
\citet[Prop.~4.7]{KellerRessel08}. Therefore, $R_k^x $ is driven only by $X^2$ 
and the fact that the initial values $R_k^x(0) \in \R_{\geqslant0}$ implies 
that 
$
v_{2, k-1}^{x} \geq u_{2, k-1}^{x}
$
for all $k$. Consequently, we have $R_k^x(t) \in \R_{\geqslant0}$, for all $t$, 
which follows immediately from \eqref{eq:neg-example-3}. 

Finally, it remains to show that the monotonicity of the initial spreads with 
respect to the tenor $R_k^{x_1}(0) \leq R_j^{x_2}(0)$ implies the monotonicity 
at all times $t$, $R_k^{x_1}(t) \leq R_j^{x_2}(t)$, for all $x_1, x_2$ and $k, 
j$ as above. First note that $T_{k-1}^{x_1} = T_{j-1}^{x_2}=T_l$ implies 
$u_{k-1}^{x_1} = u_{j-1}^{x_2}=u_l$ and consequently
\begin{equation}
\label{eq:neg-example-4}
M_t^{u_{k-1}^{x_1}} = M_t^{u_{j-1}^{x_2}}.
\end{equation}
Hence, $R_k^{x_1}(0) \leq R_{j}^{x_2}(0) $ implies that necessarily 
$M_0^{v_{k-1}^{x_1}} \leq  M_0^{v_{j-1}^{x_2}}  $  by \eqref{eq:neg-example-2}. 
This in turn yields  $v_{k-1}^{x_1} \leq v_{j-1}^{x_2}$, or more precisely 
$v_{2, k-1}^{x_1} \leq v_{2, j-1}^{x_2}$ since $v_{1, k-1}^{x_1} =  
u_{1,k-1}^{x_1} = u_{1, j-1}^{x_2} = v_{1, j-1}^{x_2}$.  As a consequence, 
$R_k^{x_1}(t) \leq 
R_{j}^{x_2}(t) $, for all $t$, since 
$$
1 +\delta_x  R_k^{x_1}(t) 
 =  \frac{M_t^{v_{k-1}^{x_1}}}{M_t^{u_{k-1}^{x_1}}} 
 \leq  \frac{M_t^{v_{j-1}^{x_2}}}{M_t^{u_{j-1}^{x_2}}} 
 = 1 +\delta_x  R_j^{x_2}(t),
$$
due to \eqref{eq:neg-example-4}.
\section{Connection to LIBOR market models} 
\label{alm-lmm}

In this section, we will clarify the relationship between the \alms and the 
`classical' \lib market models, cf. \citet*{SandmannSondermannMiltersen95} and 
\citet*{BraceGatarekMusiela97}, and also \citet{Mercurio_2010} for the 
extension of the LIBOR market models to multiple curves. This relationship has 
not yet been investigated even in the single-curve framework of 
\citet{KellerResselPapapantoleonTeichmann09}. More precisely, we will embed the 
multiple curve \alm  \eqref{eq:LIBOR-rate} in the general semimartingale LIBOR 
market model of \citet{Jamshidian97} and derive the corresponding dynamics of 
OIS and LIBOR rates. We shall concentrate on affine diffusion processes for the 
sake of simplicity, in order to expose the ideas without too many technical 
details. The generalization to affine processes with jumps is straightforward 
and left to the interested reader. 

An affine diffusion process on the state space $D=\Rp^d$ is the solution 
$X=X^\mathrm{x}$ of the SDE
\begin{equation}\label{X}
\ud X_t = (b+BX_t)\dt + \sigma(X_t) \ud W_t^N, \quad X_0=\mathrm{x},
\end{equation}
where $W^N$ is a $d$-dimensional $\P_N$-Brownian motion. The coefficients $b$, 
$B=(\beta_1,\dots,\beta_d)$ and $\sigma$ have to satisfy the admissibility 
conditions for affine diffusions on $\Rp^d$, see \citet[Ch.~10]{Filipovic09}.
That is, the drift vectors satisfy
\begin{align}
b\in\Rp^d, \quad 
\beta_{i(i)} \in \R 
  \quad \text{and} \quad 
\beta_{i(j)} \in \Rp \,\, 
  \quad \text{for all} \,\, 1\le i,j \le d, i\ne j,
\end{align}
where $\beta_{i(j)}$ denotes the $j$-th element of the column vector $\beta_i$. 
Moreover, the diffusion matrix $\sigma: D \to \R^{d\times d}$ satisfies
\begin{align}
\sigma(z)\sigma(z)^{\mathsf{T}} = \sum_{i=1}^d \alpha_iz_i,
 \quad \text {for all } z \in D,
\end{align}
where $\alpha_i$ are symmetric, positive semidefinite matrices for all 
$1\le i\le d$, such that
\begin{align}\label{AP-diff-matrix}
\alpha_{i(ii)} \in \Rp 
  \quad \text{and} \quad
\alpha_{i(jk)} = 0 \,\, 
  \quad \text{for all} \,\, 1\le i,j,k \le d, i\ne j,k.
\end{align}
Here, $\alpha_{i(jk)}$ denotes the $j,k$-th entry of the matrix $\alpha_i$. 
Therefore, the affine diffusion process $X$ is componentwise described by
\begin{align}\label{X-compw}
\ud X^i_t = (b+BX_t)^i\dt 
       + \sqrt{X^i_t}\, \sigma_i\, \ud W_t^N,
\end{align}
for all $i=1,\dots,d$, where $\sigma_i=\sqrt{\alpha_{i(ii)}}\cdot\e_i$ (with 
$\e_i$ the unit vector).

\subsection{OIS dynamics}

We start by computing the dynamics of OIS rates. As in the previous section, we 
consider a fixed $x\in\mathcal{X}$ and the associated tenor structure 
$\mathcal{T}^x$.

Using the structure of the $\P_N$-martingale $M^{u_k^x}$ in \eqref{eq:Mu}, we 
have that
\begin{align}\label{eq:dM}
\ud M_t^{u_k^x} = M_t^{u_k^x} \psi_{T_N-t}(u_k^x) \ud X_t + \dots\dt.
\end{align}
Hence, applying It\^o's product rule to \eqref{eq:LIBOR-rate} and using 
\eqref{eq:dM} yields that
\begin{align*}
\ud F_k^x(t)
 &= \frac{1}{\delta_x} \ud\frac{M_t^{u_{k-1}^x}}{M_t^{u_k^x}}
  = \frac{1}{\delta_x} \frac{M_{t}^{u_{k-1}^x}}{M_{t}^{u_k^x}}
    \left(\psi_{T_N-t}(u_{k-1}^x) \!-\! \psi_{T_N-t}(u_k^x)\right)
    \!\ud X_t  + \dots\dt\\
 &= \frac{1}{\delta_x} \left(1+\delta_xF_k^x(t)\right)  
    \left(\psi_{T_N-t}(u_{k-1}^x) - \psi_{T_N-t}(u_k^x)\right)
    \ud X_t  + \dots\dt.
\end{align*}
Therefore, the OIS rates satisfy the following SDE
\begin{align}
\frac{\ud F_k^x(t)}{F_k^x(t)}
 = \frac{1+\delta_xF_k^x(t)}{\delta_xF_k^x(t)}
   \left(\psi_{T_N-t}(u_{k-1}^x) - \psi_{T_N-t}(u_k^x)\right) \ud X_t
 + \dots\dt
\end{align}
for all $k=2,...,N^x$. Now, using the dynamics of the affine process $X$ from 
\eqref{X-compw} we arrive at
\begin{align}
\frac{\ud F_k^x(t)}{F_k^x(t)} 
 &= \frac{1+\delta_xF_k^x(t)}{\delta_xF_k^x(t)}
   \sum_{i=1}^{d} \left(\psi_{T_N-t}^i(u_{k-1}^x) - \psi_{T_N-t}^i(u_k^x)\right)
    \ud X_t^i + \dots\dt \nonumber\\
 &\!\!\!\!\!\!\!\!\! = \frac{1+\delta_xF_k^x(t)}{\delta_xF_k^x(t)}
   \sum_{i=1}^{d} \left(\psi_{T_N-t}^i(u_{k-1}^x) - \psi_{T_N-t}^i(u_k^x)\right)
    \sqrt{X_t^i} \, \sigma_i \ud W_t^N + \dots\dt \nonumber\\
 &=: \Gamma_{x,k}(t) \, \ud W_t^N + \dots\dt,
\end{align}
where we define the volatility structure
\begin{align}\label{gxk}
\Gamma_{x,k}(t)
 &= \frac{1+\delta_xF_k^x(t)}{\delta_xF_k^x(t)} \sum_{i=1}^{d} 
    \left( \psi_{T_N-t}^i(u_{k-1}^x) -\psi_{T_N-t}^i(u_k^x) \right)
    \sqrt{X_t^i} \, \sigma_i\in\Rp^d.
\end{align}

On the other hand, we know from the general theory of discretely compounded 
forward rates (cf. \citealt{Jamshidian97}) that the OIS rate should satisfy the 
following SDE under the terminal measure $\P_N$
\begin{align}
\frac{\ud F_k^x(t)}{F_k^x(t)} 
 &= - \sum_{l=k+1}^{N^x} \frac{\delta_xF_l^x(t)}{1+\delta_xF_l^x(t)}
      \scal{\Gamma_{x,l}(t)}{\Gamma_{x,k}(t)} \,\dt
    + \Gamma_{x,k}(t) \,\ud W_t^N,
\end{align}
for the volatility structure $\Gamma_{x,k}$ given in \eqref{gxk}. Therefore, we 
get immediately that the $\P_k^x$-Brownian motion $W^{x,k}$ is related to the 
terminal Brownian motion $W^N$ via the equality
\begin{align}\label{Pkx-BM}
W^{x,k} 
 &:= W^N - \sum_{l=k+1}^{N^x} \int_0^\cdot 
     \frac{\delta_xF_l^x(t)}{1+\delta_xF_l^x(t)}\Gamma_{x,l}(t)\,\dt \nonumber\\
 &= W^N - \sum_{l=k+1}^{N^x} \sum_{i=1}^{d} \int_0^\cdot
    \left( \psi_{T_N-t}^i(u_{l-1}^x) -\psi_{T_N-t}^i(u_l^x) \right)
    \sqrt{X_t^i} \, \sigma_i \,\dt.
\end{align}
Moreover, the dynamics of $X$ under $\P_k^x$ take the form
\begin{align}\label{Pkx-X}
\ud X_t^i 
 &= \left(b+BX_t\right)^i\dt + \sqrt{X_t^i} \, \sigma_i
    \ud W_t^{x,k} \nonumber\\
 &\quad+ \sigma_i \sqrt{X_t^i} \sum_{l=k+1}^{N^x} \sum_{j=1}^{d}
    \left(\psi_{T_N-t}^{j} \left(u_{l-1}^x\right)
         -\psi_{T_N-t}^{j} \left(u_l^x\right)\right) 
      \sqrt{X_t^{j}}\sigma_j\dt \nonumber\\
 &= \left(b_i+\left(BX_t\right)^i + \sum_{l=k+1}^{N^x}
     \left(\psi_{T_N-t}^i(u_{l-1}^x)-\psi_{T_N-t}^i(u_l^x)\right)
      X_t^i \, |\sigma_i|^2 \right) \dt \nonumber\\
 &\quad+ \sqrt{X_t^i} \, \sigma_i \ud W_t^{x,k},
\end{align}
for all $i=1,\dots,d$. The last equation provides an alternative proof to 
Proposition \ref{X-Pkx-characteristics} in the setting of affine diffusions, 
since it shows explicitly that $X$ is a time-inhomogeneous affine diffusion 
process under $\P_k^x$. One should also note from \eqref{Pkx-BM}, that the 
difference between the terminal and the forward Brownian motion does not depend 
on other forward rates as in `classical' LIBOR market models. As mentioned in 
Remark \ref{r:forward-price-models}, the same property is shared by forward 
price models.

Thus, we arrive at the following $\P_k^x$-dynamics for the OIS rates
\begin{align}\label{OIS-Pkx-SDE}
\frac{\ud F_k^x(t)}{F_k^x(t)} 
 &= \Gamma_{x,k}(t) \, \ud W_t^{x,k}
\end{align}
with the volatility structure $\Gamma_{x,k}$ provided by \eqref{gxk}. The 
structure of $\Gamma_{x,k}$ shows that there is a built-in shift in the model, 
whereas the volatility structure is determined by $\psi$ and $\sigma$.

\subsection{LIBOR dynamics}

Next, we derive the dynamics of the LIBOR rates associated to the same tenor. 
Using \eqref{eq:LIBOR-rate}, \eqref{eq:Mv} and repeating the same steps as 
above, we obtain the following
\begin{align*}
\frac{\ud L_k^x(t)}{L_k^x(t)} 
 &= \frac{1}{\delta_xL_k^x(t)} \ud \frac{M_t^{v_{k-1}^x}}{M_t^{u_k^x}} \\
 &= \frac{1}{\delta_xL_k^x(t)} \frac{M_t^{v_{k-1}^x}}{M_t^{u_k^x}}
    \left(\psi_{T_N-t}(v_{k-1}^x)-\psi_{T_N-t}(u_k^x)\right) \ud X_t 
  + \dots\dt\\
 &= \frac{1+\delta_xL_k^x(t)}{\delta_xL_k^x(t)} \sum_{i=1}^{d}
    \left(\psi_{T_N-t}^i(v_{k-1}^x)\!-\!\psi_{T_N-t}^i(u_k^x)\right) 
    \!\sqrt{X_t^i}
    \, \sigma_i \ud W_t^N + \dots\dt,
\end{align*}
for all $k=2,...,N^{x}.$ Similarly to \eqref{gxk} we introduce the volatility 
structure
\begin{equation}\label{exk}
\Lambda_{x,k}(t) := \frac{1+\delta_xL_k^x(t)}{\delta_xL_k^x(t)}
  \sum_{i=1}^d \left(\psi_{T_N-t}^i(v_{k-1}^x)\!-\!\psi_{T_N-t}^i(u_k^x)\right)
  \!\sqrt{X_t^i} \, \sigma_i \in\Rp^{d},
\end{equation}
and then obtain for $L_k^x$ the following $\P_k^x$-dynamics
\begin{align}\label{LIBOR-dynamics}
\frac{\ud L_k^x(t)}{L_k^x(t)} = \Lambda_{x,k}(t) \, \ud W_t^{x,k},
\end{align}
where $W^{x,k}$ is the $\P_k^x$-Brownian motion given by \eqref{Pkx-BM}, while 
the dynamics of $X$ are provided by \eqref{Pkx-X}.

\subsection{Spread dynamics}

Using that $S_k^x=L_k^x-F_k^x$, the dynamics of LIBOR and OIS rates under the 
forward measure $\P_k^x$ in \eqref{OIS-Pkx-SDE} and \eqref{LIBOR-dynamics}, as 
well as the structure of the volatilities in \eqref{gxk} and \eqref{exk}, after 
some straightforward calculations we arrive at
\begin{align*}
\ud S_k^x(t)
 &= \left\{ S_k^x(t) \Upsilon_t (v_{k-1}^x,u_k^x)
  + \frac{1+\delta_x F_k^x(t)}{\delta_x}      
    \Upsilon_t (v_{k-1}^x,u_{k-1}^x) \right\} \ud W_t^{x,k},
\end{align*}
where
\begin{align}
\Upsilon_t(w,y)
 := \sum_{i=1}^d \left( 
    \psi_{T_N-t}^i(w)\!-\!\psi_{T_N-t}^i(y)\right) 
    \!\sqrt{X_t^i}\,\sigma_i.
\end{align}

\subsection{Instantaneous correlations}
\label{inst_corr}

The derivation of the SDEs that OIS and LIBOR rates satisfy allows to provide 
quickly formulas for various quantities of interest, such as the instantaneous 
correlations between OIS and LIBOR rates or LIBOR rates with different 
maturities or tenors. We have, for example, that the instantaneous correlation 
between the LIBOR rates maturing at $T_k^x$ and $T_l^x$ is heuristically 
described by
\begin{align*}
\text{Corr}_t\big[L_k^x,L_l^x\big]
 = \frac{\frac{\ud L_k^x(t)}{L_k^x(t)} \cdot \frac{\ud L_l^x(t)}{L_l^x(t)}}
        {\sqrt{\frac{\ud L_k^x(t)}{L_k^x(t)} \cdot
               \frac{\ud L_k^x(t)}{L_k^x(t)}}
	 \sqrt{\frac{\ud L_l^x(t)}{L_l^x(t)} \cdot
	       \frac{\ud L_l^x(t)}{L_l^x(t)}}}
\end{align*}
therefore we get that
\begin{multline*}
\text{Corr}_t\big[L_k^x,L_l^x\big]
 \stackrel{\eqref{LIBOR-dynamics}}{=} 
   \frac{\scal{\Lambda_{x,k}}{\Lambda_{x,l}}}
        {|\Lambda_{x,k}||\Lambda_{x,l}|} \\
 = \frac{\sum_{i=1}^{d} \left( \psi_{T_N-t}^i\left(v_{k-1}^{x}\right)
     - \psi_{T_N-t}^i\left(u_{k}^{x}\right)  \right)  
     \left( \psi_{T_N-t}^i\left(v_{l-1}^x\right) 
     - \psi_{T_N-t}^i\left(u_{l}^{x}\right)  \right)  X^i |\sigma_i|^2}
    {\sqrt{\sum_{i=1}^{d} \left(\psi_{T_N-t}^i\left(v_{k-1}^{x}\right)
     - \psi_{T_N-t}^i\left(u_{k}^{x}\right) \right)^{2} X^i |\sigma_i|^2}}
  \\ \times \frac{1}{ \sqrt{\sum_{i=1}^{d} \left(
       \psi_{T_N-t}^i\left(v_{l-1}^x\right) - \psi_{T_N-t}^i\left(u_l^x\right) 
        \right)^{2} X^i |\sigma_i|^2}}.
\end{multline*}
Similar expressions can be derived for other instantaneous correlations, e.g.
\begin{align*}
\text{Corr}_t\big[F_k^x,L_k^x\big] 
 \quad\text{ or }\quad
\text{Corr}_t\big[L_k^{x_1},L_k^{x_2}\big].
\end{align*}

Instantaneous correlations are important for describing the (instantaneous) 
interdependencies between different \lib rates. In the \lib market model for 
instance, the rank of the instantaneous correlation matrix determines the 
number of factors (e.g. Brownian motions) that is needed to drive the model. 
Explicit expressions for terminal correlations between LIBOR rates are provided 
in Appendix \ref{app-corr}.
\section{Valuation of swaps and caps}
\label{caps}

\subsection{Interest rate and basis swaps}
\label{s:swaps}

We start by presenting a fixed-for-floating payer interest rate swap on a 
notional amount normalized to $1$, where fixed payments are exchanged for 
floating payments linked to the LIBOR rate. The LIBOR rate is set in advance and 
the payments are made in arrears, while we assume for simplicity that the timing 
and frequency of the payments of the floating leg coincides with those of the 
fixed leg. The swap is initiated at time $T_p^x  \geq 0$, where 
$x\in\mathcal{X}$ and $p \in \mathcal{K}^x$. The collection of payment dates is 
denoted by $\mathcal{T}^x_{pq}:=\{T^x_{p+1} < \cdots < T^x_{q}\}$, and the fixed 
rate is denoted by $K$. Then, the time-$t$ value of the swap, for $t\leq 
T^{x}_p$, is given by
\begin{align}\label{eq:swap-value}
\nonumber 
\mathbb{S}_t(K, \mathcal{T}^x_{pq}) 
 &= \sum_{k=p+1}^{q} \delta_x B(t, T^x_k) \, 
    \E^{x}_k \big[L(T^x_{k-1}, T^x_k) - K | \mathcal{F}_t\big] \\
 &= \delta_x \sum_{k=p+1}^{q}  B(t, T^x_k) \left(  L_k^x(t) - K \right).
\end{align}
Thus, the \textit{fair swap rate} $K_t(\mathcal{T}^x_{pq})$  is provided by
\begin{align}\label{eq:swap-rate}
K_t(\mathcal{T}^x_{pq}) 
 &= \frac{\sum_{k=p+1}^{q} B(t, T^x_k) L_k^x(t)}{\sum_{k=p+1}^{q} B(t, T^x_k)}.
\end{align}

Basis swaps are new products in interest rate markets, whose value reflects the 
discrepancy between the LIBOR rates of \textit{different} tenors. A basis swap 
is a swap where two streams of floating payments linked to the LIBOR rates of 
different tenors are exchanged. For example, in a 3m--6m basis swap, a 3m-LIBOR 
is paid (received) quarterly and a 6m-LIBOR is received (paid) semiannually. We 
assume in the sequel that both rates are set in advance and paid in arrears; of 
course, other conventions regarding the payments on the two legs of a basis 
swap also exist. A more detailed account on basis swaps can be found in 
\citet[Section 5.2]{Mercurio_2010a} or in 
\citet[Section~2.4~and~Appendix~F]{Filipovic_Trolle_2011}. Note that in the 
pre-crisis setup the value of such a product would have been zero at any time 
point, due to the no-arbitrage relation between the LIBOR rates of different 
tenors; see e.g. \citet*{Crepey_Grbac_Nguyen_2011}.

Let us consider a basis swap associated with two tenor structures denoted by 
$\mathcal{T}^{x_1}_{pq}:=\{T^{x_1}_{p_1}<\ldots<T^{x_1}_{q_1}\}$ and 
$\mathcal{T}^{x_2}_{pq}:=\{T^{x_2}_{p_2}<\ldots<T^{x_2}_{q_2}\}$, where 
$T^{x_1}_{p_1}=T^{x_2}_{p_2}\geq0$, $T^{x_1}_{q_1}=T^{x_2}_{q_2}$ and
$\mathcal{T}^{x_2}_{pq}\subset\mathcal{T}^{x_1}_{pq}$. The notional amount is 
again assumed to be $1$ and the swap is initiated at time $T^{x_1}_{p_1}$, 
while the first payments are due at times $T^{x_1}_{p_1+1}$ and 
$T^{x_2}_{p_2+1}$ respectively. The basis swap spread is a fixed rate $S$
which is added to the payments on the shorter tenor length. More precisely, for 
the $x_1$-tenor, the floating interest rate $L(T^{x_1}_{i-1},T^{x_1}_i)$ at 
tenor date $T^{x_1}_i$ is replaced by $L(T^{x_1}_{i-1},T^{x_1}_i)+S$, for every 
$i\in\{p_1+1,\ldots,q_1\}$. The time-$t$ value of such an agreement is given,
for $0\le t\le T^{x_1}_{p_1}=T^{x_2}_{p_2}$, by
\begin{multline}\label{eq:basis-swap-aux}
\mathbb{BS}_t(S,\mathcal{T}^{x_1}_{pq},\mathcal{T}^{x_2}_{pq}) 
 = \sum_{i=p_2+1}^{q_2} \delta_{x_2} B(t,T^{x_2}_i) \,
    \E^{x_2}_{i} \left[L(T^{x_2}_{i-1},T^{x_2}_i) | \cF_t\right]  \\
 \qquad\qquad - \sum_{i=p_1+1}^{q_1} \delta_{x_1} B(t,T^{x_1}_i) \,
    \E^{x_1}_i \left[L(T^{x_1}_{i-1}, T^{x_1}_i) + S | \cF_t \right] \\
 = \sum_{i=p_2+1}^{q_2} \delta_{x_2} B(t,T^{x_2}_i)  L^{x_2}_i(t) \,
  - \sum_{i=p_1+1}^{q_1} \delta_{x_1} B(t,T^{x_1}_i) \big(L^{x_1}_i(t)+S\big).
\end{multline}
We also want to compute the \textit{fair basis swap spread} 
$S_t(\mathcal{T}^{x_1}_{pq},\mathcal{T}^{x_2}_{pq})$. This is the 
spread that makes the value of the basis swap equal zero at time $t$, i.e. it 
is obtained by solving 
$\mathbb{BS}_t(S,\mathcal{T}^{x_1}_{pq},\mathcal{T}^{x_2}_{pq})=0$. We get that
\begin{equation}\label{eq:basis-swap-spread}
S_t(\mathcal{T}^{x_1}_{pq},\mathcal{T}^{x_2}_{pq})
 = \frac{\sum_{i=p_2+1}^{q_2} \delta_{x_2}  B(t,T^{x_2}_i)  L^{x_2}_i(t) 
	- \sum_{i=p_1+1}^{q_1}  \delta_{x_1}  B(t,T^{x_1}_i) L^{x_1}_i(t)}
	{\sum_{i=p_1+1}^{q_1} \delta_{x_1} B(t,T^{x_1}_i)}.
\end{equation}
The formulas for the fair swap rate and basis spread can be used to bootstrap 
the initial values of LIBOR rates from market data, see 
\citet[\S2.4]{Mercurio_2010a}.

\subsection{Caps}
\label{subsection:Caplets}

The valuation of caplets, and thus caps, in the multiple curve \alm is an easy 
task, which has complexity equal to the complexity of the valuation of caplets 
in the single-curve \alm; compare with Proposition 7.1 in 
\citet{KellerResselPapapantoleonTeichmann09}. There are two reasons for this: 
on the one hand, the LIBOR rate is modeled directly, see \eqref{eq:LIBOR-rate}, 
as opposed to e.g. \citet{Mercurio_2010} where the LIBOR rate is modeled 
implicitly as the sum of the OIS rate and the spread. In our approach, the 
valuation of caplets remains a one-dimensional problem, while in the latter 
approach it becomes a `basket' option on the OIS rate and the spread. On the 
other hand, the driving process remains affine under any forward measure, cf. 
Proposition \ref{X-Pkx-characteristics}, which allows the application of Fourier 
methods for option pricing. In the sequel, we will derive semi-explicit pricing 
formulas for any multiple curve \alm. Let us point out that we do not need to 
`freeze the drift' as is customary in \lib market models with jumps (see Remark 
\ref{r:forward-price-models}).

\begin{proposition}
Consider an $x$-tenor caplet with strike $K$ that pays out 
$\delta_x(L(T_{k-1}^x,T_k^x)-K)^+$ at time $T_k^x$. The time-0 price is provided 
by
\begin{align}\label{MC-ALM-caplets}
\mathbb{C}_0(K,T_k^x)
 &= \frac{B(0,T_k^x)}{2\pi} \int_\R K_x^{1-R+iw}
     \frac{\Theta_{\mathcal{W}^x_{k-1}}(R-iw)}{(R-iw)(R-1-iw)} \dw,
\end{align}
for $R\in(1,\infty)\cap\widetilde{\mathcal{I}}^{k,x}$, assuming that 
$(1,\infty)\cap\widetilde{\mathcal{I}}^{k,x}\neq\emptyset$, where 
$K_x=1+\delta_xK$, $\Theta_{\mathcal{W}^x_{k-1}}$ is given by \eqref{caplet-2}, 
while the set $\widetilde{\mathcal{I}}^{k,x}$ is defined as 
\begin{align*}
\widetilde{\mathcal{I}}^{k,x} 
 = \set{z\in\R: (1-z)\psi_{T_N-T_{k-1}^x}(u_k^x)
        + z\psi_{T_N-T_{k-1}^x}(v_{k-1}^x) \in \mathcal{I}_T}.
\end{align*}
\end{proposition}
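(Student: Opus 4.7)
The plan is to proceed by the now-standard damped Fourier approach (Carr--Madan, in the form used by Eberlein--Glau--Papapantoleon), exploiting the fact that, by construction, $1+\delta_x L_k^x(T_{k-1}^x)$ is the exponential of an affine function of $X_{T_{k-1}^x}$. First I pass to the forward measure $\P_k^x$ via the numeraire $B(\cdot,T_k^x)$, so that
\[
\mathbb{C}_0(K,T_k^x) = B(0,T_k^x)\,\delta_x\,\E_k^x\bigl[(L(T_{k-1}^x,T_k^x)-K)^+\bigr].
\]
Using \eqref{eq:LIBOR-LIB-connection} together with \eqref{eq:LIBOR-rate}, I rewrite the payoff as $(e^{\mathcal{W}^x_{k-1}}-K_x)^+$, where $K_x = 1+\delta_x K$ and
\[
\mathcal{W}^x_{k-1} := \log\!\bigl(1+\delta_x L_k^x(T_{k-1}^x)\bigr) = \log M^{v_{k-1}^x}_{T_{k-1}^x}-\log M^{u_{k}^x}_{T_{k-1}^x}.
\]
This reduces the caplet to a European call on the random variable $\mathcal{W}^x_{k-1}$ under $\P_k^x$.

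Next, I invoke the dampened Fourier representation of the call payoff: for any $R>1$,
\[
(e^s-K_x)^+ = \frac{1}{2\pi}\int_\R K_x^{1-R+iw}\frac{e^{(R-iw)s}}{(R-iw)(R-1-iw)}\,\dw,
\]
which follows by an elementary residue/Fourier computation (damp by $e^{-Rs}$, compute the transform on $(\log K_x,\infty)$, and invert). Substituting into the expectation and applying Fubini yields
\[
\E_k^x\bigl[(e^{\mathcal{W}^x_{k-1}}-K_x)^+\bigr] = \frac{1}{2\pi}\int_\R K_x^{1-R+iw}\frac{\Theta_{\mathcal{W}^x_{k-1}}(R-iw)}{(R-iw)(R-1-iw)}\,\dw,
\]
where $\Theta_{\mathcal{W}^x_{k-1}}(z):=\E_k^x[e^{z\mathcal{W}^x_{k-1}}]$. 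Multiplying by $B(0,T_k^x)\delta_x/\delta_x=B(0,T_k^x)$ gives the claimed formula.

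The central technical point is to justify Fubini, i.e.\ to show that the integrand is absolutely integrable in $w$ for the chosen $R$. The bound $|\Theta_{\mathcal{W}^x_{k-1}}(R-iw)|\le\Theta_{\mathcal{W}^x_{k-1}}(R)$ holds trivially, and the prefactor decays like $|w|^{-2}$, so the only thing to check is $\Theta_{\mathcal{W}^x_{k-1}}(R)<\infty$. By the tower property applied on $\F_{T_{k-1}^x}$ and the density \eqref{Pkx-densities},
\[
\Theta_{\mathcal{W}^x_{k-1}}(z) = \frac{1}{M_0^{u_k^x}}\,\E_N\!\left[(M^{u_k^x}_{T_{k-1}^x})^{1-z}(M^{v_{k-1}^x}_{T_{k-1}^x})^{z}\right],
\]
which, after substituting the exponential-affine form of $M^{u_k^x}$ and $M^{v_{k-1}^x}$, is finite precisely when
\[
(1-z)\psi_{T_N-T_{k-1}^x}(u_k^x)+z\,\psi_{T_N-T_{k-1}^x}(v_{k-1}^x)\in\mathcal{I}_{T_{k-1}^x},
\]
and $\mathcal{I}_T\subseteq\mathcal{I}_{T_{k-1}^x}$ shows that $R\in\widetilde{\mathcal{I}}^{k,x}$ is sufficient. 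The hypothesis $(1,\infty)\cap\widetilde{\mathcal{I}}^{k,x}\neq\emptyset$ then guarantees that such an $R$ indeed exists, simultaneously accommodating the damping requirement $R>1$ and the MGF-finiteness.

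I expect no serious obstacle: the argument is routine once the payoff is cast in terms of $\mathcal{W}^x_{k-1}$. The only slightly delicate step is the identification of the admissible damping strip, where one must track through the change of measure that the effective ``log-moneyness variable'' depends linearly on $X_{T_{k-1}^x}$ through the combination $(1-z)\psi_{T_N-T_{k-1}^x}(u_k^x)+z\psi_{T_N-T_{k-1}^x}(v_{k-1}^x)$ rather than through a single $\psi$, which is exactly the combination appearing in the definition of $\widetilde{\mathcal{I}}^{k,x}$. A closed-form expression for $\Theta_{\mathcal{W}^x_{k-1}}$ (the formula labeled (caplet-2)) then follows immediately from Proposition~\ref{X-Pkx-characteristics} applied with $w=\psi_{T_N-T_{k-1}^x}(v_{k-1}^x)-\psi_{T_N-T_{k-1}^x}(u_k^x)$, combined with the deterministic prefactor coming from the $\phi$-terms in $M^{u_k^x}$ and $M^{v_{k-1}^x}$.
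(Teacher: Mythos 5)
Your proposal is correct and follows essentially the same route as the paper: both reduce the caplet to a call on $\e^{\mathcal{W}^x_{k-1}}$ with $\mathcal{W}^x_{k-1}=A+\scal{B}{X_{T_{k-1}^x}}$ affine in $X_{T_{k-1}^x}$, and then apply the damped Fourier valuation formula, with the moment generating function $\Theta_{\mathcal{W}^x_{k-1}}$ made explicit by the affine property under $\P_k^x$. The only cosmetic difference is that the paper cites \citet[Thm 2.2, Ex. 5.1]{EberleinGlauPapapantoleon08} for the Fourier step and obtains $\Theta_{\mathcal{W}^x_{k-1}}$ from Proposition \ref{X-Pkx-characteristics} (noting $z\in\widetilde{\mathcal{I}}^{k,x}\Rightarrow zB\in\mathcal{I}^{k,x}$), whereas you re-derive both the payoff transform and the integrability condition by changing measure back to $\P_N$ — the resulting admissible strip is the same.
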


\begin{proof}
Using \eqref{LIBOR-defin}  and \eqref{eq:LIBOR-rate} the time-0 price of 
the caplet equals
\begin{align*}
\mathbb{C}_0(K,T_k^x)
 &= \delta_x \, B(0,T_k^x) \, \E_k^x\big[ (L(T_{k-1}^x, T_k^x) - K)^+ \big] \\
 &= \delta_x \, B(0,T_k^x) \, \E_k^x\big[ (L_k^x(T_{k-1}^x) - K)^+ \big] \\
 &= B(0,T_k^x) \, \E_k^x\Big[ \Big(
    M_{T_{k-1}^x}^{v_{k-1}^x} / M_{T_{k-1}^x}^{u_k^x} - K_x \Big)^+ \Big] \\
 &= B(0,T_k^x) \, 
    \E_k^x\Big[ \Big(\e^{\mathcal{W}_{k-1}^x} - K_x\Big)^+ \Big],
\end{align*}
where
\begin{align}\label{caplet-1}
\mathcal{W}^x_{k-1}
 &= \log\left( M_{T_{k-1}^x}^{v_{k-1}^x} / M_{T_{k-1}^x}^{u_k^x} \right)\notag\\
 &= \phi_{T_N-T_{k-1}^x}(v_{k-1}^x) - \phi_{T_N-T_{k-1}^x}(u_k^x) \notag\\
 &\quad + \big\langle \psi_{T_N-T_{k-1}^x}(v_{k-1}^x)
        - \psi_{T_N-T_{k-1}^x}(u_k^x),X_{T^x_{k-1}} \big\rangle \notag\\
 &=: A + \scal{B}{X_{T^x_{k-1}}}.
\end{align}
Now, using \citet*[Thm 2.2, Ex. 5.1]{EberleinGlauPapapantoleon08}, we arrive 
directly at \eqref{MC-ALM-caplets}, where $\Theta_{\mathcal{W}^x_{k-1}}$ 
denotes the $\P_k^x$-moment generating function of the random variable 
$\mathcal{W}_{k-1}^x$, i.e. for $z\in\widetilde{\mathcal{I}}^{k,x}$,
\begin{align}\label{caplet-2}
\Theta_{\mathcal{W}^x_{k-1}}(z)
 &= \E_k^x\big[ \e^{z\mathcal{W}_{k-1}^x} \big] 
  = \E_k^x\big[ \exp\big( z(A + \scal{B}{X_{T^x_{k-1}}}) \big) \big] \notag\\
 &= \exp\Big( zA + \phi^{k,x}_{T^x_{k-1}}(zB)
             + \big\langle\psi^{k,x}_{T^x_{k-1}}(zB),X_0\big\rangle \Big).
\end{align}
The last equality follows from Proposition \ref{X-Pkx-characteristics}, noting 
that $z \in \widetilde{\mathcal{I}}^{k,x}$ implies $z B \in \mathcal{I}^{k,x}$.
\end{proof}
\section{Valuation of swaptions and basis swaptions}
\label{val:swaptions}

This section is devoted to the pricing of options on interest rate and basis 
swaps, in other words, to the pricing of swaptions and basis swaptions. In the 
first part, we provide general expressions for the valuation of swaptions and 
basis swaptions making use of the structure of multiple curve \alms. In the 
following two parts, we derive efficient and accurate approximations for the 
pricing of swaptions and basis swaptions by further utilizing the model 
properties, namely the preservation of the affine structure under any forward 
measure, and applying the linear boundary approximation developed by 
\citet{SingletonUmantsev02}. Similarly to the pricing of caplets, also here  we 
do not have to `freeze the drift', while in special cases we can even derive 
closed or semi-closed form solutions (cf. 
\citealt[\S8]{KellerResselPapapantoleonTeichmann09}).

Let us consider first a payer swaption with strike rate $K$ and exercise date 
$T^x_p$ on a fixed-for-floating interest rate swap starting at $T^x_p$ and 
maturing at $T^x_q$; this was defined in Section \ref{s:swaps}. A swaption can 
be regarded as a sequence of fixed payments 
$\delta_x(K_{T^x_p}(\mathcal{T}^x_{pq})-K)^+$ that are received at the payment 
dates $T^x_{p+1},\ldots,T^x_q$; see \citet[Section~13.1.2, 
p.~524]{MusielaRutkowski05}. Here $K_{T^x_p}(\mathcal{T}^x_{pq})$ is the swap 
rate of the underlying swap at time $T^x_p$, cf. \eqref{eq:swap-rate}. Note that 
the classical transformation of a payer (resp. receiver) swaption into a put 
(resp. call) option on a coupon bond is not valid in the multiple curve setup, 
since \lib rates cannot be expressed in terms of zero coupon bonds; see Remark 
\ref{LIBOR-neq-ZCB}.

The value of the swaption at time $t \leq T^x_{p}$ is provided by
\begin{multline*}
\mathbb{S}^+_t(K,\mathcal{T}^x_{pq})
 = B(t,T^x_p) \sum_{i=p+1}^{q} \delta_x \, \E^{x}_p \left[ B(T^x_p,T^x_i)
    \left( K_{T^x_p}(\mathcal{T}^x_{pq}) - K \right)^+ \Big| \cF_t \right] \\
 = B(t,T^x_p) \, \E^{x}_p \left[ \left( \sum_{i=p+1}^{q} \delta_x
    L^x_i(T^x_p) B(T^x_p,T^x_i)
  - \sum_{i=p+1}^{q} \delta_x K B(T^x_p,T^x_i) \right)^+  \Big| \cF_t  \right]
\end{multline*}
since the swap rate $K_{T^x_p}(\mathcal{T}^x_{pq})$ is given by 
\eqref{eq:swap-rate} for $t=T_p^x$. Using \eqref{OIS-defin}, 
\eqref{eq:LIBOR-rate} and a telescoping product, we get that
\[
 B(T^x_p,T^x_i) 
 = \frac{B(T^x_p,T^x_i)}{B(T^x_p,T^x_{i-1})} 
   \frac{B(T^x_p,T^x_{i-1})}{B(T^x_p,T^x_{i-2})}  \cdots  
   \frac{B(T^x_p,T^x_{p+1})}{B(T^x_p,T^x_{p})} 
 = \frac{M_{T_p^x}^{u_i^x}}{M_{T_p^x}^{u_{p}^x}}.
\]
Together with \eqref{eq:LIBOR-rate} for   $L^x_i(T^x_p)$, this yields
\begin{align}\label{val-swaption-general} 
\mathbb{S}^+_t(K,\mathcal{T}^x_{pq})
 &= B(t,T^x_p) \, \E^{x}_p \left[  \left( \sum_{i=p+1}^{q}
     \frac{M_{T_p^x}^{v_{i-1}^x}}{M_{T_p^x}^{u_p^x}} - \sum_{i=p+1}^{q} K_x
     \frac{M_{T_p^x}^{u_i^x}}{M_{T_p^x}^{u_p^x}} \right)^+  \Big| \cF_t  \right] \nonumber \\
 &= B(t, T_N) \, \E_N \left[ \left( \sum_{i=p+1}^{q} M_{T_p^x}^{v_{i-1}^x}
    - \sum_{i=p+1}^{q} K_x M_{T_p^x}^{u_i^x} \right)^+  \Big| \cF_t  \right],
\end{align}
where $K_x:=1+\delta_xK$ and the second equality follows from the measure change from $\P_p^x$ to  
$\P_N$ as given in \eqref{Pkx-densities}.

Next, we move on to the pricing of basis swaptions. A basis swaption is an
option to enter a basis swap with spread $S$. We consider a basis swap as
defined in Section \ref{s:swaps}, which starts at $T^{x_1}_{p_1}=T^{x_2}_{p_2}$
and ends at $T^{x_1}_{q_1}=T^{x_2}_{q_2}$, while we assume that the exercise
date is $T^{x_1}_{p_1}$. The payoff of a basis swap at time $T^{x_1}_{p_1}$ is
given by \eqref{eq:basis-swap-aux} for $t=T^{x_1}_{p_1}$. Therefore, the price
of a basis swaption at time $t \leq T^x_{p}$ is provided by
\begin{align*}
\mathbb{BS}^+_t(S,\mathcal{T}^{x_1}_{pq},\mathcal{T}^{x_2}_{pq})
 &= B(t,T^{x_1}_{p_1}) \, \E^{x_1}_{p_1} \left[ \left( \sum_{i=p_2+1}^{q_2}
    \delta_{x_2} L^{x_2}_i(T^{x_2}_{p_2}) B(T^{x_2}_{p_2},T^{x_2}_i)
     \right. \right. \\ &\qquad\qquad\quad \left. \left.
  - \sum_{i=p_1+1}^{q_1}  \delta_{x_1} \big(L^{x_1}_i(T^{x_1}_{p_1}) + S\big)
     B(T^{x_1}_{p_1},T^{x_1}_i) \right)^+   \Big| \cF_t \right].
\end{align*}
Along the lines of the derivation for swaptions and using 
$M_{T_{p_2}^{x_2}}^{u_{p_2}^{x_2}}=M_{T_{p_1}^{x_1}}^{u_{p_1}^{x_1}}$ 
(cf. \eqref{eq:u}), we arrive at
\begin{multline}\label{val-bswaption-general}
\mathbb{BS}^+_0(S,\mathcal{T}^{x_1}_{pq},\mathcal{T}^{x_2}_{pq}) = \\
 = B(t,T^{x_1}_{p_1}) \, \E^{x_1}_{p_1} \left[ \left( \sum_{i=p_2+1}^{q_2}
    \left( M_{T_{p_2}^{x_2}}^{v_{i-1}^{x_2}} / M_{T_{p_2}^{x_2}}^{u_{p_2}^{x_2}}
   - M_{T_{p_2}^{x_2}}^{u_{i}^{x_2}} / M_{T_{p_2}^{x_2}}^{u_{p_2}^{x_2}} \right)
 \right.\right. \\ \left.\left.
   - \sum_{i=p_1+1}^{q_1} \left(
     M_{T_{p_1}^{x_1}}^{v_{i-1}^{x_1}} / M_{T_{p_1}^{x_1}}^{u_{p_1}^{x_1}}
   - S_{x_1} M_{T_{p_1}^{x_1}}^{u_i^{x_1}} / M_{T_{p_1}^{x_1}}^{u_{p_1}^{x_1}}
  \right)\right)^+  \Big| \cF_t  \right] \\
 = B(t, T_N) \, \E_N \! \left[ \left( \sum_{i=p_2+1}^{q_2}
   \! \left( \! M_{T_{p_2}^{x_2}}^{v_{i-1}^{x_2}}
    \!-\!  M_{T_{p_2}^{x_2}}^{u_{i}^{x_2}} \! \right)
   -\! \sum_{i=p_1+1}^{q_1} \! \left(\! M_{T_{p_1}^{x_1}}^{v_{i-1}^{x_1}}
    \!-\! S_{x_1} M_{T_{p_1}^{x_1}}^{u_i^{x_1}} \! \right) \! \right)^+  \Big| \cF_t  \right],
\end{multline}
where $S_{x_1}:=1-\delta_{x_1}S$.

\subsection{Approximation formula for swaptions}
\label{section:Linear_boundary_approximation}

We will now derive an efficient approximation formula for the pricing of
swaptions. The main ingredients in this formula are the affine property of the
driving process under forward measures and the linearization of the exercise
boundary. Numerical results for this approximation will be reported in Section
\ref{sec:swaption-ntest}.

We start by presenting some technical tools and assumptions that will be used in
the sequel. We define the probability measures $\overline{\P}_k^x$, for every
$k\in\mathcal{K}^x$, via the Radon--Nikodym density
\begin{align}\label{Pkx-overline-densities}
\frac{\ud \overline{\P}_k^x}{\ud \P_N}\Big|_{\cF_t}
 = \frac{M^{v_k^x}_t}{M^{v_k^x}_0}.
\end{align}
The process $X$ is obviously a time-inhomogeneous affine process under every
$\overline{\P}_k^x$. More precisely, we have the following result which follows
directly from Proposition \ref{X-Pkx-characteristics}.

\begin{corollary}\label{X-Pkx-overline-characteristics}
The process $X$ is a time-inhomogeneous affine process under the measure
$\overline{\P}_k^x$, for every $x\in\mathcal{X},k\in\mathcal{K}^x$, with
\begin{align}\label{bPkx-mgf}
\bE_{k}^x \big[ \e^{\scal{w}{X_t}} \big]
 &= \exp\left( \bphi_t^{k,x}(w) + \scal{\bpsi_t^{k,x}(w)}{X_0} \right),
\end{align}
where
\begin{subequations}\label{bPkx-phipsi}
\begin{align}
\bphi_t^{k,x}(w) &:= \phi_t\big(\psi_{T_N-t}(v_k^x)+w\big)
                   - \phi_t\big(\psi_{T_N-t}(v_k^x)\big),\\
\bpsi_t^{k,x}(w) &:= \psi_t\big(\psi_{T_N-t}(v_k^x)+w\big)
                   - \psi_t\big(\psi_{T_N-t}(v_k^x)\big),
\end{align}
\end{subequations}
for every $w \in \overline{\mathcal{I}}^{k,x}$ with
\begin{align}\label{eq:II-kx}
\overline{\mathcal{I}}^{k,x}
 := \set{w\in\R^d: \psi_{T_N-t}(v_k^x) + w \in \mathcal{I}_T}.
\end{align}
\end{corollary}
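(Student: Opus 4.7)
The plan is to observe that the structural role played by the vector $u_k^x$ in the proof of Proposition \ref{X-Pkx-characteristics} is played here by $v_k^x$, and that the argument transports verbatim. Concretely, the density process in \eqref{Pkx-overline-densities} is of the same exponential-affine form as the one in \eqref{Pkx-densities}, namely $M^{v_k^x}_t/M^{v_k^x}_0 = \exp(\phi_{T_N-t}(v_k^x)+\langle\psi_{T_N-t}(v_k^x),X_t\rangle)/M^{v_k^x}_0$, so the calculation in \eqref{X-Pkx-computation} can be repeated after substituting $v_k^x$ for $u_k^x$ throughout.

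The first step is to write, for $s\le t$ and any admissible $w$,
\begin{align*}
\bE_k^x\big[\e^{\langle w,X_t\rangle}\big|\F_s\big]
 = \E_N\Big[\e^{\langle w,X_t\rangle}\,M_t^{v_k^x}/M_s^{v_k^x}\,\Big|\F_s\Big],
\end{align*}
then insert the explicit form \eqref{eq:Mv} of $M^{v_k^x}$ so as to collapse the product $\e^{\langle w,X_t\rangle}M_t^{v_k^x}$ into a single exponential $\exp(\phi_{T_N-t}(v_k^x)+\langle\psi_{T_N-t}(v_k^x)+w,X_t\rangle)$. The second step is to apply the Markov property and the affine formula \eqref{affine-def} for $X$ under $\P_N$, which requires exactly that $\psi_{T_N-t}(v_k^x)+w\in\mathcal{I}_T$, i.e.\ $w\in\overline{\mathcal{I}}^{k,x}$ in the sense of \eqref{eq:II-kx}; this guarantees finiteness of the expectation.

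The third step is to simplify using the semi-flow equations \eqref{flow}, combining $\phi_{T_N-t}(v_k^x)$ with the extra $\phi_{t-s}(\psi_{T_N-t}(v_k^x)+w)$ produced by the affine formula, and similarly for $\psi$, exactly as in the last lines of \eqref{X-Pkx-computation}. This shows that the conditional moment generating function of $X_t$ given $\F_s$ under $\bP_k^x$ is of exponential-affine form in $X_s$, establishing the time-inhomogeneous affine property. Finally, setting $s=0$ and using $\phi_0=0$, $\psi_0=\mathrm{Id}$ gives \eqref{bPkx-mgf} with $\bphi_t^{k,x}$ and $\bpsi_t^{k,x}$ as defined in \eqref{bPkx-phipsi}.

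There is no real obstacle here: since the density is again of the exponential-affine type used in Proposition \ref{X-Pkx-characteristics}, the only thing to track carefully is the domain of finiteness, which is what motivates the definition \eqref{eq:II-kx}. Every other computation is a direct transcription of the proof of Proposition \ref{X-Pkx-characteristics} with the notational change $u_k^x\mapsto v_k^x$, $\phi\mapsto\bphi$, $\psi\mapsto\bpsi$.
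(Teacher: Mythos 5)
Your proposal is correct and is exactly the argument the paper intends: the corollary is stated as following directly from Proposition \ref{X-Pkx-characteristics}, since the density \eqref{Pkx-overline-densities} has the same exponential-affine form as \eqref{Pkx-densities} with $v_k^x$ in place of $u_k^x$, and your step-by-step transcription of \eqref{X-Pkx-computation} (including the correct identification of the finiteness domain $\overline{\mathcal{I}}^{k,x}$) matches this. Nothing is missing.
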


The price of a swaption is provided by \eqref{val-swaption-general}, while for 
simplicity we shall consider the price at time $t=0$ in the sequel. We can 
rewrite \eqref{val-swaption-general} as follows
\begin{align}\label{eq:swaption-price-with-f}
\nonumber \mathbb{S}^+_0(K,\mathcal{T}^x_{pq}) & 
  = B(0, T_N) \, \E_N \! \left[ \left( \sum_{i=p+1}^{q}
   M_{T_p^x}^{v_{i-1}^x} - \sum_{i=p+1}^{q} K_x M_{T_p^x}^{u_i^x}
   \right) \! \indik_{\{f(X_{T_p^x}) \geq 0\}}\right] \\
 & = B(0, T_N) \left( \sum_{i=p+1}^{q}  \E_N \left[
   M_{T_p^x}^{v_{i-1}^x} \indik_{\{f(X_{T_p^x}) \geq 0\}} \right] \right. \\
   \nonumber  & \qquad \qquad \left. - K_x \sum_{i=p+1}^{q} \E_N \left[  
    M_{T_p^x}^{u_i^x} \indik_{\{f(X_{T_p^x}) \geq 0\}} \right] \right),
\end{align}
where, recalling \eqref{eq:Mu} and \eqref{eq:Mv}, we define the function 
$f:\Rp^d\to\R$ by
\begin{align}\label{eq:f_swaption}
f(y) &= \sum_{i=p+1}^{q} \exp\big( \phi_{T_N-T_p^x}(v_{i-1}^x)
      + \scal{\psi_{T_N-T_p^x}(v_{i-1}^x)}{y} \big)  \notag\\
     &-\sum_{i=p+1}^{q} K_x \exp\big( \phi_{T_N-T_p^x}(u_i^x)
      + \scal{\psi_{T_N-T_p^x}(u_i^x)}{y} \big).
\end{align}
This function determines the exercise boundary for the price of the swaption. 

Now, since we cannot compute the characteristic function of $f(X_{T_p^x})$ 
explicitly, we will follow \citet{SingletonUmantsev02} and approximate $f$ by a 
linear function.

\begin{appS}
We approximate
\begin{align}\label{eq:ftilde-swaption}
f(X_{T_p^x}) \approx \widetilde{f}(X_{T_p^x})
  := \mathscr{A} + \scal{\mathscr{B}}{X_{T_p^x}},
\end{align}
where the constants $\mathscr{A}$, $\mathscr{B}$ are determined according to 
the linear regression procedure described in 
\citet[pp.~432-434]{SingletonUmantsev02}. The line 
$\scal{\mathscr{B}}{X_{T_p^x}}=-\mathscr{A}$ approximates the exercise boundary,
hence $\mathscr{A}$ and $\mathscr{B}$ are strike-dependent.
\end{appS}

The following assumption will be used for the pricing of swaptions and basis
swaptions.

\begin{assucd}
The cumulative distribution function of $X_t$ is continuous for all
$t\in[0,T_N]$.
\end{assucd}

Let $\Im(z)$ denote the imaginary part of a complex number $z \in \C$. Now, we
state the main result of this subsection.

\begin{proposition}
\label{prop:swaptionapprox}
Assume that $\mathscr{A,B}$ are determined by Approximation $(\mathbb{S})$ and 
that Assumption $(\mathbb{CD})$ is satisfied. The time-$0$ price of a payer 
swaption with strike $K$, option maturity $T_p^x$, and swap maturity $T_q^x$, is 
approximated by
\begin{align}\label{eq:swaptionapprox}
\widetilde{\mathbb{S}}^+_0(K,\mathcal{T}^x_{pq})
 &= B(0,T_N) \sum_{i=p+1}^q M_0^{v_{i-1}^x} \left[ \frac12 + \frac1\pi
    \int_0^\infty \frac{\Im\big(\widetilde{\xi}_{i-1}^x (z)\big)}{z} \dz \right]
     \notag\\
 &\qquad - K_x \sum_{i=p+1}^q B(0,T_i^x) \left[ \frac12 + \frac1\pi
    \int_0^\infty \frac{\Im\big(\widetilde{\zeta}_i^x (z)\big)}{z} \dz \right],
\end{align}
where $\widetilde{\zeta}_i^x$ and $\widetilde{\xi}_i^x$ are defined by
\eqref{eq:cfu} and \eqref{eq:cfv} respectively.
\end{proposition}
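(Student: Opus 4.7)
The plan is to start from the exact pricing formula \eqref{val-swaption-general} with $t=0$, rewritten as \eqref{eq:swaption-price-with-f}, and then carry out three reductions: replace the exercise set by the half-space produced by Approximation $(\mathbb{S})$, change to the forward/auxiliary measures under which $X$ is affine, and invert the characteristic function via Gil--Pelaez.

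First, I would substitute $\widetilde{f}$ for $f$ in \eqref{eq:swaption-price-with-f}, so that the event $\{f(X_{T_p^x}) \geq 0\}$ is replaced by the half-space $\{\mathscr{A} + \scal{\mathscr{B}}{X_{T_p^x}} \geq 0\}$. Splitting the sum, each summand is of one of the two forms
\begin{equation*}
\E_N\!\big[M_{T_p^x}^{v_{i-1}^x}\indik_{\{\mathscr{A} + \scal{\mathscr{B}}{X_{T_p^x}} \geq 0\}}\big]
\quad\text{or}\quad
\E_N\!\big[M_{T_p^x}^{u_i^x}\indik_{\{\mathscr{A} + \scal{\mathscr{B}}{X_{T_p^x}} \geq 0\}}\big].
\end{equation*}
For the second, I would use the density \eqref{Pkx-densities} to pass from $\P_N$ to $\P_i^x$, which turns the expectation into $M_0^{u_i^x}\,\P_i^x(\mathscr{A}+\scal{\mathscr{B}}{X_{T_p^x}}\geq 0)$; and since $B(0,T_N)M_0^{u_i^x}=B(0,T_i^x)$, this produces the $B(0,T_i^x)$ prefactor in the stated formula. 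For the first, I would use the density \eqref{Pkx-overline-densities} to pass to $\bP_{i-1}^x$, obtaining $M_0^{v_{i-1}^x}\,\bP_{i-1}^x(\mathscr{A}+\scal{\mathscr{B}}{X_{T_p^x}}\geq 0)$, which supplies the $M_0^{v_{i-1}^x}$ prefactor.

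The remaining step is to evaluate the two probabilities. By Proposition~\ref{X-Pkx-characteristics} and Corollary~\ref{X-Pkx-overline-characteristics}, $X$ is time-inhomogeneous affine under both $\P_i^x$ and $\bP_{i-1}^x$, so the characteristic function of the scalar random variable $Y:=\mathscr{A}+\scal{\mathscr{B}}{X_{T_p^x}}$ is available in closed form via \eqref{X-Pxk-characteristics-1} and \eqref{bPkx-mgf}, evaluated at the purely imaginary argument $iz\mathscr{B}$; these are exactly the functions $\widetilde{\zeta}_i^x(z)$ and $\widetilde{\xi}_{i-1}^x(z)$ referred to in the statement (up to the factor $\e^{iz\mathscr{A}}$ coming from the deterministic shift). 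Under Assumption $(\mathbb{CD})$, $Y$ has a continuous distribution, so the Gil--Pelaez inversion formula
\begin{equation*}
\mathbb{Q}(Y\geq 0) = \tfrac{1}{2} + \tfrac{1}{\pi}\int_0^\infty \frac{\Im\big(\E_{\mathbb{Q}}[\e^{izY}]\big)}{z}\,\dz
\end{equation*}
yields the bracketed expressions in \eqref{eq:swaptionapprox}. Combining all summands gives the claimed formula.

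The main obstacle I anticipate is justifying the Gil--Pelaez inversion rigorously: one needs integrability of $\Im(\widetilde{\zeta}_i^x(z))/z$ and $\Im(\widetilde{\xi}_{i-1}^x(z))/z$ near $z=0$ and decay at infinity, plus continuity of the distribution of $Y$. Continuity is handled directly by Assumption $(\mathbb{CD})$ (continuity of the law of $X_{T_p^x}$ transfers to the linear functional $Y$ whenever $\mathscr{B}\neq 0$), and the finiteness of the characteristic function is automatic because $iz\mathscr{B}$ is on the imaginary axis. Beyond that, everything else is bookkeeping: measure changes via \eqref{Pkx-densities} and \eqref{Pkx-overline-densities}, identification of the affine exponent from \eqref{X-Pxk-characteristics-1} and \eqref{bPkx-mgf}, and the normalisation $M_0^{u_i^x}=B(0,T_i^x)/B(0,T_N)$.
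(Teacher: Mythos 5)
Your proposal is correct and follows essentially the same route as the paper's proof: the same measure changes to $\P_i^x$ and $\overline{\P}_{i-1}^x$ via \eqref{Pkx-densities} and \eqref{Pkx-overline-densities}, the same use of the affine property from Proposition \ref{X-Pkx-characteristics} and Corollary \ref{X-Pkx-overline-characteristics}, and the same Gil--Pelaez inversion under Assumption $(\mathbb{CD})$. The only (immaterial) difference is ordering: you replace $f$ by $\widetilde{f}$ at the outset, whereas the paper first writes the exact price in terms of the true characteristic functions $\zeta_i^x,\xi_i^x$ and only then approximates these by $\widetilde{\zeta}_i^x,\widetilde{\xi}_i^x$.
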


\begin{proof}
Starting from the swaption price in \eqref{eq:swaption-price-with-f} and using 
the relation between the terminal measure $\P_N$ and the measures $\P_k^x$ and 
$\overline{\P}_k^x$ in \eqref{Pkx-densities} and 
\eqref{Pkx-overline-densities}, we get that
\begin{multline}\label{eq:swaptiontrue}
\mathbb{S}^+_0(K,\mathcal{T}^x_{pq})
 = B(0,T_N) \sum_{i=p+1}^{q} M_0^{v_{i-1}^x}
    \bE_{i-1}^x \left[ \indik_{\{f(X_{T_p^x}) \geq 0\}} \right] \\
 - K_x \sum_{i=p+1}^{q} B(0,T_i^x) \,
   \E_i^x \left[ \indik_{\{f(X_{T_p^x}) \geq 0\}} \right].
\end{multline}
In addition, from the inversion formula of \citet{GilPelaez_1951} and using
Assumption $(\mathbb{CD})$, we get that
\begin{align}
\E_i^x \big[ \indik_{\{f(X_{T_p^x}) \geq 0\}} \big]
 = \frac12 + \frac1\pi \int_0^\infty \frac{\Im(\zeta_i^x (z))}{z} \dz,
   \label{eq:1-F(u)}\\
\bE_i^x \big[ \indik_{\{f(X_{T_p^x}) \geq 0\}} \big]
 = \frac12 + \frac1\pi \int_0^\infty \frac{\Im(\xi_i^x (z))}{z} \dz,
   \label{eq:1-F(v)}
\end{align}
for each $i\in\mathcal{K}^x$, where we define
\begin{align*}
\zeta_i^x (z) := \E_i^x \big[ \exp\big( \mathrm{i}z f(X_{T_p^x}) \big) \big]
\quad \text{and} \quad
\xi_i^x (z) := \bE_i^x \big[ \exp\big( \mathrm{i}z f(X_{T_p^x}) \big) \big].
\end{align*}

However, the above characteristic functions cannot be computed explicitly, in 
general, thus we will linearize the exercise boundary as described by 
Approximation $(\mathbb{S})$. That is, we approximate the unknown characteristic 
functions with ones that admit an explicit expression due to the affine property 
of $X$ under the forward measures. Indeed, using Approximation $(\mathbb{S})$, 
Proposition \ref{X-Pkx-characteristics} and Corollary 
\ref{X-Pkx-overline-characteristics} we get that
\begin{align}\label{eq:cfu}
\zeta_k^x (z) \approx \widetilde{\zeta}_k^x (z)
 &:= \E_k^x \big[ \exp\big( \mathrm{i}z \widetilde{f}(X_{T_p^x})\big) \big] \notag \\
 &\phantom{:}= \exp\left( \mathrm{i}z\mathscr{A}
      + \phi_{T_p^x}^{k,x}(\mathrm{i}z \mathscr{B})
      + \bscal{\psi_{T_p^x}^{k,x}(\mathrm{i}z \mathscr{B})}{X_0} \right),\\
\xi_k^x (z) \approx \widetilde{\xi}_k^x (z) \label{eq:cfv}
 &:= \bE_k^x \big[ \exp\big( \mathrm{i}z \widetilde{f}(X_{T_p^x})\big) \big] \notag\\
 &\phantom{:}= \exp\left(\mathrm{i}z\mathscr{A}
      + \bphi_{T_p^x}^{k,x}(\mathrm{i}z \mathscr{B})
      + \bscal{\bpsi_{T_p^x}^{k,x}(\mathrm{i}z \mathscr{B})}{X_0} \right).
\end{align}
After inserting \eqref{eq:1-F(u)} and \eqref{eq:1-F(v)} into
\eqref{eq:swaptiontrue} and using \eqref{eq:cfu} and \eqref{eq:cfv} we arrive at
the approximation formula for swaptions \eqref{eq:swaptionapprox}.
\end{proof}

\begin{remark}
The pricing of swaptions is inherently a high-dimensional problem. The
expectation in \eqref{val-swaption-general} corresponds to a $d$-dimensional
integral, where $d$ is the dimension of the driving process. However, the
exercise boundary is non-linear and hard to compute, in general. See, e.g.
\citet{BraceGatarekMusiela97}, \cite{EberleinKluge04} or \citet[\S 7.2, \S
8.3]{KellerResselPapapantoleonTeichmann09}  for some exceptional cases that
admit explicit solutions. Alternatively, one could express a swaption as a zero
strike basket option written on $2(q-p)$ underlying assets and use Fourier
methods for pricing; see \citet{HubalekKallsen03} or \citet{HurdZhou09}. This
leads to a $2(q-p)$-dimensional numerical integration. Instead, the
approximation derived in this section requires only the evaluation of $2(q-p)$
\emph{univariate} integrals together with the computation of the constants
$\mathscr{A,B}$. This reduces the complexity of the problem considerably.
\end{remark}

\subsection{Approximation formula for basis swaptions}
\label{section:Linear_boundary_approximation_BS}

In this subsection, we derive an analogous approximate pricing formula for basis
swaptions. Numerical results for this approximation will be reported in Section
\ref{sec:basis-swaption-ntest}.

Similar to the case of swaptions, we can rewrite the time-$0$ price of a basis 
swaption \eqref{val-bswaption-general} as follows:
\begin{equation}
\label{eq:basis-swaption-price-with-g}
\begin{multlined}
\mathbb{BS}^+_0(S,\mathcal{T}^{x_1}_{pq},\mathcal{T}^{x_2}_{pq}) =
 \\= B(0, T_N) \left\{ \sum_{i=p_2+1}^{q_2} \left(\E_{N} \left[
     M_{T_{p_2}^{x_2}}^{v_{i-1}^{x_2}} \indik_{\{g(X_{T_{p_2}}^{x_2})\ge0\}}\right] \! - \E_{N} \left[
     M_{T_{p_2}^{x_2}}^{u_{i}^{x_2}} \indik_{\{g(X_{T_{p_2}}^{x_2})\ge0\}}\right] \right)
     \right.
 \\- \left. \sum_{i=p_1+1}^{q_1} \left( \E_{N} \left[
      M_{T_{p_1}^{x_1}}^{v_{i-1}^{x_1}} \indik_{\{g(X_{T_{p_1}^{x_1}})\ge0\}}\right]
   - S_{x_1} \E_{N} \left[ M_{T_{p_1}^{x_1}}^{u_i^{x_1}}
        \indik_{\{g(X_{T_{p_1}^{x_1}})\ge0\}} \right] \right) \right\}, 
\end{multlined}
\end{equation}
where we define the function $g:\Rp^d\to\R$ by 
\begin{align}\label{eq:f-basis-swaption}
g(y)
 &= \sum_{i=p_2+1}^{q_2} \exp\big( \phi_{T_N-T_{p_2}^{x_2}}(v_{i-1}^{x_2})
   + \scal{\psi_{T_N-T_{p_2}^{x_2}}(v_{i-1}^{x_2})}{y} \big) \notag\\
 &\quad-   \sum_{i=p_2+1}^{q_2} \exp\big( \phi_{T_N-T_{p_2}^{x_2}}(u_{i}^{x_2})
   + \scal{\psi_{T_N-T_{p_2}^{x_2}}(u_{i}^{x_2})}{y} \big) \notag\\
 &\quad- \sum_{i=p_1+1}^{q_1} \exp\big( \phi_{T_N-T_{p_1}^{x_1}}(v_{i-1}^{x_1})
   + \scal{\psi_{T_N-T_{p_1}^{x_1}}(v_{i-1}^{x_1})}{y} \big) \notag\\
 &\quad+ \sum_{i=p_1+1}^{q_1} S_{x_1} \exp\big(
            \phi_{T_N-T_{p_1}^{x_1}}(u_i^{x_1})
          + \scal{\psi_{T_N-T_{p_1}^{x_1}}(u_i^{x_1})}{y} \big),
\end{align}
which determines the exercise boundary for the price of the basis swaption. This
will be approximated by a linear function following again
\citet{SingletonUmantsev02}.

\begin{appBS}
We approximate
\begin{align}\label{eq:ftilde-basis-swaption}
g(X_{T_{p_1}^{x_1}}) \approx \widetilde{g}(X_{T_{p_1}^{x_1}})
  := \mathscr{C} + \scal{\mathscr{D}}{X_{T_{p_1}^{x_1}}},
\end{align}
where $\mathscr{C}$ and $\mathscr{D}$ are determined via a linear regression.
\end{appBS}

\begin{proposition}
\label{eq:basis-swaptionapprox}
Assume that $\mathscr{C,D}$ are determined by Approximation $(\mathbb{BS})$ and
that Assumption $(\mathbb{CD})$ is satisfied. The time-$0$ price of a basis 
swaption with spread $S$, option maturity $T_{p_1}^{x_1}=T_{p_2}^{x_2}$, and 
swap maturity $T_{q_1}^{x_1}=T_{q_2}^{x_2}$, is approximated by
\begin{align}\label{eq:bswaptionapprox}
\widetilde{\mathbb{BS}}^+_0(S,\mathcal{T}^{x_1}_{pq},\mathcal{T}^{x_2}_{pq})
&= B(0,T_N) \sum_{i=p_2+1}^{q_2} M_0^{v_{i-1}^{x_2}} \left[ \frac12 +
    \frac1\pi \int_0^\infty \frac{\Im\big(\widetilde{\xi}_{i-1}^{x_2} (z)\big)}{z} \dz
    \right] \notag \\
    &\quad-  \sum_{i=p_2+1}^{q_2} B(0,T_i^{x_2}) \left[
    \frac12 + \frac1\pi \int_0^\infty
     \frac{\Im\big(\widetilde{\zeta}_i^{x_2} (z)\big)}{z} \dz \right] \nonumber \\
 &\!\!\!\!\! - B(0,T_N) \sum_{i=p_1+1}^{q_1} M_0^{v_{i-1}^{x_1}} \left[ \frac12
   + \frac1\pi \int_0^\infty \frac{\Im\big(\widetilde{\xi}_{i-1}^{x_1} (z)\big)}{z} \dz
    \right] \\ \nonumber 
 &\quad+ S_{x_1} \sum_{i=p_1+1}^{q_1} B(0,T_i^{x_1}) \left[
    \frac12 + \frac1\pi \int_0^\infty
     \frac{\Im\big(\widetilde{\zeta}_i^{x_1} (z)\big)}{z} \dz \right],
\end{align}
where $\widetilde\zeta_i^{x_l}$ and $\widetilde\xi_i^{x_l}$ are defined by
\eqref{eq:cfu-bs} and \eqref{eq:cfv-bs} for $l=1,2$.
\end{proposition}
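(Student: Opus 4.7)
The plan is to follow the same strategy used for Proposition \ref{prop:swaptionapprox}, adapted to the two-tenor situation. Starting from \eqref{eq:basis-swaption-price-with-g}, each of the four sums has terms of the form $\E_N\bigl[M_{T}^{v^{x_l}_{i-1}}\indik_{\{g(X_{T})\ge0\}}\bigr]$ or $\E_N\bigl[M_{T}^{u^{x_l}_{i}}\indik_{\{g(X_{T})\ge0\}}\bigr]$, with $T=T_{p_1}^{x_1}=T_{p_2}^{x_2}$. I would rewrite each by a measure change: using \eqref{Pkx-densities} the $M^{u}$-terms become $B(0,T_i^{x_l})/B(0,T_N)$ times a $\P_i^{x_l}$-probability of $\{g(X_T)\ge 0\}$, while using \eqref{Pkx-overline-densities} the $M^{v}$-terms become $M_0^{v^{x_l}_{i-1}}/B(0,T_N)$ (after absorbing $B(0,T_N)$) times an $\overline{\P}_{i-1}^{x_l}$-probability. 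This produces the exact representation
\begin{multline*}
\mathbb{BS}^+_0 = B(0,T_N)\!\!\sum_{i=p_2+1}^{q_2}\!\! M_0^{v^{x_2}_{i-1}}\bE_{i-1}^{x_2}[\indik_{\{g\ge0\}}] - \sum_{i=p_2+1}^{q_2} B(0,T_i^{x_2})\E_i^{x_2}[\indik_{\{g\ge0\}}]\\
- B(0,T_N)\!\!\sum_{i=p_1+1}^{q_1}\!\! M_0^{v^{x_1}_{i-1}}\bE_{i-1}^{x_1}[\indik_{\{g\ge0\}}] + S_{x_1}\!\!\sum_{i=p_1+1}^{q_1}\!\! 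B(0,T_i^{x_1})\E_i^{x_1}[\indik_{\{g\ge0\}}],
\end{multline*}
which is the basis-swaption analogue of \eqref{eq:swaptiontrue}.

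Next, under Assumption ($\mathbb{CD}$) the distribution of $g(X_T)$ has no atom at zero under each of these measures, so the Gil-Pelaez inversion formula yields, for each $l\in\{1,2\}$ and each index $i$,
\begin{equation*}
\E_i^{x_l}[\indik_{\{g(X_T)\ge0\}}] = \tfrac12 + \tfrac1\pi\int_0^\infty \tfrac{\Im(\zeta_i^{x_l}(z))}{z}\dz,\qquad \bE_i^{x_l}[\indik_{\{g(X_T)\ge0\}}] = \tfrac12 + \tfrac1\pi\int_0^\infty \tfrac{\Im(\xi_i^{x_l}(z))}{z}\dz,
\end{equation*}
where $\zeta_i^{x_l}(z):=\E_i^{x_l}[\e^{\mathrm{i}zg(X_T)}]$ and $\xi_i^{x_l}(z):=\bE_i^{x_l}[\e^{\mathrm{i}zg(X_T)}]$.

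Since $g$ is a weighted sum of exponentials of affine functions of $X_T$, its characteristic function is not known in closed form. Here Approximation $(\mathbb{BS})$ enters: replacing $g(X_T)$ by $\widetilde g(X_T) = \mathscr{C}+\scal{\mathscr{D}}{X_T}$ turns each characteristic function into the moment generating function of $X_T$ evaluated at $\mathrm{i}z\mathscr{D}$, which under $\P_i^{x_l}$ (resp.\ $\bP_i^{x_l}$) is known explicitly thanks to Proposition \ref{X-Pkx-characteristics} (resp.\ Corollary \ref{X-Pkx-overline-characteristics}). This gives
\begin{align}\label{eq:cfu-bs}
\widetilde\zeta_i^{x_l}(z) &:= \E_i^{x_l}[\e^{\mathrm{i}z\widetilde g(X_T)}] = \exp\!\left(\mathrm{i}z\mathscr{C}+\phi_T^{i,x_l}(\mathrm{i}z\mathscr{D})+\bscal{\psi_T^{i,x_l}(\mathrm{i}z\mathscr{D})}{X_0}\right),\\
\label{eq:cfv-bs}
\widetilde\xi_i^{x_l}(z) &:= \bE_i^{x_l}[\e^{\mathrm{i}z\widetilde g(X_T)}] = \exp\!\left(\mathrm{i}z\mathscr{C}+\bphi_T^{i,x_l}(\mathrm{i}z\mathscr{D})+\bscal{\bpsi_T^{i,x_l}(\mathrm{i}z\mathscr{D})}{X_0}\right).
\end{align}
Substituting $\widetilde\zeta,\widetilde\xi$ for $\zeta,\xi$ in the Gil-Pelaez integrals and plugging back into the exact representation yields \eqref{eq:bswaptionapprox}.

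The only step that differs meaningfully from the swaption case is the careful bookkeeping of the two tenor structures: one needs the identifications $T_{p_1}^{x_1}=T_{p_2}^{x_2}$ and the consistency \eqref{eq:u} between the families $(u^{x_1}_k)$ and $(u^{x_2}_k)$ at overlapping tenor dates, so that all four indicator events involve the same random variable $g(X_{T_{p_1}^{x_1}})$ and the measure-change densities \eqref{Pkx-densities}--\eqref{Pkx-overline-densities} are applied to the right numeraire. Everything else is a mechanical repetition of the swaption derivation, and I expect no genuine obstacle beyond this bookkeeping and the (standard) verification that $\mathrm{i}z\mathscr{D}$ lies in the appropriate domains $\mathcal{I}^{i,x_l}$, $\overline{\mathcal{I}}^{i-1,x_l}$ so that the affine expressions \eqref{eq:cfu-bs}--\eqref{eq:cfv-bs} are legitimate.
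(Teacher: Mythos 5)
Your proposal is correct and follows essentially the same route as the paper's proof: measure changes via \eqref{Pkx-densities} and \eqref{Pkx-overline-densities} to obtain the basis-swaption analogue of \eqref{eq:swaptiontrue}, Gil-Pelaez inversion under Assumption $(\mathbb{CD})$, and the linearization $(\mathbb{BS})$ combined with Proposition \ref{X-Pkx-characteristics} and Corollary \ref{X-Pkx-overline-characteristics} to make the characteristic functions explicit. You actually spell out the intermediate exact representation and the sign bookkeeping in more detail than the paper, which merely refers back to the swaption case.
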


\begin{proof}
Starting from the expression for the basis swaption price given in  
\eqref{eq:basis-swaption-price-with-g}, we follow the same steps as in the 
previous section: First, we use the relation between the terminal measure 
$\P_N$ and the measures $\P_k^x,\overline{\P}_k^x$ to arrive at an expression 
similar to \eqref{eq:swaptiontrue}. Second, we approximate $g$ by  
$\widetilde{g}$ in \eqref{eq:ftilde-basis-swaption}. Third, we define the 
approximate characteristic functions, which can be computed explicitly:
\begin{align}\label{eq:cfu-bs}
\widetilde{\zeta}_i^{x_l} (z)
 &:= \E_i^{x_l} \big[ \exp\big( \mathrm{i}z \widetilde{g}(X_{T_{p_l}^{x_l}})\big) \big] \notag \\
 &\phantom{:}= \exp\left( \mathrm{i}z\mathscr{C}
      + \phi_{T_{p_l}^{x_l}}^{i,x_l}(\mathrm{i}z \mathscr{D})
      + \bscal{\psi_{T_{p_l}^{x_l}}^{i,x_l}(\mathrm{i}z \mathscr{D})}{X_0} \right),\\
\widetilde{\xi}_i^{x_l} (z) \label{eq:cfv-bs}
 &:= \bE_i^{x_l} \big[ \exp\big( \mathrm{i}z \widetilde{g}(X_{T_{p_l}^{x_l}})\big) \big] \notag\\
 &\phantom{:}= \exp\left(\mathrm{i}z\mathscr{C}
      + \bphi_{T_{p_l}^{x_l}}^{i,x_l}(\mathrm{i}z \mathscr{D})
      + \bscal{\bpsi_{T_{p_l}^{x_l}}^{i,x_l}(\mathrm{i}z \mathscr{D})}{X_0} \right),
\end{align}
for $l=1,2$. Finally, putting all the pieces together we arrive at the
approximation formula \eqref{eq:bswaptionapprox} for the price of a basis
swaption.
\end{proof}
\section{Numerical examples and calibration}
\label{calibration}

The aim of this section is twofold: on the one hand, we demonstrate how the 
multiple curve \alm can be calibrated to market data and, on the other hand, we 
test the accuracy of the swaption and basis swaption approximation formulas. We 
start by discussing how to build a model which can simultaneously fit caplet 
volatilities when the options have different underlying tenors. Next, we test 
numerically the swaption and basis swaption approximation formulas 
\eqref{eq:swaptionapprox} and \eqref{eq:bswaptionapprox} using the calibrated 
models and parameters. In the last subsection, we build a simple model and 
compute exact and approximate swaption and basis swaption prices in a setup
which can be easily reproduced by interested readers.

\subsection{A specification with dependent rates}
\label{subsection:SingleTenor}

There are numerous ways of constructing models and the trade-off is usually 
between parsimony and fitting ability. We have elected here a heavily 
parametrized approach that focuses on the fitting ability, as we believe it 
best demonstrates the utility of our model. In particular, we want to show that 
\alms, which are driven by positive affine processes, can indeed be calibrated 
well to market data. Moreover, it is usually easier to move from a complex 
specification towards a simpler one, than the converse.

We provide below a model specification where LIBOR rates are driven by common 
and idiosyncratic factors which is suitable for sequential calibration to market 
data. The starting point is to revisit the expression for LIBOR rates in 
\eqref{eq:LIBOR-rate}:
\begin{align}\label{CalLIBOR}
1 + \delta_x L_k^x(t)
 &= M^{v_{k-1}^x}_t / M^{u_{k}^x}_t \\ \nonumber
 &\!\!\!\!\!\!\!\!\!\!\!\!\!\!\!\!\!\! =
    \exp\Big( \phi_{T_N-t}(v_{k-1}^x)-\phi_{T_N-t}(u_k^x) + \left\langle
     \psi_{T_N-t}(v_{k-1}^x)-\psi_{T_N-t}(u_k^x),X_t\right\rangle \Big).
\end{align}
According to Proposition \ref{initial-fit-multiple-curve}, when the dimension
of the driving process is greater than one, then the vectors $v_{k-1}^x$ and
$u_k^x$ are not fully determined by the initial term structure. Therefore, we
can navigate through different model specifications by altering the structure
of the sequences $(u_k^x)$ and $(v_k^x)$.

\begin{remark}\label{obsLFS}
The following observation allows to create an (exponential) linear factor
structure for the LIBOR rates with as many common and idiosyncratic factors as
desired. Consider an $\Rp^d$-valued affine process
\begin{align}
X &= (X^1,\dots, X^d),
\end{align}
and denote the vectors $v_{k-1}^x,u_k^x\in\Rp^d$ by
\begin{align}
v_{k-1}^x = (v_{1,k-1}^x,\dots,v_{d,k-1}^x)
 \quad\text{and}\quad
u_k^x = (u_{1,k}^x,\dots,u_{d,k}^x).
\end{align}
Select a subset $\mathscr{J}_k\subset\{1,\dots,d\}$, set $v_{i,k-1}^x=u_{i,k}^x$
for all $i\in\mathscr{J}_k$, and assume that $\{X^i\}_{i\in\mathscr{J}_k}$ are
independent of $\{X^j\}_{j\in\{1,\dots,d\}\backslash\mathscr{J}_k}$. Then, it
follows from \eqref{CalLIBOR} and \citet[Prop.~4.7]{KellerRessel08} that
$L_k^x$ will also be independent of $\{X^i\}_{i\in\mathscr{J}_k}$ and will
depend only on $\{X^j\}_{j\in\{1,\dots,d\}\backslash\mathscr{J}_k}$. The same 
observation allows also to construct a model where different factors are used 
for driving the OIS and LIBOR rates; see also Section \ref{subs:nrps}.
\end{remark}

Let $x_1,x_2\in\mathcal{X}$ and consider the tenor structures
$\mathcal{T}^{x_1},\mathcal{T}^{x_2}$ where
$\mathcal{T}^{x_2}\subset\mathcal{T}^{x_1}$. The dataset under consideration
contains caplets maturing on $M$ different dates for each tenor, where $M$ is
less than the number of tenor points in $\mathcal{T}^{x_1}$ and
$\mathcal{T}^{x_2}$. In other words,
only $M$ maturities are relevant for the calibration. The dynamics of OIS and
LIBOR rates are driven by tuples of affine processes
\begin{align}
\ud X^{i}_t
 & = -\lambda_{i} (X^{i}_t - \theta_{i}) \dt
   + 2 \eta_{i} \sqrt{X^{i}_t} \ud W^{i}_t
   + \ud Z^i_t,\label{eq:cir_pairs_1} \\
\ud X^{c}_t
 & = -\lambda_{c} (X^{c}_t - \theta_{c}) \dt
   + 2 \eta_{c} \sqrt{X^{c}_t} \ud W^{c}_t \label{eq:cir_pairs_2},
\end{align}
for $i=1,\dots,M$, where $X^c$ denotes the common and $X^i$ the idiosyncratic
factor for the $i$-th maturity. Here $X^i_0\in\Rp$,
$\lambda_{i},\theta_{i},\eta_{i}\in\Rp$ for $i=c,1,\dots,M,$ and
$W^{c},W^{1},\dots,W^{M},$ are independent Brownian motions. Moreover, $Z^i$ are
independent compound Poisson processes with constant intensity $\nu_i$ and
exponentially distributed jumps with mean values $\mu_i$, for $i=1,\dots,M$.
Therefore, the full process has dimension $M+1$:
\begin{align}
{X} = \left( X^{c}, X^{1}, \dots, X^{M} \right),
\end{align}
and the total number of process-specific parameters equals $5M+3$. The affine 
processes $X^c,X^1,\dots,X^M$ are mutually independent hence, using Proposition 
4.7 in \citet{KellerRessel08}, the functions $\phi_{(X^c,X^i)}$, respectively 
$\psi_{(X^{c},X^{i})}$, are known in terms of the functions $\phi_{X^c}$ and 
$\phi_{X^i}$, respectively $\psi_{X^c}$ and $\psi_{X^i}$, for all 
$i\in\{1,\dots,M\}$. The latter are provided, for example, by 
\citet[Ex.~2.3]{GrbacPapapantoleon13}.

\begin{figure}
\small
\centering
\begin{align*}
\hspace{-1.4em}
\begin{matrix}
\boxed{u_{\ell_1(M)}^{x_1}} &=& \big( \tilde{u}^{x_1}_{\ell_1(M)} &  0 & \dots
			    & 0 & 0 & 0
			    & \bar{u}^{x_1}_{\ell_1(M)} \big) \\
u_{\ell_1(M)-1}^{x_1} &=& \big( \tilde{u}^{x_1}_{\ell_1(M)-1} & 0 & \dots & 0 &
		      0 & 0
		      & \bar{u}^{x_1}_{\ell_1(M)-1} \big) \\
u_{\ell_1(M)-2}^{x_1} &=& \big( \tilde{u}^{x_1}_{\ell_1(M)-2} & 0 & \dots & 0 &
		      0 & 0
		      & \bar{u}^{x_1}_{\ell_1(M)-2} \big) \\
u_{\ell_1(M)-3}^{x_1} &=& \big( \tilde{u}^{x_1}_{\ell_1(M)-3} & 0 & \dots & 0 &
		      0 & 0
		      & \bar{u}^{x_1}_{\ell_1(M)-3} \big) \\
\boxed{u_{\ell_1(M-1)}^{x_1}} &=& \big( \tilde{u}^{x_1}_{\ell_1(M-1)} & 0 &
	\dots & 0 & 0
	&\bar{u}^{x_1}_{\ell_1(M-1)} &\bar{u}^{x_1}_{\ell_1(M)-3} \big) \\
u_{\ell_1(M-1)-1}^{x_1} &=& \big( \tilde{u}^{x_1}_{\ell_1(M-1)-1} & 0 & \dots &
	0 & 0
	& \bar{u}^{x_1}_{\ell_1(M-1)-1} &\bar{u}^{x_1}_{\ell_1(M)-3} \big) \\
u_{\ell_1(M-1)-2}^{x_1} &=& \big( \tilde{u}^{x_1}_{\ell_1(M-1)-2} & 0 & \dots &
	0 & 0
	& \bar{u}^{x_1}_{\ell_1(M-1)-2} & \bar{u}^{x_1}_{\ell_1(M)-3}\big) \\
u_{\ell_1(M-1)-3}^{x_1} &=& \big( \tilde{u}^{x_1}_{\ell_1(M-1)-3} & 0 & \dots &
	0 & 0
	& \bar{u}^{x_1}_{\ell_1(M-1)-3} & \bar{u}^{x_1}_{\ell_1(M)-3}\big) \\
\boxed{u_{\ell_1(M-2)}^{x_1}} &=& \big( \tilde{u}^{x_1}_{\ell_1(M-2)} & 0 &
	\dots & 0
	& \bar{u}^{x_1}_{\ell_1(M-2)} & \bar{u}^{x_1}_{\ell_1(M-1)-3}
	& \bar{u}^{x_1}_{\ell_1(M)-3}\big)\\
\vdots  && \vdots &  \vdots & & \Ddots & \vdots & \vdots & \vdots \\
\boxed{u_{\ell_1(1)}^{x_1}} &=& \big( \tilde{u}^{x_1}_{\ell_1(1)} &
	\bar{u}^{x_1}_{\ell_1(1)}
	& \bar{u}^{x_1}_{\ell_1(2)-3} & \dots & \bar{u}^{x_1}_{\ell_1(M-2)-3}
	& \bar{u}^{x_1}_{\ell_1(M-1)-3} & \bar{u}^{x_1}_{\ell_1(M)-3}\big)\\
\vdots  && \vdots & \vdots & \vdots & \vdots & \vdots & \vdots & \vdots \\
u_{1}^{x_1} &=& \big( \tilde{u}^{x_1}_{1} &\bar{u}^{x_1}_{1} &
\bar{u}^{x_1}_{\ell_1(2)-3}
	& \dots & \bar{u}^{x_1}_{\ell_1(M-2)-3} &\bar{u}^{x_1}_{\ell_1(M-1)-3}
	& \bar{u}^{x_1}_{\ell_1(M)-3}\big)
\end{matrix}
\end{align*}
\caption{The sequence $u^{x_1}$ encompasses the proposed `diagonal plus common'
factor structure. In this particular example, $x_1=3$ months and caplets
mature on entire years.}
\label{u-matrix-x1}
\end{figure}

In order to create a `diagonal plus common' factor structure, where each rate 
for each tenor is driven by an idiosyncratic factor $X^{i}$ and the common 
factor $X^{c}$, we will utilize Remark \ref{obsLFS}. We start from the longest 
maturity, which is driven by the idiosyncratic factor $X^M$ and the common 
factor $X^c$. Then, at the next caplet maturity date we add the independent 
idiosyncratic factor $X^{M-1}$, while we cancel the contribution of $X^M$ by 
`freezing' the values of $u^{x_1}$ and $v^{x_1}$ corresponding to that factor. 
The construction proceeds iteratively and the resulting structures for $u^{x_1}$ 
and $v^{x_1}$ are presented in Figures \ref{u-matrix-x1} and \ref{v-matrix-x1}, 
where elements of $u^{x_1}$ below a certain `diagonal' are `frozen' to the 
latest-set value and copied to $v^{x_1}$. These structures produce the desired 
feature that each rate is driven by an idiosyncratic and a common factor, while 
they do not violate inequalities $v_k^x\ge u_k^x\ge u_{k+1}^x$, that stemm from 
Propositions \ref{initial-fit-multiple-curve-OIS} and 
\ref{initial-fit-multiple-curve}. In Figures \ref{u-matrix-x1} and 
\ref{v-matrix-x1}, $\ell_1(k):=k/\delta_{x_1}$ for $k=1,\dots,M$, i.e. this 
function maps caplet maturities into tenor points. Moreover, all elements in 
these matrices are non-negative and 
$u^{x_1}_{N^{x_1}}=v^{x_1}_{N^{x_1}}=0\in\R^{M+1}$.

\begin{figure}
\small
\begin{align*}
\hspace{-1.5em}
\begin{matrix}
v_{\ell_1(M)}^{x_1} &=& \big( \tilde{v}_{\ell_1(M)}^{x_1}& 0  & \dots & 0 & 0
		    & 0 & \bar{v}^{x_1}_{\ell_1(M)} \big) \\	
\boxed{v_{\ell_1(M)-1}^{x_1}} &=& \big( \tilde{v}_{\ell_1(M)-1}^{x_1} & 0
		    & \dots & 0 & 0  & 0 & \bar{v}^{x_1}_{\ell_1(M)-1}\big)\\
v_{\ell_1(M)-2}^{x_1} &=& \big( \tilde{v}_{\ell_1(M)-2}^{x_1} & 0 &\dots& 0& 0
		    & 0  & \bar{v}^{x_1}_{\ell_1(M)-2} \big)\\
v_{\ell_1(M)-3}^{x_1} &=& \big( \tilde{v}_{\ell_1(M)-3}^{x_1} & 0 & \dots&0&0
		    & 0 & \bar{v}^{x_1}_{\ell_1(M)-3} \big)\\
v_{\ell_1(M-1)}^{x_1} &=& \big( \tilde{v}_{\ell_1(M-1)}^{x_1} &0& \dots&0  & 0
		    &\bar{v}^{x_1}_{\ell_1(M-1)}
		    & \bar{u}^{x_1}_{\ell_1(M)-3}\big)\\
\boxed{v_{\ell_1(M-1)-1}^{x_1}} &=& \big( \tilde{v}_{\ell_1(M-1)-1}^{x_1} & 0
    & \dots & 0 & 0  & \bar{v}^{x_1}_{\ell_1(M-1)-1}
    & \bar{u}^{x_1}_{\ell_1(M)-3}\big)\\
v_{\ell_1(M-1)-2}^{x_1} &=& \big( \tilde{v}_{\ell_1(M-1)-2}^{x_1} & 0
    & \dots & 0 & 0  & \bar{v}^{x_1}_{\ell_1(M-1)-2}
    & \bar{u}^{x_1}_{\ell_1(M)-3}\big)\\
v_{\ell_1(M-1)-3}^{x_1} &=& \big( \tilde{v}_{\ell_1(M-1)-3}^{x_1} & 0
  & \dots & 0 & 0
  & \bar{v}^{x_1}_{\ell_1(M-1)-3} & \bar{u}^{x_1}_{\ell_1(M)-3}\big)\\
v_{\ell_1(M-2)}^{x_1} &=& \big( \tilde{v}_{\ell_1(M-2)}^{x_1} & 0  & \dots & 0
  & \bar{v}^{x_1}_{\ell_1(M-2)}
  & \bar{u}^{x_1}_{\ell_1(M-1)-3}
  & \bar{u}^{x_1}_{\ell_1(M)-3}\big)\\
\boxed{v_{\ell_1(M-2)-1}^{x_1}} &=& \big( \tilde{v}_{\ell_1(M-2)-1}^{x_1} & 0
  & \dots & 0  & \bar{v}^{x_1}_{\ell_1(M-2)-1}
  & \bar{u}^{x_1}_{\ell_1(M-1)-3}
  & \bar{u}^{x_1}_{\ell_1(M)-3}\big)\\
\vdots && \vdots &  \vdots & & \Ddots & \vdots & \vdots &\vdots \\
v_{\ell_1(1)}^{x_1} &=& \big( \tilde{v}_{\ell_1(1)}^{x_1}
  & \bar{v}^{x_1}_{\ell_1(1)} & \bar{u}^{x_1}_{\ell_1(2)-3} & \dots
  & \bar{u}^{x_1}_{\ell_1(M-2)-3}
  & \bar{u}^{x_1}_{\ell_1(M-1)-3}
  & \bar{u}^{x_1}_{\ell_1(M)-3}\big)\\
\boxed{v_{\ell_1(1)-1}^{x_1}} &=& \big( \tilde{v}_{\ell_1(1)-1}^{x_1}
  & \bar{u}^{x_1}_{\ell_1(1)-3} & \bar{u}^{x_1}_{\ell_1(2)-3} & \dots
  & \bar{u}^{x_1}_{\ell_1(M-2)-3}
  & \bar{u}^{x_1}_{\ell_1(M-1)-3}
  & \bar{u}^{x_1}_{\ell_1(M)-3}\big)\\
\vdots & & \vdots & \vdots & \vdots & \vdots & \vdots & \vdots & \vdots\\
v_{1}^{x_1} &=& \big( \tilde{v}_{1}^{x_1} & \bar{u}^{x_1}_{\ell_1(1)-3}
  & \bar{u}^{x_1}_{\ell_1(2)-3} & \dots   & \bar{u}^{x_1}_{\ell_1(M-2)-3}
  & \bar{u}^{x_1}_{\ell_1(M-1)-3} & \bar{u}^{x_1}_{\ell_1(M)-3}\big)
\end{matrix}
\end{align*}
\caption{The sequence $v^{x_1}$ is constructed analogously to $u^{x_1}$. In this
particular example, $x_1=3$ months and caplets mature on entire years.}
\label{v-matrix-x1}
\end{figure}

The boxed elements are the only ones that matter in terms of pricing caplets
when these are not available at every tenor date of $\mathcal{T}^{x_1}$. The
impact of the common factor is determined by the difference between
$\tilde{v}_{k-1}^{x_1}$ and $\tilde{u}_{k}^{x_1}$. If we set
$\tilde{v}_{k-1}^{x_1}=\tilde{u}_{k}^{x_1}$, it follows from Remark \ref{obsLFS}
that $L_k^{x_1}$ will be independent of the common factor $X^c$ and thus
determined solely by the corresponding idiosyncratic factor $X^i$, with
$k=\ell_1(i)$. If  the values of $\tilde{v}_{k}^{x_1}$ and
$\tilde{u}_{k}^{x_1}$ are fixed a priori, the remaining values
$(\bar{u}_k^{x_1})_{k=1,\dots,N^{x_1}}$ and
$(\bar{v}_k^{x_1})_{k=1,\dots,N^{x_1}}$ are determined uniquely by the initial
term structure of OIS and LIBOR rates; see again Propositions
\ref{initial-fit-multiple-curve-OIS} and \ref{initial-fit-multiple-curve}. This
model structure is consistent with $v_{k-1}^{x_1} \ge u_{k-1}^{x_1} \ge
u_k^{x_1}$ if and only if the sequences $\tilde{u}^{x_1}$ and $\bar{u}^{x_1}$
are decreasing, $\tilde{v}_k^{x_1}\ge\tilde{u}_k^{x_1}$ and
$\bar{v}_k^{x_1}\ge\bar{u}_k^{x_1}$ for every $k=1,\dots,N^{x_1}$. Moreover,
this structure will be consistent with the `normal' market situation
described in Remark \ref{rem:relation-u-v} if, in addition,
$\tilde{v}^{x_1}_k\in[\tilde{u}_k^{x_1},\tilde{u}_{k-1}^{x_1}]$ and
$\bar{v}_k^{x_1}\in[\bar{u}_k^{x_1},\bar{u}_{k-1}^{x_1}]$ for every
$k=1,\dots,N^{x_1}$.

The corresponding matrices for the $x_2$ tenor are constructed in a similar 
manner. More precisely, $u^{x_2}$ is constructed by simply copying the relevant 
rows from $u^{x_1}$. Simultaneously, for $v^{x_2}$ the elements 
$(\bar{v}_k^{x_2})_{k=0,\dots,N^{x_2}}$ are introduced in order to fit the 
$x_2$-initial LIBOR term structure, as well as the elements 
$(\tilde{v}_k^{x_2})_{k=0,\dots,N^{x_2}}$ which determine the role of the 
common factor. We present only four rows from these matrices in Figures 
\ref{u-matrix-x2} and \ref{v-matrix-x2}, for the sake of brevity.

\begin{figure}
\centering
\begin{align*}
\begin{matrix}
u_{\ell_2(M)}^{x_2} &=& \big( \tilde{u}^{x_1}_{\ell_1(M)} &   0 & \dots & 0 & 0
		    &\bar{u}^{x_1}_{\ell_1(M)} \big) \\
u_{\ell_2(M)-1}^{x_2} &=& \big( \tilde{u}^{x_1}_{\ell_1(M)-2} & 0 & \dots & 0
	      & 0 &\bar{u}^{x_1}_{\ell_1(M)-2} \big)\\
u_{\ell_2(M-1)}^{x_2} &=& \big( \tilde{u}^{x_1}_{\ell_1(M-1)} & 0 & \dots & 0
	      &\bar{u}^{x_1}_{\ell_1(M-1)} &\bar{u}^{x_1}_{\ell_1(M)-3}\big)\\
u_{\ell_2(M-1)-1}^{x_2}&=& \big( \tilde{u}^{x_1}_{\ell_1(M-1)-2}  & 0 & \dots
	      & 0  & \bar{u}^{x_1}_{\ell_1(M-1)-2}
	      & \bar{u}^{x_1}_{\ell_1(M)-3}\big)
\end{matrix}
\end{align*}
\caption{The first four rows of $u^{x_2}$. In this particular example, $x_2=6$
months and caplets mature on entire years.}
\label{u-matrix-x2}
\begin{align*}
\begin{matrix}
v_{\ell_2(M)}^{x_2}  &=& \big( \tilde{v}_{\ell_2(M)}^{x_2}&   0 & \dots & 0
& 0 & \bar{v}^{x_2}_{\ell_2(M)} \big) \\
v_{\ell_2(M)-1}^{x_2} &=& \big( \tilde{v}_{\ell_2(M)-1}^{x_2}& 0 & \dots  &
0 & 0 & \bar{v}^{x_2}_{\ell_2(M)-1} \big)\\
v_{\ell_2(M-1)}^{x_2} &=& \big( \tilde{v}_{\ell_2(M-1)}^{x_2}& 0 & \dots &
0 &\bar{v}^{x_2}_{\ell_2(M-1)} &\bar{u}^{x_1}_{\ell_1(M)-3}\big)\\
v_{\ell_2(M-1)-1}^{x_2} &=& \big( \tilde{v}_{\ell_2(M-1)-1}^{x_2} & 0 &
\dots &  0 & \bar{v}^{x_2}_{\ell_2(M-1)-1} &
\bar{u}^{x_1}_{\ell_1(M)-3}\big)
\end{matrix}
\end{align*}
\caption{The first four rows of $v^{x_2}$. In this particular example, $x_2=6$
months and caplets mature on entire years.}
\label{v-matrix-x2}
\end{figure}

\subsection{Calibration to caplet data}

The data we use for calibration are from the EUR market on 27 May 2013 
collected from Bloomberg. These market data correspond to fully collateralized 
contracts, hence they are considered `clean'. Bloomberg provides synthetic zero 
coupon bond prices for EURIBOR rates, as well as OIS rates constructed in a 
manner described in \citet{Akkara_2012}. In our example, we will focus on the 3 
and 6 month tenors only. The zero coupon bond prices are converted into zero 
coupon rates and plotted in Figure \ref{fig-InitialTS}.
\begin{figure}
 \includegraphics[scale=0.5]{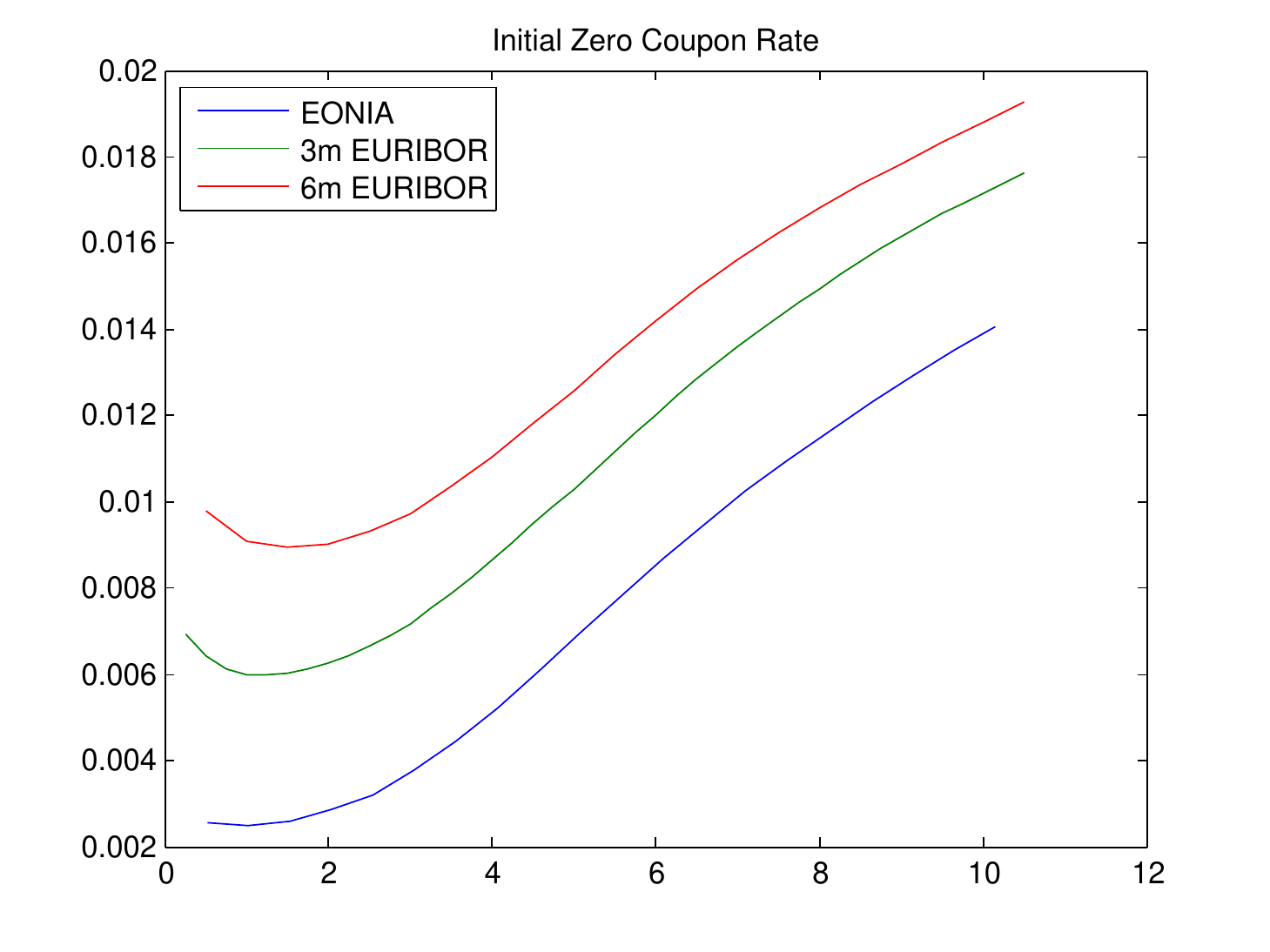}
 \vspace{-.75em}
 \caption{Zero coupon rates, EUR market, 27 May 2013.}
 \label{fig-InitialTS}
\end{figure}
Cap prices are converted into caplet implied volatilities using the algorithm 
described in \citet{Levin_2012}. The implied volatility is calculated using OIS 
discounting when inverting the \citet{Black76} formula. The caplet data we have 
at our disposal correspond to 3- and 6-month tenor structures. More precisely, 
in the EUR market caps written on the 3-month tenor are quoted only up to a 
maturity of 2 years, while 6-month tenor caps are quoted from maturity 3 years 
and onwards. Moreover, we have option prices only for the maturities 
corresponding to entire years and not for every tenor point. We have a grid of 
14 strikes ranging from $1\%$ to $10\%$ as quoted in Bloomberg. We calibrate to 
caplet data for maturities $1,2,\dots,10$ years and the OIS zero coupon bond
$B(\cdot,10.5)$ defines the terminal measure\footnote{We found that the model
performs slightly better in calibration using this numeraire, than when choosing
10 years as the terminal maturity.}. We fix in advance the values of the
parameters $(\tilde{u}^{x_1}), (\tilde{u}^{x_2})$, $(\tilde{v}^{x_1})$ and 
$(\tilde{v}^{x_2})$, as well as the parameters of the process $X^c$. The impact 
of $X^c$ is determined by the spread between $\tilde{u}^{x_1}_{\ell_1(i)}$ and 
$\tilde{v}^{x_1}_{\ell_1(i)-1}$, and $\tilde{u}^{x_2}_{\ell_2(i)}$ and 
$\tilde{v}^{x_2}_{\ell_2(i)-1}$ for the 3m and 6m tenor caplets respectively, 
and we will simplify by setting 
$\tilde{u}^{x_1}_1=\dots=\tilde{u}^{x_1}_{N^{x_1}-1}=u_c$ constant. The 
constant $u_c$, along with $\tilde{v}^{x_1}_1,\dots,\tilde{v}^{x_1}_{N^{x_1}}$ 
and $\tilde{v}^{x_2}_1,\dots,\tilde{v}^{x_2}_{N^{x_2}}$ are not identified by 
the initial term structures and have to be determined in some other manner 
(e.g. by calibration to swaptions or basis swaptions). They also cannot be 
chosen completely freely and one has to validate that the values of 
$u_k^{x_1},u_k^{x_2}$ and $v_k^{x_1},v_k^{x_2}$ stemming from these procedures 
satisfy the necessary inequalities, i.e. $v^x_{k-1}\ge u^x_{k-1}\ge u^x_k$. 
Having this in mind, we chose these values in a manner such that $X^c$ accounts 
for approximately 50\% of the total variance of LIBOR rates from maturities 4 
until 10, and about 10\% of the total variance for maturities 1 until 3. We have 
verified through experimentation that this ad-hoc choice of dependence structure 
does not have a qualitative impact on the results of the following sections. 
Alternatively, these parameters could be calibrated to derivatives such as 
swaptions,  basis swaptions or other derivatives partly determined by the 
dependence structure of the LIBOR rates. However, since interest rate 
derivative markets remain highly segmented and joint calibration of caplets and 
swaptions is a perennial challenge, see e.g. \cite{BrigoMercurio06} or 
\citet*{Ladkau_Schoenmakers_Zhang_2013}, we will leave this issue for future 
research.

\begin{figure}[ht!]
 \centering
  \includegraphics[width=12.5cm]{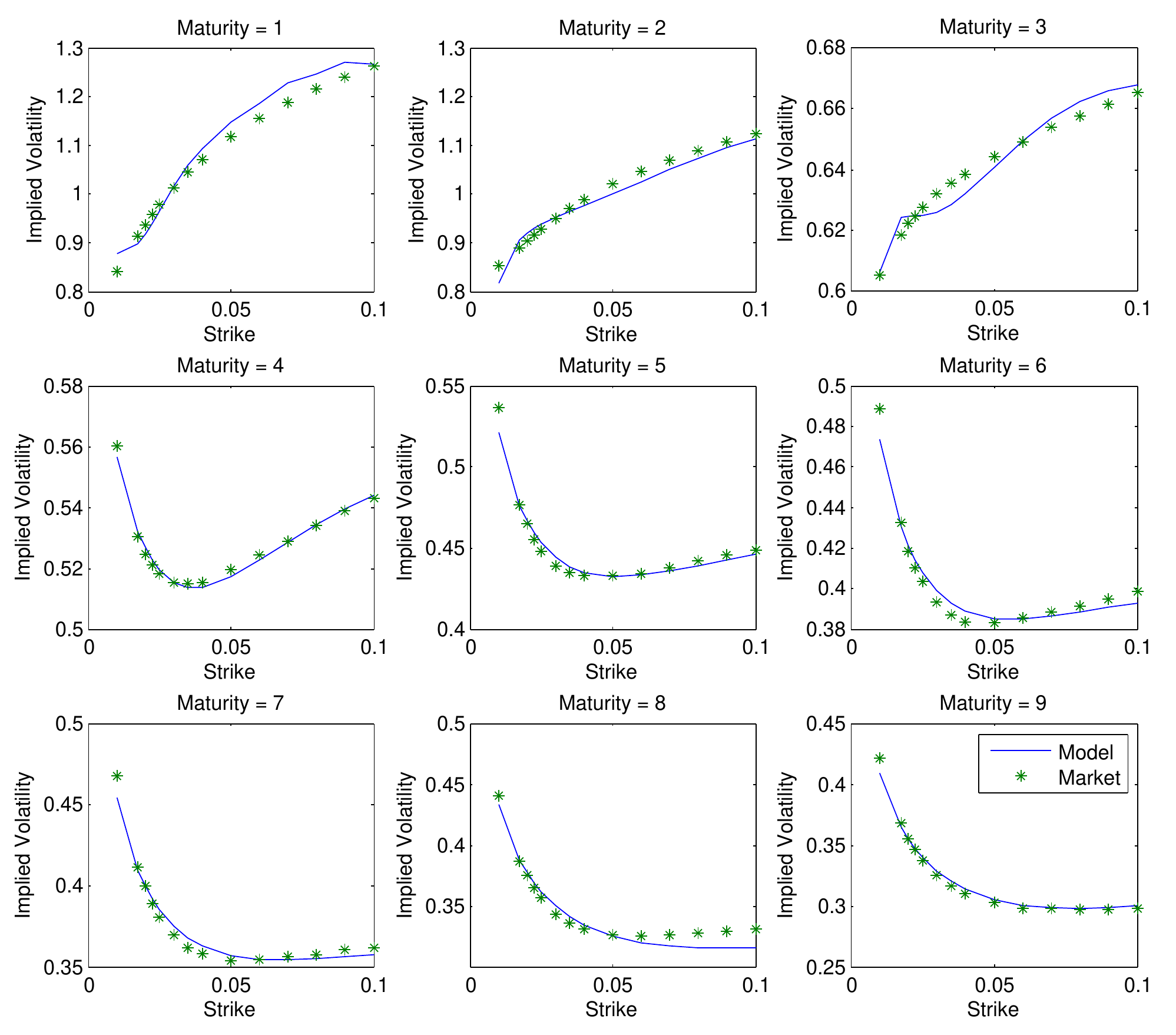}
 \vspace{-.5em}
\caption{Market and model implied volatility for caplets written on 3 (1--2
year maturity) and 6 (2--9 year maturity) month tenor EURIBOR.}
\label{figure:caplets3m}
\end{figure}

The model construction summarized in Figures 
\ref{u-matrix-x1}--\ref{v-matrix-x2} has the advantage that caplets can be 
calibrated sequentially one maturity at a time starting at the longest maturity 
and then moving backwards. In the calibration procedure we fit the parameters 
of each idiosyncratic process $X^{i}$ to caplet prices with maturity $T_i^x$ 
while simultaneously choosing $u_i^{x_1},u_i^{x_2}$ and $v_i^{x_1},v_i^{x_2}$ 
to match the corresponding values of the initial OIS and LIBOR rates. Caplets 
are priced using formula \eqref{MC-ALM-caplets} and the parameters are found 
using standard least-squares minimization between market and model implied 
volatility.  The results\footnote{All calibrated and chosen parameter values as 
well as the calibrated matrices $u^{x_j},v^{x_j}$ for $j=1,2$ are available 
from the authors upon request.} from fitting the caplets are shown in Figure 
\ref{figure:caplets3m}. We can observe that the model performs very well for 
different types of volatility smiles across the whole term structure, with only 
minor problems for extreme strikes in maturities 1-3. These problems are 
however mainly cosmetic in nature as these prices and more importantly the 
deltas for these contracts are very close to zero anyway, making any model 
error in this region economically insignificant.

\subsection{Swaption price approximation}
\label{sec:swaption-ntest}

The next two sections are devoted to numerically testing the validity of the
swaption and basis swaption price approximation formulas derived in Sections
\ref{section:Linear_boundary_approximation} and
\ref{section:Linear_boundary_approximation_BS}. We will run a Monte Carlo study
comparing the true price with the linear boundary approximation formula. The
model parameters used stem from the calibration to the market data described in
the previous section.

Let us denote the true and the approximate prices as follows:
\begin{align*}
\mathbb{S}^{+}_0 (K,\mathcal{T}^x_{pq})
 &= B(0, T_N) \, \E_N \left[  \left( \sum_{i=p+1}^{q}
     M_{T_p^x}^{v_{i-1}^x} - \sum_{i=p+1}^{q} K_x M_{T_p^x}^{u_{i}^x}
     \right) \indik_{\{f(X_{T_p^x}) \geq 0\}}\right], \\
\widetilde{\mathbb{S}}^{+}_0 (K,\mathcal{T}^x_{pq})
 &= B(0, T_N) \, \E_N \left[  \left( \sum_{i=p+1}^{q}
    M_{T_p^x}^{v_{i-1}^x} - \sum_{i=p+1}^{q} K_x M_{T_p^x}^{u_{i}^x}
    \right) \indik_{\{\widetilde{f}(X_{T_p^x}) \geq 0\}} \right],
\end{align*}
where $f$ and $\widetilde{f}$ were defined in \eqref{eq:f_swaption} and
\eqref{eq:ftilde-swaption} respectively. The Monte Carlo (MC)
estimator\footnote{We construct the Monte Carlo estimate  using 5 million paths
of $X$ with 10 discretization steps per year. In each discretization step the
continuous part is simulated using the algorithm in \citet[\S
3.4.1]{Glasserman03} while the jump part is handled using \citet[pp.
137--139]{Glasserman03} with jump size distribution changed from log-normal to
exponential.} of $\mathbb{S}^{+}_0(K,\mathcal{T}^x_{pq})$ is denoted by
$\hat{\mathbb{S}}^{+}_0(K,\mathcal{T}^x_{pq})$ and we will refer to it as the
`true price'. Instead of computing
$\widetilde{\mathbb{S}}^{+}_0(K,\mathcal{T}^x_{pq})$ using Fourier methods, we
will form another MC estimator
$\hat{\widetilde{\mathbb{S}}}^{+}_0(K,\mathcal{T}^x_{pq})$. This has the
advantage that, when the same realizations are used to calculate both MC
estimators, the difference $\hat{{\mathbb{S}}}^{+}_0(K,\mathcal{T}^x_{pq})-
\hat{\widetilde{\mathbb{S}}}^{+}_0(K,\mathcal{T}^x_{pq})$ will be an estimate of
the error induced by the linear boundary approximation which is minimally
affected by simulation bias.

Swaption prices vary considerably across strike and maturity, thus we will
express the difference between the true and the approximate price in terms of
implied volatility (using OIS discounting), which better demonstrates the
economic significance of any potential errors. We price swaptions on three
different underlying swaps. The results for the 3m underlying tenor are
exhibited in Figure \ref{figure:Swaptiontest}. The corresponding results for the
6m tenor swaptions have errors which are approximately one half the level in the
graphs shown here and have been omitted for brevity.

On the left hand side of Figure \ref{figure:Swaptiontest}, implied volatility
levels are plotted for the true and the approximate prices. The strikes are
chosen to range from 60\% to 200\% of the spot value of the underlying fair swap
rate, which is the normal range the products are quoted. The right hand side
shows the difference between the two implied volatilities in basis points (i.e.
multiplied by $10^4$). As was also documented in \citet{Schrager_Pelsser_2006},
the errors of the approximation usually increase with the number of payments in
the underlying swap. This is also the case here, however the level of the errors
is in all cases very low. In normal markets, bid-ask spreads typically range
from 10 to 300 bp (at the at-the-money level) thus even the highest errors are
too small to be of any economic significance. This is true even in the case of
the 2Y8Y swaption which contains 32 payments.

\begin{figure}
 \centering
\includegraphics[width=10.5cm]{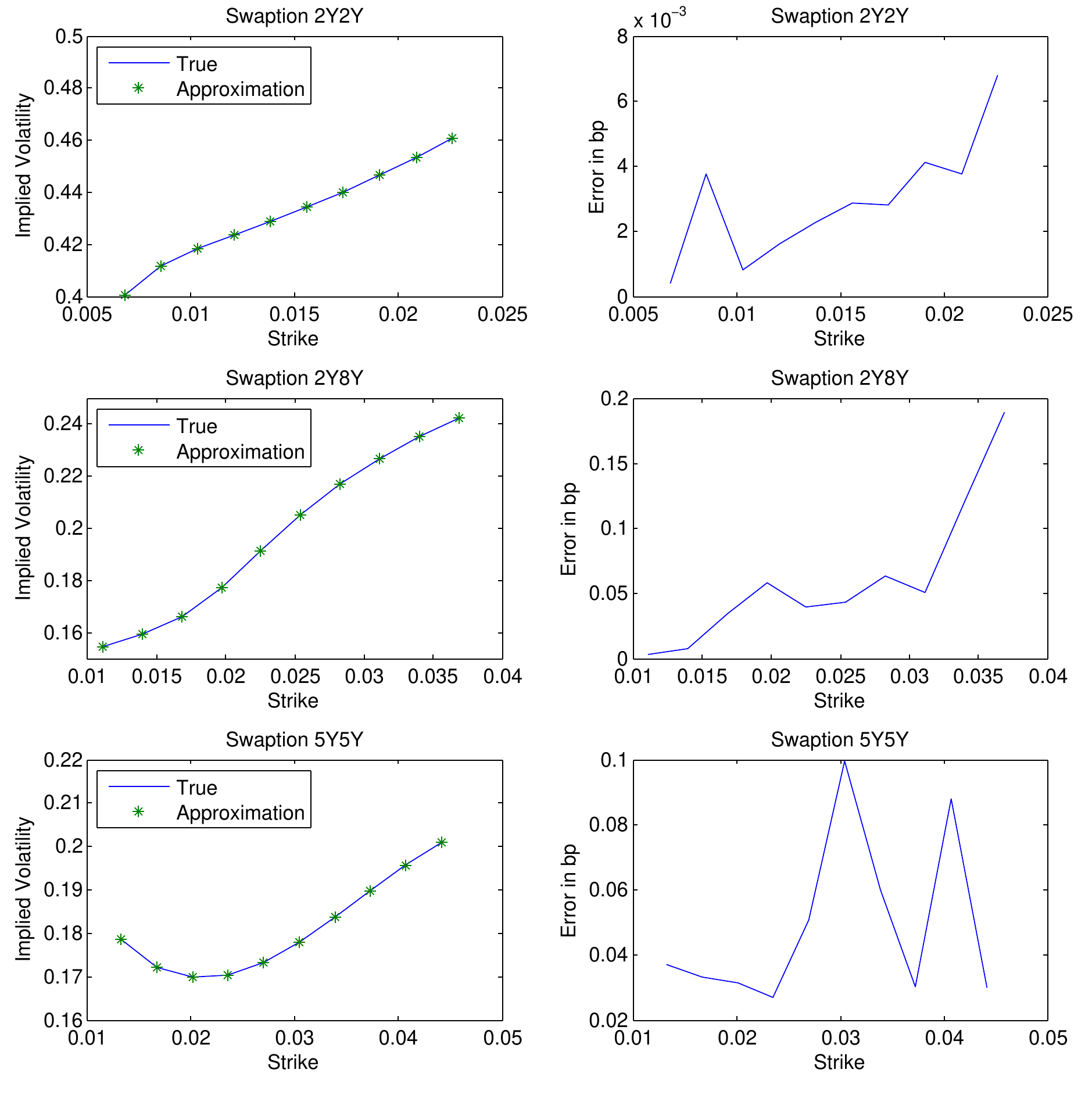}
 \vspace{-1.em}
 \caption{Implied volatility and absolute errors for 3m swaptions.}
 \label{figure:Swaptiontest}
\end{figure}

\subsection{Basis swaption price approximation}
\label{sec:basis-swaption-ntest}

In order to test the approximation formula for basis swaptions, we will follow
the same methodology as in the previous subsection. That is, we calculate MC
estimators for the following two expectations:
\begin{multline*}
\mathbb{BS}^{+}_0 (S,\mathcal{T}^{x_1}_{pq} ,\mathcal{T}^{x_2}_{pq})
 = B(0,T_N) \, \E_{N} \Bigg[ \Bigg( \sum_{i=p_2+1}^{q_2} \left(
    M_{T_{p_2}^{x_2}}^{v_{i-1}^{x_2}} - M_{T_{p_2}^{x_2}}^{u_{i}^{x_2}} \right)
 \\ - \sum_{i=p_1+1}^{q_1} \left( M_{T_{p_1}^{x_1}}^{v_{i-1}^{x_1}}
    - S_{x_1} M_{T_{p_1}^{x_1}}^{u_i^{x_1}} \right) \Bigg)
     \indik_{\{g(X_{T_{p_2}})\ge0\}} \Bigg],
\end{multline*}

\begin{multline*}
\widetilde{\mathbb{BS}}^{+}_0 (S,\mathcal{T}^{x_1}_{pq},\mathcal{T}^{x_2}_{pq})
 = B(0,T_N) \, \E_{N} \Bigg[ \Bigg( \sum_{i=p_2+1}^{q_2} \left(
    M_{T_{p_2}^{x_2}}^{v_{i-1}^{x_2}} - M_{T_{p_2}^{x_2}}^{u_{i}^{x_2}} \right)
 \\ - \sum_{i=p_1+1}^{q_1} \left( M_{T_{p_1}^{x_1}}^{v_{i-1}^{x_1}}
    - S_{x_1} M_{T_{p_1}^{x_1}}^{u_i^{x_1}} \right) \Bigg)
    \indik_{\{\widetilde{g}(X_{T_{p_2}})\ge0\}} \Bigg],
\end{multline*}
where $S_{x_1}=1-\delta_{x_1}S$, while $g$ and $\widetilde{g}$ were defined in
\eqref{eq:f-basis-swaption}  and \eqref{eq:ftilde-basis-swaption}. Using the
same realizations, we plot the level, absolute and relative differences between
both prices measured in basis points as a function of the spread for three
different underlying basis swaps. The spreads are chosen to range from 50\% to
200\% of the at-the-money level, i.e. the spread  that sets the underlying
basis swap to a value of zero, see again \eqref{eq:basis-swap-spread},
\begin{align*}
S_{ATM} := S_0(\mathcal{T}^{x_1}_{pq},\mathcal{T}^{x_2}_{pq}).
\end{align*}

\begin{figure}
 \centering
  \includegraphics[width=12.75cm]{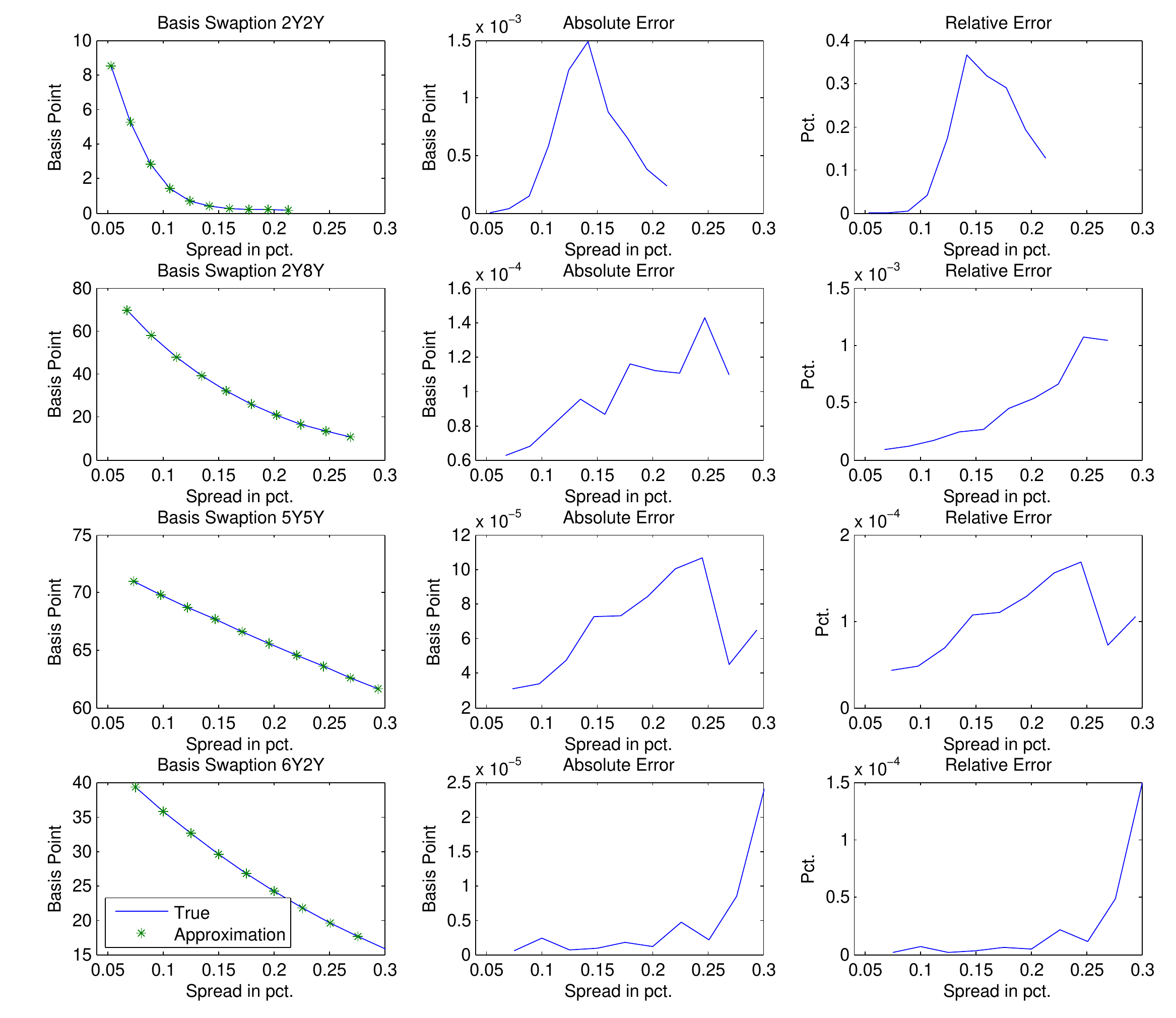}
 \vspace{-1.em}
 \caption{Prices in basis points, absolute and relative errors for 3m--6m basis
swaptions.}
 \label{figure:BasisSwaptiontest}
\end{figure}

The numerical results can be seen in Figure \ref{figure:BasisSwaptiontest}. We
have chosen these maturities to be representative of two general patterns. The
first is that the errors tend to increase with the length of the basis swap,
which is exemplified by comparing errors for the 2Y8Y, 5Y5Y and 6Y2Y contracts.
The second pattern relates to when the majority of payments in the contract are
paid out. We can notice that the errors for the 2Y2Y contract are much larger
than for the corresponding 6Y2Y, even though both contain the same number of
payments. Furthermore, we can also see that the 2Y2Y contract has larger errors
than the 2Y8Y even though both have the same maturity and the latter has more
payments. This anomalous result can be explained by the convexity of the term
structure of interest rates. In Figure \ref{fig-InitialTS} we can notice that
the majority of the payments of the 2Y2Y contract fall in a particularly curved
region of the term structure. This will result in an exercise boundary which is
also more nonlinear, thus leading to the relative deterioration of the linear
boundary approximation. However, it must be emphasized that the errors are still
at a level easily deemed economically insignificant, with a maximum relative
error of 0.4\% in a spread region where the price levels are particularly low.

\begin{remark}
The approximative formulas from Propositions \ref{prop:swaptionapprox} and 
\ref{eq:basis-swaptionapprox} can be used for calibration to swaptions and 
basis swaptions. Error bounds for these approximations are not available in 
closed form and thus accuracy in the entire parameter space cannot be 
guaranteed. Any desired accuracy in pricing is achievable using Monte Carlo 
methods, which means that the accuracy of the approximate formula can always be 
validated numerically. However, when performing a full calibration, which may 
require several hundreds of iterations in order to achieve convergence in a 
numerical optimization procedure, Monte Carlo methods are slow in comparison to
the analytical approximation. Thus, in order to calibrate to swaption prices, 
one would prefer the approximation to the Monte Carlo method. Then one can 
perform a Monte Carlo simulation (just one) to validate that the approximation 
is also correct for the parameters found by the numerical optimization. Let us 
also mention that in a typical calibration procedure an acceptable error is 
around 2\%, well above the error of the approximative formulas, else the risk 
of overifitting the data is present.
\end{remark}

\subsection{A simple example}

The purpose of this section is to provide a simple example to help the reader's 
intuition regarding the numerical implementation of the model. We present a 
fully constructed and more manageable numerical toy example of fitting the 
model parameters $u^x_k$ and $v^x_k$ to the initial term structures, which can 
be reproduced by the reader himself/herself as opposed to the full calibration 
example in Sections \ref{sec:swaption-ntest} and 
\ref{sec:basis-swaption-ntest}. Moreover, we show in this simple setting how  
Approximations $(\mathbb{S})$ and $(\mathbb{BS})$ are computed.

We start by choosing a simple two factor model $X=(X^1,X^2)$ with
\begin{align}
\ud X^{i}_t
 & = -\lambda_{i} (X^{i}_t - \theta_{i}) \dt
   + 2 \eta_{i} \sqrt{X^{i}_t} \ud W^{i}_t
   + \ud Z^i_t,\label{eq:cir_a} \quad i=1,2,
\end{align}
where we set
\begin{center}
\begin{tabular}{c|cccccc}
$i$&$X_0^i$ & $\lambda_i$ & $\theta_i$ &$\eta_i$ &$\nu_i$& $\mu_i$ \\
\hline
1 &  0.5000  &  0.1000  &  1.5300  &  0.2660  &  0       &  0\\
2 &  9.4531  &  0.0407  &  0.0591  &  0.4640  &  0.0074  &  0.2499\\
\end{tabular}\\
\end{center}
The initial term structures are constructed from a Nelson--Siegel
parametrization of the zero coupon rate $R(T)$
\begin{align}\label{eq:zcr}
R(T) = \beta_0 + \beta_1\frac{1-\e^{-\gamma T}}{\gamma T}
     + \beta_2\left(\frac{ 1-\e^{-\gamma T}}{\gamma T} - \e^{-\gamma T}\right).
\end{align}
We limit ourselves to two tenors, $x_1$ corresponding to 3 months and $x_2$
corresponding to 6 months. We construct the initial curves from the following
parameters
\begin{center}
\begin{tabular}{c|cccc}
Curve & $\beta_0$ & $\beta_1$ & $\beta_2$ & $\gamma$ \\
\hline
OIS & 0.0003 &   0.01 &   0.07   &    0.06 \\
3m  & 0.0032 &   0.01 &   0.07   &    0.06 \\
6m  & 0.0050 &   0.01 &   0.07   &    0.06 \\
\end{tabular}
\end{center}
In particular, we use \eqref{eq:zcr} to construct the initial 3- and 
6-month LIBOR curves via the expression
\[
L_{k}^{x}(0) = \frac{1}{\delta_{x}} 
	\left( \frac{\exp\left( -R^{x}(T_{k-1})T_{k-1}\right)}
	{\exp\left( -R^{x}(T_{k})T_{k}\right)}-1\right),
\]
for $x=\text{3m and 6m}$, and a third one to construct an initial OIS curve 
consistent with the system
\[
B(0,T_k)=\exp\left(  -R^{OIS}(T_k)T_k\right).
\]
Moreover, we construct the matrices $u^{x_j}$ and $v^{x_j}$ in the following 
simple manner
\begin{align}
u_k^{x_1} &= (u_c \quad \bar{u}^{x_1}_k), \quad k=1,\dots, N^{x_1}\\
u_k^{x_2} &= u_{k\delta_{x_2}/\delta_{x_1}}^{x_1}, \,\,\, \quad k=1,\dots, N^{x_2}\\
v_k^{x_1} &= (\tilde{v}^{x_1}_c \quad  \bar{v}^{x_1}_k),
  \quad k=0,\dots,N^{x_1}-1\\
v_k^{x_2} &= (\tilde{v}^{x_2}_c \quad  \bar{v}^{x_2}_k),
  \quad k=0,\dots,N^{x_2}-1
\end{align}
and $u_{N^{x_1}}^{x_1}=u_{N^{x_2}}^{x_2}=0$, where
$\bar{u}^{x_j}_k,\bar{v}^{x_j}_k\in\Rp$ for $j=1,2$. The bond $B(\cdot,4.5)$
defines the terminal measure, thus $N^{x_1}=18$ and $N^{x_2}=9$.
We set $u_c=0.0065$, $\tilde{v}^{x_1}_c=0.007$ and $\tilde{v}^{x_2}_c=0.0075$.
The remaining values can then be determined uniquely using equations
\eqref{eq:initial-OIS-fit} and \eqref{eq:initial-LIBOR}, i.e. by fitting the
initial term structures. We get that

\begin{center}
\begin{tabular}{l|cccc}
$k$ &$u_k^{x_1}$& $v_k^{x_1}$&$u_k^{x_2}$& $v_k^{x_2}$\\
\hline
0& -&    0.008966 & - &    0.009035 \\
1&0.008638&0.008641&0.008286&0.008358\\
2&0.008286&0.008289&0.007505&0.007577\\
3&0.007908&0.007911&0.006625&0.006697\\
4&0.007505&0.007507&0.005652&0.005725\\
5&0.007077&0.007079&0.004591&0.004664\\
6&0.006625&0.006627&0.003447&0.003520\\
7&0.006150&0.006152&0.002225&0.002298\\
8&0.005652&0.005654&0.000929&0.001003\\
9&0.005132&0.005135 &0 &-\\
10&0.004591&0.004594&&\\
11&0.004029&0.004032&&\\
12&0.003447&0.003450&&\\
13&0.002847&0.002848&&\\
14&0.002225&0.002228&&\\
15&0.001586&0.001589&&\\
16&0.000929&0.000932&&\\
17&0.000254&0.000257&&\\
18&0 &-&&
\end{tabular}
\end{center}
We can observe that all sequences $u^{x_j},v^{x_j}$ for $j=1,2$ are decreasing,
which corresponds to the `normal' market situation; see again Remark
\ref{rem:relation-u-v}.

\subsubsection{Swaption approximation}

Let us consider a 2Y2Y swaption on 3 month LIBOR rates, i.e. a swaption in the
notation of Section \ref{section:Linear_boundary_approximation} with $p=8$ and
$q=16$. We run a Monte Carlo study equivalent to the one in Section
\ref{sec:swaption-ntest} and the results are reported for four different
strikes:

\begin{center}
\begin{tabular}{ccccccc}
Strike ($K$) & $\hat{\mathbb{S}}^+_0$  & Error
& IV (\%) & IV Error &$\mathscr{A}$& $\mathscr{B}$\\
\hline
0.013238&176.17&2.06e-08&30.38&2.326e-10&-5.5403& (1.1596 1)\\
0.023535&52.214&4.31e-08&26.78&1.818e-10&-10.2982& (1.1605 1)\\
0.033831&9.7898&4.09e-08&24.82&2.971e-10&-15.0481& (1.1615 1)\\
0.044128&1.4016&7.90e-09&23.72&2.016e-10&-19.7899& (1.1625 1)\\
\end{tabular}
\end{center}
where
\begin{itemize}
\item $\hat{\mathbb{S}}^+_0:=\hat{\mathbb{S}}^+_0(K,\mathcal{T}^{x_1}_{8,16})$
and IV denote the MC estimator of the price (in basis points) and the implied
volatility (with OIS discounting) using the true exercise boundary defined in
\eqref{eq:f_swaption}.
\item Error
 $:=|\hat{\mathbb{S}}^+_0(K,\mathcal{T}^{x_1}_{8,16}) -
 \hat{\widetilde{\mathbb{S}}}^+_0(K,\mathcal{T}^{x_1}_{8,16})|$, where
 $\hat{\widetilde{\mathbb{S}}}^+_0(K,\mathcal{T}^{x_1}_{8,16})$ denotes the MC
 estimator of the price (in basis points) using the approximate exercise 
 boundary defined in \eqref{eq:ftilde-swaption}.
\item IV Error = $|\text{IV}-\widetilde{\text{IV}}|$, where
 $\widetilde{\text{IV}}$ denotes the implied volatility (with OIS discounting)
 calculated from $\hat{\widetilde{\mathbb{S}}}^+_0(K,\mathcal{T}^{x_1}_{8,16})$.
\item $\mathscr{A}\in\mathbb{R}$ and $\mathscr{B}\in\mathbb{R}^2$ determine the 
 linear approximation to the exercise boundary defined by the function $f$ in 
 \eqref{eq:f_swaption}:
 $$ f(y)\approx \mathscr{A}+\scal{\mathscr{B}}{y}. $$
 Applying the procedure in \citet[pp.~432--434]{SingletonUmantsev02}, we first 
 calculate the upper and lower quantiles for $X_{2}^{(1)}$ using Gaussian 
 approximations for speed. We solve for $x_l$ and $x_u$ in
 \begin{align*}
 f\left( \left[q_{X_{2}^{(1)}}(0.05),{x_l}\right]\right)=0
 \quad\text{ and }\quad 
 f\left( \left[q_{X_{2}^{(1)}}(0.95),{x_u}\right]\right)=0.
 \end{align*}
 Then, $\mathscr{A}$ and $\mathscr{B}$ are computed by fitting the straight line 
 $$\mathscr{A}+\scal{\mathscr{B}}{y}=0$$ through the two points 
 ${y}_l=\left[q_{X_{2}^{(1)}}(0.05),{x}_l\right]$ and 
 ${y}_u=\left[q_{X_{2}^{(1)}}(0.95),{x}_u\right]$. 
\end{itemize}

\subsubsection{Basis swaption approximation}

Let us also consider a 2Y2Y basis swaption. This is an option to enter into a
basis swap paying 3 month LIBOR plus spread $S$ and receiving 6 month
LIBOR, which starts at year 2 and ends at year 4. Once again we conduct a Monte
Carlo study equivalent to Section \ref{sec:basis-swaption-ntest}, and get that

\begin{center}
\begin{tabular}{ccccc}
Spread ($S$) & $\hat{\mathbb{BS}}^+_0$  & Price Error & $\mathscr{C}$&
$\mathscr{D}$\\
\hline
0.0010945&13.778&2.103e-06&-7.7191& (1 5.7514)\\
0.0019458&3.7972&4.784e-05&-14.0029& (1 5.7694)\\
0.0027971&0.64406&9.364e-05&-20.2158& (1 5.7868)\\
0.0036484&0.080951&5.852e-05&-26.3597& (1 5.8037)\\
\end{tabular}
\end{center}
where
\begin{itemize}
\item
 $\hat{\mathbb{BS}}^+_0:=\hat{\mathbb{BS}}^+_0(S,\mathcal{T}^{x_1}_{8,16},
 \mathcal{T}^{x_2}_{4,8})$
 denotes the MC estimator of the price (in basis points) using the true 
 exercise boundary defined in \eqref{eq:f-basis-swaption}.
\item Price Error :=
 $|\hat{\mathbb{BS}}^+_0(S,\mathcal{T}^{x_1}_{8,16},\mathcal{T}^{x_2}_{4,8}) -
 \hat{\widetilde{\mathbb{BS}}}^+_0(S,\mathcal{T}^{x_1}_{8,16},
 \mathcal{T}^{x_2}_{4,8})|$, where similarly
 $\hat{\widetilde{\mathbb{BS}}}^+_0(K,\mathcal{T}^{x_1}_{8,16},
 \mathcal{T}^{x_2}_{4,8})$ denotes the MC estimator of the price (in basis
 points) using the approximate exercise boundary defined in
 \eqref{eq:ftilde-basis-swaption}.
\item $\mathscr{C}\in\mathbb{R}$ and $\mathscr{D}\in\mathbb{R}^2$ determine the 
 linear approximation to the exercise boundary defined by the function $g$ in 
 \eqref{eq:f-basis-swaption}:
 $$ g(y)\approx \mathscr{C}+\scal{\mathscr{D}}{y}.$$
 Applying again the same procedure, we first calculate the upper and lower 
 quantiles for $X_{2}^{(1)}$ and solve for $\tilde{x}_l$ and $\tilde{x}_u$ in
 \begin{align*}
 g\left( \left[q_{X_{2}^{(1)}}(0.05),{\tilde{x}_l}\right]\right)=0
 \quad \text{and } \quad
 g\left( \left[q_{X_{2}^{(1)}}(0.95),{\tilde{x}_u}\right]\right)=0.
 \end{align*}
 Then, $\mathscr{C}$ and $\mathscr{D}$ are computed by fitting the straight 
 line $$\mathscr{C}+\scal{\mathscr{D}}{\tilde{y}}=0$$ through the two points 
 $\tilde{y}_l=\left[q_{X_{2}^{(1)}}(0.05),\tilde{x}_l\right]$ and 
 $\tilde{y}_u=\left[q_{X_{2}^{(1)}}(0.95),\tilde{x}_u\right]$. 
\end{itemize}

These simple examples highlight once again the accuracy of the linear boundary
approximations developed in Sections
\ref{section:Linear_boundary_approximation} and
\ref{section:Linear_boundary_approximation_BS}.
\section{Concluding remarks and future research}
\label{epilogue}

Finally, let us conclude with some remarks that further highlight the merits of 
the affine LIBOR models and some topics for future research. Consider the 
following exotic product: a loan with respect to a 1\$ notional over a monthly 
tenor structure $\mathcal{T}=\{0=T_0 < T_1 < \cdots < T_N\}$ with optional 
interest payments due to the following scheme: At time $t=0,$ the product 
holder may contract to settle the first interest payment either after one, 
three, or six months (as long as the maturity $T_N$ is not exceeded). Next, at 
the first settlement date, the holder may choose again either the one, three, 
or six month LIBOR to be settled one, three or six months later (while not 
exceeding $T_N$). She/He continues until the last payment is settled at $T_N$ 
and the notional is payed back. Clearly, the value of this product at $t=0$ in 
the single curve (pre-crisis) LIBOR world would be simply zero. However, in the 
multiple curve world the pricing of this product is highly non-trivial. In 
particular, such an evaluation would involve the dynamics of any LIBOR rate 
over the periods $\left[T_i,T_j\right],$ $0\leq i\leq j\leq N$ where $T_j-T_i$ 
equals one, three or six months. As a matter of fact, the affine LIBOR model 
with multiple curves presented in this paper is tailor made for this problem as 
it produces `internally consistent' LIBOR and OIS rates over any sub-tenor 
structure. This means that for all sub-tenor structures the rates have the same 
type of dynamics and the driving process remains affine under any forward 
measure. To the best of our knowledge, this is the only multiple curve LIBOR 
model in the literature that naturally produces this consistency across all 
different tenors. The full details of the pricing of this product are, however, 
beyond the scope of this article.

The property of `internal consistency' is beneficial already in the single 
curve LIBOR models. More precisely, the dynamics of LIBOR rates in the 
`classical'  LIBOR market models are specified by setting a `natural' 
volatility structure of a LIBOR system based on a particular tenor structure. 
As a consequence, the volatilities of the LIBOR rates spanning e.g. a double 
period length are immediately hard to determine, as they contain the LIBOR 
rates of the shorter period. On the contrary, in the single  curve affine LIBOR 
models the dynamics of the LIBOR are specified via ratios of martingales that 
are connected  with different underlying tenors, thus one has simultaneously 
specified the dynamics of all possible LIBOR rates in an internally consistent 
way. 

However, the other side of this coin is that a proper choice of the driving 
affine process, and the effective calibration of the affine LIBOR models 
entailed, are far from trivial. In fact these issues require the development of 
new approaches and thus provide a new strand of research on its own. Therefore 
the calibration experiments in this paper are to be considered preliminary and 
merely to demonstrate the potential flexibility of the affine LIBOR model with 
multiple curves.
\appendix
\section{Terminal correlations}
\label{app-corr}

This appendix is devoted to the computation of terminal correlations. The 
expression `terminal correlation' is used in the same sense as in 
\citet[\S6.6]{BrigoMercurio06}, i.e. it summarizes the degree of dependence 
between two LIBOR rates at a fixed, terminal time point. Here the driving 
process is a general affine process and not just an affine 
diffusion as in Section \ref{inst_corr}.

We start by introducing some shorthand notation
\begin{align*}
\Phi_k^x(t) &:= \phi_{T_N-t}(v_{k-1}^x) - \phi_{T_N-t}(u_{k}^x), \\
\Psi_k^x(t) &:= \psi_{T_N-t}(v_{k-1}^x) - \psi_{T_N-t}(u_{k}^x), \\
\Phi_{k_1,k_2}^{x_1,x_2}(t) 
&:= \Phi_{k_1}^{x_1}(t)+\Phi_{k_2}^{x_2}(t), \\
\Psi_{k_1,k_2}^{x_1,x_2}(t) 
&:= \Psi_{k_1}^{x_1}(t)+\Psi_{k_2}^{x_2}(t),
\end{align*}
where $k\in\mathcal{K}^{x}$ and $k_l\in\mathcal{K}^{x_l}$ for $l=1,2$. Then, we 
have from \eqref{eq:LIBOR-rate} that
\begin{align}\label{eq:app-LM}
1 + \delta_{x_l} L_{k_l}^{x_l} (T_i)
&= M_{T_i}^{v_{k_l-1}^{x_l}}/M_{T_i}^{u_{k_l}^{x_l}}
 = \exp\left( \Phi_{k_l}^{x_l}({T_i}) 
       + \bscal{\Psi_{k_l}^{x_l}({T_i})}{X_{T_i}} \right), 
\end{align}
for $l=1,2$ and $T_i\le T_{k_1-1}^{x_1}\vee T_{k_2-1}^{x_2}$. We also denote 
the moment generating function of $X_{T_i}$ under the measure $\P_N$ as follows
\begin{align}
\Theta_{T_i}(z) 
= \E_N\big[\e^{\scal{z}{X_{T_i}}}\big]
= \exp \big(\phi_{T_i}(z) + \scal{\psi_{T_i}(z)}{X_0} \big).
\end{align}
Therefore we get that
\begin{align}
\E_N \left[ M_{T_i}^{v_{k-1}^{x}} / M_{T_i}^{u_{k}^{x}} \right] 
&= \e^{ \Phi_{k}^{x}({T_i}) } 
   \Theta_{T_i}\big(\Psi_{k}^{x}({T_i})\big), \\
\E_N \left[ \left(M_{T_i}^{v_{k-1}^x} / M_{T_i}^{u_k^x}\right)^2\right] 
&= \e^{ 2\Phi_{k}^{x}({T_i}) } 
   \Theta_{T_i}\big(2\Psi_{k}^{x}({T_i})\big), \\
\E_N \left[ M_{T_i}^{v_{k_1-1}^{x_1}} / M_{T_i}^{u_{k_1}^{x_1}} \cdot
           M_{T_i}^{v_{k_2-1}^{x_2}}/ M_{T_i}^{u_{k_2}^{x_2}}\right]
&= \e^{ \Phi_{k_1,k_2}^{x_1,x_2}({T_i})}
   \Theta_{T_i}\big(\Psi_{k_1,k_2}^{x_1,x_2}({T_i})\big). 
\end{align}

The formula for terminal correlations follows after inserting the 
expressions above in the definition of correlation and doing some tedious, but 
straightforward, computations 
\begin{multline*}
\text{Corr}_{T_i}\big[L_{k_1}^{x_1},L_{k_2}^{x_2}\big] 
\stackrel{\eqref{eq:app-LM}}{=}
\text{Corr}\left[ M_{T_i}^{v_{k_1-1}^{x_1}} / M_{T_i}^{u_{k_1}^{x_1}},
              M_{T_i}^{v_{k_2-1}^{x_2}} / M_{T_i}^{u_{k_2}^{x_2}} \right] 
\\= \frac{ \Theta_{T_i}\big(\Psi_{k_1,k_2}^{x_1,x_2}({T_i})\big) 
          - \Theta_{T_i}\big(\Psi_{k_1}^{x_1}({T_i})\big)
           \Theta_{T_i}\big(\Psi_{k_2}^{x_2}({T_i})\big)}
     {\sqrt{\Theta_{T_i}\big(2\Psi_{k_1}^{x_1}({T_i})\big)
        \!-\! \Theta_{T_i}\big(\Psi_{k_1}^{x_1}({T_i})\big)^2}
      \!\sqrt{\Theta_{T_i}\big(2\Psi_{k_2}^{x_2}({T_i})\big)
        \!-\! \Theta_{T_i}\big(\Psi_{k_2}^{x_2}({T_i})\big)^2}}.
\end{multline*}

\bibliographystyle{plainnat}
\bibliography{references}

\end{document}